\newcommand{\hp}{with high probability\xspace}
\newcommand{\f}[1]{\boldsymbol{\mathrm{#1}}} 
 \newcommand{\rr}{\mathrm} 
\renewcommand{\cal}{\mathcal}
\newcommand{\ol}[1]{\overline{#1} \!\,} 
\newcommand{\wt}{\widetilde}
\newcommand{\me}{\mathrm{e}}
\newcommand{\ii}{\mathrm{i}}
\newcommand{\dd}{\mathrm{d}}
\newcommand{\col}{\mathrel{\mathop:}}
\newcommand{\st}{\,\col\,}
\newcommand{\deq}{\mathrel{\mathop:}=}
\renewcommand{\leq}{\leqslant}
\renewcommand{\geq}{\geqslant}
\renewcommand{\le}{\leqslant}
\renewcommand{\ge}{\geqslant}
\newcommand{\for}{\qquad \text{for} \quad}
\newcommand{\ind}[1]{{\bf 1} (#1)}
\newcommand{\indb}[1]{{\bf 1} \pb{#1}}
\newcommand{\indB}[1]{{\bf 1} \pB{#1}}
\renewcommand{\epsilon}{\varepsilon}
\newcommand{\al}{\alpha}
\renewcommand{\P}{\mathbb{P}}
\newcommand{\E}{\mathbb{E}}
\newcommand{\R}{\mathbb{R}}
\newcommand{\N}{\mathbb{N}}
\newcommand{\pb}[1]{\bigl({#1}\bigr)}
\newcommand{\pB}[1]{\Bigl({#1}\Bigr)}
\newcommand{\pbb}[1]{\biggl({#1}\biggr)}
\newcommand{\pBB}[1]{\Biggl({#1}\Biggr)}
\newcommand{\qb}[1]{\bigl[{#1}\bigr]}
\newcommand{\qB}[1]{\Bigl[{#1}\Bigr]}
\newcommand{\qbb}[1]{\biggl[{#1}\biggr]}
\newcommand{\qBB}[1]{\Biggl[{#1}\Biggr]}
\newcommand{\h}[1]{\{{#1}\}}
\newcommand{\hB}[1]{\Bigl\{{#1}\Bigr\}}
\newcommand{\hbb}[1]{\biggl\{{#1}\biggr\}}
\newcommand{\hBB}[1]{\Biggl\{{#1}\Biggr\}}
\newcommand{\abs}[1]{\lvert #1 \rvert}
\newcommand{\absb}[1]{\bigl\lvert #1 \bigr\rvert}
\newcommand{\absB}[1]{\Bigl\lvert #1 \Bigr\rvert}
\newcommand{\absbb}[1]{\biggl\lvert #1 \biggr\rvert}
\newcommand{\absBB}[1]{\Biggl\lvert #1 \Biggr\rvert}
\newcommand{\norm}[1]{\lVert #1 \rVert}
\DeclareMathOperator{\tr}{Tr}
\DeclareMathOperator{\im}{Im}
\newcommand{\ttau}{\vartheta}
\newcommand{\beqa}{\begin{eqnarray}}
\newcommand{\eeqa}{\end{eqnarray}}
\newcommand{\e}{\varepsilon}
\newcommand{\eps}{\varepsilon}
\newcommand{\rd}{{\rm d}}
\newcommand{\bv}{{\bf{v}}}
\newcommand{\bw}{{\bf{w}}}
\newcommand{\be}{\begin{equation}}
\newcommand{\ee}{\end{equation}}
\newcommand{\la}{\lambda}
\newcommand{\cN}{{\mathcal N}}
\newcommand{\bS}{ {\bf  S}}
\newtheorem{theorem}{Theorem}
\newtheorem{lemma}[theorem]{Lemma}
\newtheorem{proposition}[theorem]{Proposition}
\theoremstyle{remark}
\newtheorem{remark}[theorem]{Remark}
\newtheorem{definition}[theorem]{Definition}
\numberwithin{equation}{section}
\numberwithin{theorem}{section}
\title{Eigenvector Distribution of Wigner Matrices}
\author{
 Antti Knowles${}^1$\thanks{Partially supported by NSF grant DMS-0757425} \; and \; Jun Yin${}^2$\thanks{Partially 
supported by NSF grant DMS-1001655} \\\\\\
Department of Mathematics, Harvard University\\
Cambridge MA 02138, USA \\\\ knowles@math.harvard.edu${}^1$ \\ jyin@math.harvard.edu${}^2$ \\ \\}
\begin{document}
\date{November 8, 2011}

\maketitle

\begin{abstract}
We consider $N\times N$ Hermitian or symmetric random matrices
with independent entries. The distribution of the $(i,j)$-th matrix element is given by
a probability measure $\nu_{ij}$ whose first two moments coincide with those of the corresponding Gaussian ensemble.
We prove that the joint probability distribution of the components of eigenvectors associated with eigenvalues close to 
the spectral edge agrees with that of the corresponding Gaussian ensemble. For eigenvectors associated with bulk 
eigenvalues, the same conclusion holds provided the first four moments of the distribution $\nu_{ij}$ coincide with 
those of the corresponding Gaussian ensemble. More generally, we prove that the joint eigenvector-eigenvalue 
distributions near the spectral edge of two generalized Wigner ensembles agree, provided that the first two moments of 
the entries match and that one of the ensembles satisfies a level repulsion estimate. If in addition the first four 
moments match then this result holds also in the bulk.
\end{abstract}

\vspace{1.5cm}

{\bf AMS Subject Classification (2010):} 15B52, 82B44

\vspace{0.7cm}

{\it Keywords:} random matrix, universality, eigenvector distribution.

\newpage

\section{Introduction}

The universality of random matrices  can be roughly divided into the bulk universality in the interior of the spectrum 
and the edge universality near the spectral edge.   Over the past two decades,  spectacular  progress on  bulk and  edge 
universality has been made for invariant ensembles, see e.g.\ \cite{BI, DKMVZ1, DKMVZ2, PS} and \cite{AGZ, De1, De2} for 
a review.   For non-invariant ensembles
with i.i.d.\ matrix elements ({\it Standard Wigner ensembles}), edge universality can be  proved via the moment method 
and its various  generalizations; see e.g.\  \cite{SS, Sosh, So1}. In order to establish bulk universality, a new 
approach was developed in a series of papers \cite{ESY1, ESY2, ESY3,  ESY4,  ESYY, EYY, EYY2, EYYrigi} based on three 
basic ingredients:  (1) A local semicircle law -- a precise estimate of the local eigenvalue density  down to energy 
scales containing   around $ N^\e$ eigenvalues.  (2) The eigenvalue distribution of Gaussian divisible ensembles via  an 
estimate on the rate of decay to local equilibrium of the Dyson Brownian motion \cite{Dy}. (3) A density argument which 
shows that for any probability distribution there exists a Gaussian divisible distribution with identical eigenvalue 
statistics down to scales $1/N$. In \cite{EYYrigi}, edge universality is established as a corollary of this approach. It 
asserts that, near the spectral edge, the eigenvalue distributions of two generalized Wigner ensembles are the same 
provided the first two moments of the two ensembles match.

Another approach to both bulk and edge universality was developed in \cite{TV, TV2, J1}. Using this approach, the 
authors show that the eigenvalue distributions of two standard Wigner ensembles are the same in the bulk, provided that 
the first four moments match. They also prove a similar result at the edge, assuming that the first two moments match 
and the third moments vanish.

In this paper, partly based on the approach of \cite{EYYrigi}, we extend edge universality to eigenvectors associated 
with eigenvalues near the spectral edge, assuming the matching of the first two moments of the matrix entries. We prove 
that, under the same two-moment condition as in \cite{EYYrigi}, the edge eigenvectors of Hermitian and symmetric Wigner 
matrices have the same joint distribution as those of the corresponding Gaussian ensembles. The joint distribution of 
the eigenvectors of Gaussian ensembles is well known and can be easily computed.  More generally, we prove that near the 
spectral edge the joint eigenvector-eigenvalue distributions of two generalized Wigner matrix ensembles coincide 
provided that the first two moments of the ensembles match and one of the ensembles satisfies a level repulsion 
condition.

We also prove similar results in the bulk, under the stronger assumption that the first four moments of the two 
ensembles match. In particular, we extend the result of \cite{TV} to cover the universality of bulk eigenvectors.

\subsection{Setup}
We now introduce the basic setup and notations. Let $H^{\f \nu} \equiv H =(h_{ij})_{i,j=1}^N$  be an $N\times N$  
Hermitian or symmetric matrix whose upper-triangular matrix elements $h_{ij}=\bar{h}_{ji}$, $ i \le j$, are independent 
random variables with law $\nu_{ij}$ having mean zero and variance $\sigma_{ij}^2$:
\be
  \E \, h_{ij} =0, \qquad \sigma_{ij}^2:= \E |h_{ij}|^2.
\label{aver}
\ee
The law $\nu_{ij}$ and its variance $\sigma_{ij}^2$ may depend on $N$, but we omit this fact in the notation. We denote by $B \deq (\sigma^2_{ij})_{i,j=1}^N$ the matrix of the variances. We shall always make the following three assumptions on $H$.

\begin{description}
\item[(A)] For any fixed $j$ we have
\be
   \sum_{i=1}^N \sigma^2_{ij} = 1 \, .
\label{sum}
\ee
Thus $B$ is symmetric and doubly stochastic and, in particular, satisfies
$-1\leq B\leq 1$.

\item[(B)]
There exists constants $\delta_- > 0$ and $\delta_+ > 0$, independent of $N$, such that $1$ is a simple eigenvalue of $B$ and
\begin{equation*}
\rr{spec}(B) \;\subset\; [-1 + \delta_-, 1 - \delta_+] \cup \{1\}\,.
\end{equation*}

\item[(C)]
There exists a constant $C_0$, independent of $N$, such that $\sigma_{ij}^2 \leq C_0 N^{-1}$ for all $i,j = 1, \dots, N$.
\end{description}

Examples of matrices satisfying Assumptions \textbf{(A)} -- \textbf{(C)} include Wigner matrices, Wigner matrices whose diagonal elements are set to zero, generalized Wigner matrices, and band matrices whose band width is of order $c N$ for some $c > 0$. See \cite{EYYrigi}, Section 2, for more details on these examples.

In our normalization, the matrix entries $h_{ij}$ have a typical variance of order $N^{-1}$. It is well known that in this 
normalization the empirical eigenvalue density converges to the Wigner semicircle law $\varrho_{sc}(E) \, \dd E$ with 
density
\be
	\varrho_{sc}(E) \;\deq\; \frac{1}{2\pi}
  \sqrt{ (4-E^2)_+} \for E \in \R\,.
\label{def:sc}
\ee
In particular, the spectral edge is located at $\pm 2$. We denote the ordered eigenvalues of $H$ by $\lambda_1 \le 
\ldots \le \lambda_N$, and their associated eigenvectors by $\f u_1, \dots, \f u_N$. The eigenvectors are 
$\ell^2$-normalized. We use the notation $\f u_\alpha = (u_\alpha(i))_{i = 1}^N$ for the components of the vector $\f 
u_\alpha$.

Our analysis relies on a notion of high probability which involves logarithmic factors of $N$. The following definitions 
introduce convenient shorthands.

\begin{definition} \label{def:log factors}
We set $L \equiv L_N \deq A_0 \log \log N$ for some fixed $A_0$ as well as $\varphi \equiv \varphi_N \deq (\log N)^{\log 
\log N}$.
\end{definition}

\begin{definition}
We say that an $N$-dependent event $\Omega$ holds \emph{with high probability} if $\P(\Omega) \geq 1 - \me^{- 
\varphi^c}$ for large enough $N$ and some $c > 0$ independent of $N$.
\end{definition}

A key assumption for our result is the following level repulsion condition, which is in particular satisfied by the 
Gaussian ensembles (see Remark \ref{rem: level repulsion} below). Consider a spectral window whose size is much smaller 
than the typical eigenvalue separation.  Roughly, the level repulsion condition says that the probability of finding 
more than one eigenvalue in this window is much smaller than the probability of finding precisely one eigenvalue.  In 
order to state the level repulsion condition, we introduce the following counting function.  For any $E_1\le E_2$ we 
denote the number of eigenvalues in $[E_1, E_2]$ by
$$
	\cN(E_1, E_2) \;\deq\; \#\{j \st E_1\le \la_j \leq E_2\}\,.
$$

\begin{definition}[Level repulsion at the edge] \label{def: level repulsion}
The ensemble $H$ is said to satisfy \emph{level repulsion at the edge} if, for any $C > 0$, there is an $\alpha_0 > 0$ 
such that the following holds. For any $\alpha$ satisfying $0 < \alpha \le \alpha_0$ there exists a $\delta > 0$ such 
that
\be\label{lre}
\P \left (   \cN (E - N^{-2/3 - \alpha} , E + N^{-2/3 - \alpha})  \ge 2 \right  ) \;\leq\; N^{-\alpha-\delta}
\ee
for all $E$ satisfying $ |E + 2| \le N^{-2/3} \varphi^C$.
\end{definition}

\begin{definition}[Level repulsion in the bulk] \label{def: level repulsion bulk}
The ensemble $H$ is said to satisfy \emph{level repulsion in the bulk} if, for any $\kappa > 0$, there is an $\alpha_0 > 
0$ such that the following holds. For any $\alpha$ satisfying $0 < \alpha \le \alpha_0$ there exists a $\delta > 0$ such 
that
\be\label{lrb}
\P \left (   \cN (E - N^{-1 - \alpha} , E + N^{-1 - \alpha})  \ge 2 \right  ) \;\leq\; N^{-\alpha-\delta}
\ee
for all $E \in [-2 + \kappa, 2 - \kappa]$.
\end{definition}

\begin{remark} \label{rem: level repulsion}
Both the Gaussian Unitary Ensemble (GUE) and the Gaussian Orthogonal Ensemble (GOE) satisfy level repulsion in sense of 
Definitions \ref{def: level repulsion} and \ref{def: level repulsion bulk}. This can be established for instance as 
follows; see \cite{AGZ}, Sections 3.5 and 3.7, and in particular Lemmas 3.5.1 and 3.7.2, for full details. For GUE and 
GOE, the correlation functions can be explicitly expressed in terms of Hermite polynomials. Using Laplace's method, one 
may then derive the large-$N$ asymptotics of the correlation functions, from which \eqref{lre} and \eqref{lrb} 
immediately follow. (Note that in \cite{AGZ}, the exponent of $N$ in the error estimates  was not tracked in order to 
simplify the presentation.)

In the more general case of Wigner matrices, level repulsion in the bulk, \eqref{lrb}, was proved for  matrices with 
smooth distributions in \cite{ESY3} and without a smoothness assumption in \cite{TV}.
\end{remark}

We shall use the level repulsion condition of Definition \ref{def: level repulsion} to estimate the probability of 
finding two eigenvalues closer to each other than the typical eigenvalue separation. For definiteness, we formulate this 
estimate at the lower spectral edge $-2$. By partitioning the interval
\begin{equation*}
\qB{-2- N^{-2/3} \varphi^C\,,\, -2 + N^{-2/3} \varphi^C}
\end{equation*}
into $O(\varphi^C N^\alpha)$ subintervals of size $N^{-2/3 - \alpha}$, we get from \eqref{lre} that for any sufficiently 
small $\alpha$ there exists a $\delta > 0$ such that
\be\label{lr1}
\P \left ( \text{there exists $E$  with } |E + 2| \le N^{-2/3} \varphi^C  \text{ such  that }   \cN(E - N^{-2/3 - 
\alpha} , E + N^{-2/3 - \alpha})  \ge 2 \right  ) \;\leq\; N^{-\delta}\,.
\ee
A similar result can be derived in the bulk using \eqref{lrb}.

\subsection{Results}
Before stating our main results, we recall the definition of the classical eigenvalue locations.
Let
\be
n_{sc}(E) \;\deq\; \int_{-\infty}^E \varrho_{sc}(x) \, \dd x
\label{nsc}
\ee
be the integrated distribution function of the semicircle law. We use $\gamma_\alpha \equiv \gamma_{\alpha,N}$ to denote 
the classical location of the $\alpha$-th eigenvalue under the semicircle law, defined through
\begin{align} \label{def:gamma}
n_{sc}(\gamma_\alpha) \;=\; \frac{\alpha}{N}\,.
\end{align}

To avoid unnecessary technicalities in the presentation, we shall assume that the entries $h_{ij}$ of $H$ have uniform 
subexponential decay, i.e.
\be\label{subexp}
\P(|h_{ij}|\geq x \sigma_{ij}) \;\leq\; \ttau^{-1} \exp ( - x^\ttau)
\ee
$\vartheta > 0$ is some fixed constant. As observed in \cite{EKYY2}, Section 7, one may easily check that all of our
results hold provided the subexponential condition \eqref{subexp} is replaced with the weaker assumption that there is a 
constant $C$ such that
\begin{equation*}
\E \absb{h_{ij} \sigma_{ij} ^{-1}}^{C_0} \;\leq\; C\,,
\end{equation*}
where $C_0$ is a large universal constant.

Our main result on the distributions of edge eigenvectors is the following theorem.

\begin{theorem}[Universality of edge eigenvectors] \label{t1}
Let $H^{\f v}$ and $H^{\f w}$ both satisfy Assumptions {\bf(A)} -- {\bf (C)} as well as the uniform subexponential decay condition \eqref{subexp}.
Let $\E^{\f v}$ and $\E^{\f w}$ denote the expectations with respect to these collections of random variables.  Suppose 
that the level repulsion estimate \eqref{lre} holds for the ensemble $H^{\f v}$. Assume that
the first two moments of the entries of $H^\bv$ and $H^\bw$ are the same, i.e.
\be\label{2m}
	 \E^{\f v} \bar h_{ij}^l h_{ij}^{u} \;=\; \E^{\f w} \bar h_{ij}^l h_{ij}^{u}
  \for 0\le l+u\le 2\,.
\ee
Let $\rho$ be a positive constant. Then for any integer $k$ and any choice of indices $i_1, \ldots i _k$,  $j_1, \ldots, 
j_k$, $\beta_1, \ldots \beta_k$  and $\alpha_1, \ldots \alpha_k$ with $ \min (|\alpha_l |, |\alpha_l- N| ) +  \min 
(|\beta_l |, |\beta_l- N| )  \le \varphi_N^\rho$ for all $l$ we have \be\label{11}
 \lim_{N\to \infty} \big [ \E^{\f v} - \E^{{\f w}}   \big ]  \theta \pB{N^{2/3}(\lambda_{\beta_1} - \gamma_{\beta_1}), 
\ldots, N^{2/3}(\lambda_{\beta_k} - \gamma_{\beta_k}) \,;\,  N  \bar u_{\alpha_1} (i_1)  u_{\alpha_1} (j_1),    \ldots, 
N  \bar u_{\alpha_k} (i_k)  u_{\alpha_k} (j_k)} \;=\; 0\,,
\ee
where $\theta$ is a smooth function that satisfies
\begin{equation}
\abs{\partial^n \theta(x)} \;\leq\; C (1 + \abs{x})^C
\end{equation}
for some arbitrary $C$ and all $n \in \N^{2k}$ satisfying
$\abs{n} \leq 3$.
The convergence is uniform in all the parameters $i_l,j_l,\alpha_l,\beta_l$ satisfying the above conditions.  
\end{theorem}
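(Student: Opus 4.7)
The strategy is to express the test functional in \eqref{11} in terms of smoothed resolvent entries $G_{ij}(z) \deq ((H-z)^{-1})_{ij}$ at spectral scale $\eta \deq \im z = N^{-2/3 - \alpha}$ for some small $\alpha > 0$, and then apply a Lindeberg-type entry-by-entry swap that propagates the two-moment matching condition \eqref{2m} from the matrix entries to the relevant smooth functionals of $G$. The first step is to represent the eigenvector bilinear as a Poisson-kernel type integral
\[
\bar u_{\alpha_l}(i_l)\, u_{\alpha_l}(j_l) \;\approx\; \frac{1}{\pi} \int_{I_l} \im G_{i_l j_l}(E + \ii \eta) \, q_l(E) \, \dd E,
\]
where $q_l$ is a smooth cutoff supported on a window $I_l \subset \R$ of length $\sim N^{-2/3}$ designed to isolate $\lambda_{\alpha_l}$; the rescaled eigenvalue $N^{2/3}(\lambda_{\beta_l} - \gamma_{\beta_l})$ is extracted in a parallel fashion from the smoothed counting function $E \mapsto \pi^{-1} \int^E \im \tr G(E' + \ii \eta) \, \dd E'$. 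By the spectral decomposition of $G$, these formulas are exact modulo contributions from eigenvalues lying within distance $O(\eta)$ of $\partial I_l$, and such contributions are negligible on the event that no two eigenvalues in the edge window are within $2 N^{-2/3 - \alpha}$ of each other.

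This latter event has probability at least $1 - N^{-\delta}$ under $\P^{\f v}$ by \eqref{lr1}; combined with the polynomial-growth hypothesis on $\theta$, the delocalization bound $|u_\gamma(i)|^2 \le \varphi^C/N$ and the edge rigidity $|\lambda_\gamma - \gamma_\gamma| \le \varphi^C N^{-2/3}$ for indices $\gamma$ near the spectral edge (all from \cite{EYYrigi}), it follows that the contribution of the complementary event to the expectation in \eqref{11} is $o(1)$ for both ensembles. Hence $\theta$ may be replaced, up to an additive error $o(1)$ as $N\to\infty$ uniformly in the allowed indices, by a smooth and polynomially bounded functional of finitely many resolvent entries $G_{ab}(z_s)$ at spectral parameters $z_s$ with $|\im z_s| \ge N^{-2/3 - \alpha}$.

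It then remains to compare the expectations of this resolvent functional under $\E^{\f v}$ and $\E^{\f w}$. I would do this by the Green's function comparison argument of \cite{EYYrigi}, Section 6: enumerate the $\sim N^2$ independent entries, swap them one at a time from the $\f v$ distribution to the $\f w$ distribution, and Taylor expand in each swapped entry to third order. The matching assumption \eqref{2m} cancels all terms of order $\le 2$; the third-order remainder is controlled by $\E |h_{ij}|^3 = O(N^{-3/2})$, a consequence of \eqref{subexp}, together with the edge local semicircle law estimates on products of resolvent entries. Summation of the $O(N^2)$ per-swap errors then yields a total discrepancy of $o(1)$.

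The main obstacle is this Green's function comparison step at a scale $\eta = N^{-2/3 - \alpha}$ only slightly smaller than the typical edge eigenvalue spacing: the required bounds on the third derivative with respect to $h_{ij}$ of a product of resolvent entries at such $z$ depend delicately on the improved edge bound $|G_{ab}(z) - \delta_{ab} m_{sc}(z)| \lec \varphi^C ((\im m_{sc}(z))/(N\eta))^{1/2} + \varphi^C/(N \eta)$ of \cite{EYYrigi}, whose sharpness near the edge — where $\im m_{sc}(z) \sim \sqrt{\kappa + \eta}$ with $\kappa = |E \mp 2|$ — is precisely what makes the two-moment matching sufficient. A secondary technical issue is to coordinate the smoothings that extract eigenvector components and eigenvalue locations so that all $2k$ targeted indices can be isolated simultaneously on a single event of probability $1 - o(1)$ furnished by the level repulsion hypothesis \eqref{lre}.
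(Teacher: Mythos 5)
Your overall architecture --- a Poisson-kernel representation of $\bar u_\alpha(i)\,u_\alpha(j)$ at scale $\eta = N^{-2/3-\epsilon}$, level repulsion to isolate a single eigenvalue, then an entry-by-entry Lindeberg swap --- is the same as the paper's. But the decisive step is exactly the one you dispose of in one sentence, and as written it fails. In the third-order term of the resolvent expansion for the swapped pair $(a,b)$, the relevant quantity is (up to constants) $N^{-1/2}\eta\sum_{k} R_{\alpha k}\,\overline{R_{\beta a}V_{ab}R_{bb}V_{ba}R_{aa}V_{ab}R_{bk}}$ (cf.\ \eqref{366}): the fixed external index $\beta$ is pinned against the swap index $a$, so $R_{\beta a}$ is a single off-diagonal entry of size $N^{-1/3+C\epsilon}$ with no summation over $a$ to supply an entropy factor. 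Naive power counting with Lemma \ref{lemma: Rij bound} then gives a per-swap error of order $N^{-11/6+C\epsilon}$, and summing over the $O(N^2)$ swaps yields $N^{1/6+C\epsilon}$, which diverges. No sharpening of the local semicircle law closes this $N^{1/6}$ gap, since it is not a question of the size of $R_{\beta a}$ but of the absence of a summation over $a$. The paper must instead take the expectation before estimating, condition on the $a$-th column of $H$, and use the identity \eqref{376} together with $\E\, h_{ja}=0$ to show that the conditional expectation of $S_{\beta a}$ is of order $N^{-2/3+C\epsilon}$ rather than $N^{-1/3}$ (see \eqref{splitting of S} and the proof of \eqref{368}). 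This cancellation, together with the bookkeeping index $s$ of \eqref{defsmin} that tracks how many resolvent entries degenerate to diagonal ones when $\{i,j\}$ or $k$ collides with $\{a,b\}$, is what makes two-moment matching sufficient at the edge; without it your comparison step does not close.

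A smaller but genuine issue: your cutoff $q_l(E)$ ``supported on a window $I_l$ of length $\sim N^{-2/3}$ designed to isolate $\lambda_{\alpha_l}$'' cannot be a deterministic window, because $\lambda_{\alpha_l}$ fluctuates on exactly that scale. The localization must itself be a functional of $H$: the paper first uses the random indicator $\ind{\lambda_{\alpha-1}\le E^- \le \lambda_\alpha}$ (Lemma \ref{CTG}) and then replaces it by $q\qb{\tr(\f 1_{[E_L,E^-]}\ast\theta_{\tilde\eta})(H)}$ at a second, finer scale $\tilde\eta = N^{-2/3-6\epsilon}$ (Lemma \ref{GCC}), so that all randomness entering the comparison is expressed through Green functions. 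You invoke a smoothed counting function for the eigenvalue arguments but not for the eigenvector localization; both are needed before the swap can be performed.
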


\begin{remark}
The scaling in front of the arguments in \eqref{11} is the natural scaling near the spectral edge. Indeed, for e.g.\ GUE 
or GOE it is known (see e.g.\ \cite{AGZ}) that $(\lambda_\beta - \gamma_\beta) \sim N^{-2/3}$ near the edge, and that 
$u_\alpha(i) \sim N^{1/2}$ (complete delocalization of eigenvectors).
\end{remark}

\begin{remark}
The form \eqref{11} characterizes the distribution of the edge eigenvectors completely. Choosing $i_l = j_l$ yields the 
modulus $\abs{u_{\alpha_l}(i_l)}^2$; fixing $i_l$ and varying $j_l$ gives the relative phases of the entries of the 
vector $\f u_{\alpha_l}$, which is only defined up to a global phase.
\end{remark}

\begin{remark}
Theorem \ref{t1} and Remark \ref{rem: level repulsion} imply that the joint eigenvector-eigenvalue distribution of 
Hermitian Wigner matrices agrees with that of GUE. In the case of GUE, it is well known that the joint distribution of 
the eigenvalues is given by the Airy kernel \cite{TW}. The eigenvectors are independent of the eigenvalues, and the 
matrix $(u_\alpha(i))_{\alpha, i}$ of the eigenvector entries is distributed according to the Haar measure on the 
unitary group $U(N)$. In particular, any eigenvector $\f u_\alpha$ is uniformly distributed on the unit $(N - 
1)$-sphere.

Similarly, Theorem \ref{t1} and Remark \ref{rem: level repulsion} imply that the joint eigenvector-eigenvalue 
distribution of symmetric Wigner matrices agrees with that of GOE. Results similar to those outlined above on the 
eigenvector-eigenvalue distribution of GUE hold for GOE.
\end{remark}

The universality of the eigenvalue distributions near the edge was already proved in \cite{EYYrigi} under the assumption 
that the first two moments of the matrix entries match, and in \cite{TV2} under the additional assumption that the third 
moments vanish.  Note that Theorem \ref{t1} holds in a stronger sense than the result in \cite{EYYrigi}: it holds for 
probability density functions, not just the distribution functions.

In the bulk, a result similar to Theorem \ref{t1} holds under the stronger assumption that four, instead of two, moments 
of the matrix entries match.

\begin{theorem}[Universality of bulk eigenvectors] \label{t2}
Let $H^{\f v}$ and $H^{\f w}$ both satisfy Assumptions {\bf (A)} -- {\bf (C)} as well as the uniform subexponential decay condition \eqref{subexp}.  Suppose that the level repulsion estimate \eqref{lrb} holds 
for the ensemble $H^{\f v}$. Suppose moreover that the first four off-diagonal moments of
 $H^\bv$ and $H^\bw$ are the same, i.e.
\be\label{2m1}
	 \E^{\f v} \bar h_{ij}^l h_{ij}^{u} =  \E^{\f w} \bar h_{ij}^l h_{ij}^{u}
\for i \neq j \quad \text{and} \quad 0\le l+u\le 4\,,
\ee
and that the first two diagonal moments of
 $H^\bv$ and $H^\bw$ are the same, i.e.
\be\label{2m2}
	 \E^{\f v} \bar h_{ii}^l h_{ij}^{u} =  \E^{\f w} \bar h_{ii}^l h_{ij}^{u}
\for 0\le l+u\le 2\,.
\ee

Let $\rho > 0$ be fixed.
Then for any integer $k$ and any choice of indices $i_1, \ldots i _k$, $j_1, \ldots, j_k$, as well as $\rho N \leq
\alpha_1, \ldots \alpha_k, \beta_1, \ldots, \beta_k \leq (1 - \rho)N$, we have
\be\label{12}
 \lim_{N\to \infty} \big [ \E^{\f v} - \E^{{\f w}}   \big ]  \theta \pB{ N(\lambda_{\beta_1} - \gamma_{\beta_1}), \dots, 
N(\lambda_{\beta_k} - \gamma_{\beta_k}) \,;\, N  \bar u_{\alpha_1} (i_1)  u_{\alpha_1} (j_1),    \dots, N  \bar 
{u}_{\alpha_k} (i_k)  {u}_{\alpha_k}  (j_k)} \;=\; 0\,,
\ee
where $\theta$ is a smooth function that satisfies
\begin{equation} \label{growth of derivatives}
\abs{\partial^n \theta(x)} \;\leq\; C (1 + \abs{x})^C
\end{equation}
for some arbitrary $C$ and all $n \in \N^{2k}$ satisfying
$\abs{n} \leq 5$.
The convergence is uniform in all the parameters $i_l,j_l,\alpha_l,\beta_l$ satisfying the above conditions.
\end{theorem}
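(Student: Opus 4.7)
The plan is to follow the Green function comparison (GFT) strategy of Erdős--Yau--Yin, applied jointly to the eigenvector and eigenvalue observables. The two conceptual steps are: (i) express the argument of $\theta$ in \eqref{12} as a smooth functional of entries of the resolvent $G(z) \deq (H - z)^{-1}$ at spectral parameters $z = E + \ii \eta$ with $\eta = N^{-1-\alpha}$ for some small $\alpha > 0$; (ii) compare $\E^{\f v}$ and $\E^{\f w}$ of this functional via a telescopic Lindeberg swap. Given the a priori bounds discussed below, (ii) is standard: Taylor expanding to fifth order and using the matching \eqref{2m1} of four off-diagonal moments produces a residual of order $N^{-5/2+\sigma}$ per off-diagonal swap, while for diagonal entries the matching \eqref{2m2} of two moments combined with Taylor to third order yields $N^{-3/2+\sigma}$ per entry; summed over $O(N^2)$ off-diagonal and $O(N)$ diagonal swaps, both contributions are $o(1)$.

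For step (i), the eigenvalue components $N(\lambda_{\beta_l} - \gamma_{\beta_l})$ are captured as usual through the counting function $\cN(-\infty,E)$, expressed as an integral of $\im\tr G(\cdot + \ii\eta)$ against a mollified indicator on scale $\eta$. For the eigenvector components $N\bar u_{\alpha_l}(i_l) u_{\alpha_l}(j_l)$, I use the spectral identity
\begin{equation*}
\im G_{j_l i_l}(E + \ii\eta) \;=\; \sum_\alpha \frac{\eta\, \bar u_\alpha(i_l)\, u_\alpha(j_l)}{(\lambda_\alpha - E)^2 + \eta^2}\,,
\end{equation*}
together with a smooth cutoff supported in a window around $\gamma_{\alpha_l}$ that picks out the $\alpha = \alpha_l$ term. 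Rigidity places $\lambda_{\alpha_l}$ inside this window; the level repulsion hypothesis \eqref{lrb} then guarantees, on an event of probability $\ge 1 - N^{-\delta}$, that no other eigenvalue lies within it. Combined with bulk delocalization $\abs{u_\alpha(i)} \leq N^{-1/2+\sigma}$, this presents $\bar u_{\alpha_l}(i_l) u_{\alpha_l}(j_l)$ as a smooth functional of $G$ up to an additive error of order $N^{-1-\delta}$.

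The principal obstacle is that $\eta = N^{-1-\alpha}$ sits \emph{below} the typical eigenvalue spacing, so the ordinary local semicircle law yields no useful pointwise bound on $G_{ij}(E + \ii\eta)$. Control at this sub-microscopic scale must instead be extracted from level repulsion through the Ward identity $\sum_j \abs{G_{ij}(z)}^2 = \eta^{-1}\im G_{ii}(z)$: on the event that only one eigenvalue lies within distance $\eta$ of $E$, delocalization bounds $\im G_{ii}$ and delivers $\abs{G_{ij}} \leq \varphi^C$. These bounds must then be propagated uniformly through the swap, which requires a continuity argument along the lines of the one in \cite{EYY,EYYrigi}. Finally, the exceptional events on which level repulsion or rigidity fails are handled by combining the polynomial growth condition \eqref{growth of derivatives} on $\theta$ with the deterministic bound $\abs{\bar u_\alpha(i) u_\alpha(j)} \leq 1$ and $\abs{\lambda_\alpha} \leq \norm{H}$, so that their total contribution to $\E^{\f v} - \E^{\f w}$ is $o(1)$.
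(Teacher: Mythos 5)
Your overall architecture---a Green function representation of the observable followed by a telescopic Lindeberg swap, with four off-diagonal and two diagonal moments matched and a per-swap error budget of $o(N^{-5/2})$ resp.\ $o(N^{-3/2})$---coincides with the paper's, and the moment counting in your step (ii) is correct. The gaps are in step (i), which is where the real work of the bulk case lies. First, a smooth cutoff in the energy variable supported in a \emph{deterministic} window around $\gamma_{\alpha_l}$ cannot isolate the single term $\alpha = \alpha_l$ in the spectral sum: rigidity \eqref{rigidity} localizes $\lambda_{\alpha_l}$ only to scale $(\log N)^{L} N^{-1}$, so the window must be at least that wide, and such a window typically contains order $(\log N)^L$ eigenvalues; the level repulsion hypothesis \eqref{lrb} acts only at the much smaller scale $N^{-1-\alpha} \ll N^{-1}$ and says nothing about the neighbours $\lambda_{\alpha_l \pm 1}$, which sit at distance $\sim N^{-1}$. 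The selector must be random---the paper uses $\ind{\lambda_{\alpha_l - 1} \le E^- \le \lambda_{\alpha_l}} = \ind{\cN(E_L, E^-) = \alpha_l - 1}$ integrated over a $\varphi^C N^{-1}$-window, with level repulsion invoked only to separate the two random endpoints. This forces the second, more serious issue: one must express the counting function over the \emph{macroscopic} interval $[E_L, E^-]$ through Green functions. Your recipe (a mollified indicator convolved on scale $\eta \sim N^{-1-\alpha}$, i.e.\ $\frac{N}{\pi}\int_{E_L}^{E}\im m(\tilde E + \ii \eta)\,\dd \tilde E$) is the edge recipe and does not survive the comparison in the bulk: already the second-order term of the resolvent expansion contains $N^{-1}\int_{E_L}^{E}\absb{R_{bb}\,\partial_{\tilde E}R_{aa}}\,\dd\tilde E$ with $\abs{\partial_{\tilde E} R_{aa}} \le \eta^{-1}\im R_{aa} \sim (N\eta^2)^{-1}$, so integrating over an interval of length $O(1)$ at a scale $\eta \ll N^{-1}$ produces errors $\gg N^{-2}$ per swap (and the unmatched fifth-order and remainder terms inherit the same blow-up). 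This is exactly why the paper switches to the Helffer--Sj\"ostrand representation \eqref{HS split 1}: integrating by parts in $\sigma$ pushes the estimate to scales $\sigma \gg N^{-1}$ where $\Lambda_\sigma \lesssim (N\sigma)^{-1/2}$ is integrable, leaving only an affordable boundary term at $\sigma = \tilde\eta_d$. Without this (or an equivalent multiscale device) your step (ii) is not ``standard''---it does not close.

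A smaller point: the sub-microscopic resolvent bound needs neither level repulsion nor a continuity argument along the swap. The deterministic monotonicity $\im G_{ii}(E+\ii\eta) \le (y/\eta)\,\im G_{ii}(E+\ii y)$, combined with the local law at scale $y = \varphi^C N^{-1}$ (valid for \emph{every} interpolating matrix $H_\gamma$, since each satisfies the standing assumptions), yields $\abs{G_{ij}(E+\ii\eta)} \le \varphi^C (N\eta)^{-1}$ for all $\eta > 0$; this is Lemma \ref{Letasc}, and the resulting loss $N^{C\alpha}$ is harmless. Your claimed bound $\abs{G_{ij}} \le \varphi^C$ is in any case too strong---a single eigenvalue within distance $\eta$ of $E$, which level repulsion does not forbid, already contributes $\varphi^C (N\eta)^{-1}$---and basing the a priori bounds on level repulsion for the interpolating ensembles would require a separate stability argument that your proposal does not supply.
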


The universality restricted to the bulk eigenvalues only has been previously established in several works. The following 
list provides a summary. Note that the small-scale statistics of the eigenvalues may be studied using \emph{correlation 
functions}, which depend only on eigenvalue differences, or using \emph{joint distribution functions}, as in \eqref{11} 
and \eqref{12}, which in addition contain information about the eigenvalue locations.
\begin{enumerate}
\item
In \cite{EYYrigi}, bulk universality for generalized Wigner matrices was proved in the sense that correlation functions 
of bulk eigenvalues, averaged over a spectral window of size $N^\epsilon$, converge to those of the corresponding 
Gaussian ensemble.
\item
In \cite{TV} the statement \eqref{12} on distribution functions, restricted to eigenvalues only, was proved for 
Hermitian and symmetric Wigner matrices for the case where the first four moments match as in \eqref{2m1}.
\item
For the case of Hermitian Wigner matrices with a finite Gaussian component, it was proved in \cite{J} that the
correlation functions converge to those of GUE.
\item
In \cite{GEG1}, the joint distribution function of the eigenvalues of GUE was computed. This result was extended to 
cover GOE in \cite{GEG2}.
\end{enumerate}

Note that (ii) and (iii) together imply the universality of the joint distribution of eigenvalues for Hermitian Wigner 
matrices, for which the first three moments match those of GUE and the distribution is supported on at least three 
points. Moreover, combining (ii) and (iv) allows one to compute the eigenvalue distribution of Hermitian and symmetric 
Wigner matrices, provided the four first moments match those of GUE/GOE.

Thus, Theorem \ref{12} extends the results of \cite{EYYrigi} to distribution functions of individual eigenvalues as well 
as to eigenvectors.

\begin{remark}
A while after this paper was posted online, a result similar to Theorem \ref{t2} appeared in \cite{TV3}. Its proof 
relies on a different method. The hypotheses of \cite{TV3} are similar to those of Theorem \ref{t2}, with
the two following exceptions. The result of \cite{TV3} is restricted to Wigner matrices instead of the generalized 
Wigner matrices defined by Assumptions {\bf (A)} -- {\bf (C)}.  Moreover, in \cite{TV3} the derivatives of the observable $\theta$ are required to be uniformly bounded
in $x$, where this uniform bound may grow slowly with $N$. This latter restriction allows the authors of \cite{TV3} to let $k$ grow slowly with $N$.

While the results of \cite{TV3} apply to eigenvectors near the spectral edge, the matching of four moments (as in 
Theorem \ref{t2}) is also required for this case. As shown in Theorem \ref{t1}, the universality of edge eigenvectors in 
fact only requires the first two moments to match.
\end{remark}

\subsection{Outline of the proof} \label{sect: outline of proof}
The main idea behind our proof is to express the eigenvector components using matrix elements of the Green function 
$G(z) = (H - z)^{-1}$. To this end, we use the identity
\begin{equation} \label{main identity}
\sum_{\beta} \frac{\eta / \pi}{(E - \lambda_\beta)^2 + \eta^2} \, N \bar u_\beta(i) u_\beta(j) \;=\; \frac{N}{2 \pi \ii} 
\pb{G_{ij}(E + \ii \eta) - G_{ij}(E - \ii \eta)}\,,
\end{equation}
where $\eta > 0$. Using a good control on the matrix elements of $G(z)$, we may then apply a Green function comparison 
argument (similar to the Lindeberg replacement strategy) to complete the proof. For definiteness, let us consider a 
single eigenvalue $\lambda_\alpha$ located close to the spectral edge $-2$.

In a first step, we write $N \bar u_\alpha(i) u_\alpha(j)$ as an integral of \eqref{main identity} over an appropriately 
chosen (random) domain, up to a negligible error term. We choose $\eta$ in \eqref{main identity} to be much smaller than 
the typical eigenvalue separation, i.e.\ we set $\eta = N^{-2/3 - \epsilon}$ for some small $\epsilon > 0$. Note that 
the fraction on the left-hand side of \eqref{main identity} is an approximate delta function on the scale $\eta$. Then 
the idea is to integrate \eqref{main identity} over the interval $[\lambda_\alpha - \varphi^C \eta, \lambda_\alpha + 
\varphi^C \eta]$ for some large enough constant $C$. For technical reasons related to the Green function comparison (the 
third step below), it turns out to be advantageous to replace the above interval with $\cal I \deq [\lambda_{\alpha - 
1} + \varphi^C \eta, \lambda_\alpha + \varphi^C \eta]$. Using eigenvalue repulsion, we infer that, with sufficiently 
high probability, the eigenvalues $\lambda_{\alpha - 1}$ and $\lambda_{\alpha + 1}$ are located at a distance greater 
than $\varphi^C \eta$ from $\lambda_\alpha$. Therefore the $E$-integration over $\cal I$ of the right-hand side of 
\eqref{main identity} yields $N \bar u_{\alpha}(i) u_\alpha(j)$ up to a negligible error term.

In a second step, we replace the sharp indicator function $\ind{E \in \cal I}$ with a smoothed indicator function 
expressed in terms of the Green function $G$.  This is necessary for the Green function comparison argument, which 
requires all $H$-dependence to be expressed using Green functions. To that end, we choose a scale $\tilde \eta \deq 
N^{-2/3 - 6 \epsilon} \ll \eta$ and write
\begin{equation} \label{approx indicator function}
\ind{E \in \cal I} \;\approx\; q \qB{ \tr  \pb{{\bf 1}_{[E_L, E - \varphi^C \eta]}  \ast \theta_{\tilde \eta}} (H)}
\end{equation}
where the error is negligible. Here $E_L = - 2 - \varphi^C N^{-2/3}$, $q$ is a smooth function equal to $1$ in the 
$1/3$-neighbourhood of $\alpha - 1$ and vanishing outside the $2/3$-neighbourhood of $\alpha - 1$, and $\theta_{\tilde 
\eta}$ is the approximate delta function defined in \eqref{thetam} below. Thanks to the special form of the right-hand 
side of \eqref{thetam}, we have $\theta_\eta(H) = \frac{1}{\pi} \im G(\ii \eta)$. Hence the argument on right-hand side 
of \eqref{approx indicator function} may be expressed as an integral over Green functions. Thus we have expressed $N 
\bar u_\alpha(i) u_\alpha(j)$ using matrix elements of $G$ alone.  Note that the above choice of $\cal I$ was made 
precisely so as to make the right-hand side of \eqref{approx indicator function} a simple function of $G$.

In a third step, we use a Green function comparison argument to compare the distributions of $N \bar u_\alpha(i) 
u_\alpha(j)$ under the two ensembles $H^{\f v}$ and $H^{\f w}$. The basic strategy is similar to
\cite{EYYrigi}, but requires a more involved analysis of the resolvent expansion. The reason for this is that we need to 
exploit the smallness associated with off-diagonal elements of $G$, which requires us to keep track of their number in 
the power counting.  This bookkeeping is complicated by the presence of the two fixed indices $i$ and $j$. Another 
important ingredient in the error estimates of the Green function comparison argument is the restriction of the 
integration over $E$ to a deterministic interval of size $\varphi^C N^{-2/3}$ around $-2$. This can be done with 
negligible errors using the eigenvalue rigidity proved in \cite{EYYrigi}; see Theorem \ref{7.1}.

The above proof may be easily generalized to multiple eigenvector components as well as to eigenvalues; this allows us 
to consider observables of the form given in \eqref{11}.  The necessary changes are given in Section \ref{sect:general 
edge}.

The proof for bulk eigenvectors is similar, with two major differences. At the edge, the convolution integral on the 
right-hand side of \eqref{approx indicator function} was over a domain of size $\varphi^C N^{-2/3}$. If the same 
expression were used in the bulk, this size would be $O(1)$ (since $E$ is separated from the spectral edge $-2$ by a 
distance of order $O(1)$), which is not affordable in the error estimates. Instead, a more refined multiscale approach 
using the Helffer-Sj\"ostrand functional calculus is required in order to rewrite the sharp indicator function on the 
left-hand side of \eqref{approx indicator function} in terms of Green functions.  The second major difference for bulk 
eigenvectors is the power counting in the Green function comparison argument, which is in fact easier than at the edge.  
The main reason for this is that the smallness associated with off-diagonal elements of $G$ is not available in the 
bulk. Hence we need to assume that four instead of two moments match, and the intricate bookkeeping of the number of 
off-diagonal resolvent elements is not required. Thus, thanks to the very strong assumption of four-moment matching, the 
proof of Theorem \ref{t2} is considerably simpler than that of Theorem \ref{t1}.
See Section \ref{section: bulk} for a more detailed explanation as well the proof.

\medskip \noindent
{\bf Conventions.} We shall use the letters $C$ and $c$ to denote generic positive constants, which may depend on fixed 
quantities such as $\vartheta$ from \eqref{subexp}, $\delta_\pm$ from Assumption {\bf (B)}, and $C_0$ from Assumption {\bf (C)}. We use $C$ for large constants 
and $c$ for small constants.

\medskip
\noindent
{\bf Acknowledgements.} The authors would like to thank L.\ Erd\H{o}s and H.T.\ Yau for many insights and helpful 
discussions.

\section {Local semicircle law and rigidity of eigenvalues} \label{sec: loc sc}

In this preliminary section we collect the main tools we shall need for our proof. We begin by introducing some notation 
and by recalling the basic results from \cite{EYYrigi} on the local semicircle law
and the rigidity of eigenvalues.

We define the Green function of $H$ by
\be\label{green}
G_{ij}(z) \;=\; \left(\frac1{H-z}\right)_{ij}\,,
\ee
where we the spectral parameter $x = E + \ii \eta$ satisfies $E \in \R$ and $\eta > 0$.
The Stieltjes transform of the empirical eigenvalue distribution of $H$ is defined as
\be
m(z) \;\deq\;   \frac{1}{N}    \sum_i G_{ii}(z) \;=\; \frac{1}{N} \tr\, \frac{1}{H-z} \;=\; \frac{1}{N} \sum_\alpha 
\frac{1}{\lambda_\alpha - z}\,.
\label{mNdef}
\ee
Similarly, we define $m_{sc}(z)$ as the Stieltjes transform of the local semicircle law:
\begin{equation*}
m_{sc}(z) \;\deq\; \int \frac{\varrho_{sc}(\lambda) \, \dd \lambda}{\lambda - z}\,.
\end{equation*}
It is well known that $m_{sc} (z)$ can also be characterized as the unique solution of
\be\label{defmsc} m_{sc} (z)  + \frac{1}{z+m_{sc} (z)} \;=\; 0
\ee
with positive imaginary part for all $z$ with $\im z > 0$. Thus,
\be\label{temp2.8}
m_{sc}(z) \;=\; \frac{-z+\sqrt{z^2-4}}{2}\,,
\ee
where the square root function is chosen with a branch cut in the segment
$[-2,2]$ so that asymptotically $\sqrt{z^2-4}\sim z$ at infinity.
This guarantees that the imaginary part of $m_{sc}$ is non-negative for   $\eta=\im  z > 0$ and in the limit $\eta\to 
0$.

In order to state the local semicircle law, we introduce the control parameters
\be\label{defLambda}
  \Lambda_d \;\deq\; \max_i |G_{ii}-m_{sc}|\,, \qquad
 \Lambda_o \;\deq\; \max_{i\ne j} |G_{ij}|\,, \qquad \Lambda \;\deq\; |m-m_{sc}|\,,
\ee
where the subscripts refer to ``diagonal'' and ``off-diagonal'' matrix elements.  All these quantities depend on the 
spectral parameter $z$ and on $N$, but for simplicity we often omit the explicit mention of this dependence from the 
notation. The following two results were proved in \cite{EYYrigi}.

\begin{theorem}[Strong local semicircle law] \label{45-1} Let $H=(h_{ij})$ be a Hermitian or symmetric 
$N\times N$ random matrix satisfying Assumptions {\bf A} -- {\bf C}. Suppose that the distributions of the matrix 
elements $h_{ij}$ have a uniformly subexponential decay in the sense of \eqref{subexp}. Then  there exist positive constants $A_0 
> 1$, $C, c$, and $\tau < 1$,  such that the following estimates hold for $L$ as in Definition \ref{def:log factors} and 
for $N\ge N_0(\ttau, C_0, \delta_\pm)$ large enough.

\begin{enumerate}
\item
The Stieltjes transform of the empirical eigenvalue distribution of  $H $  satisfies
\be\label{Lambdafinal}
\P \pBB{\bigcup_{z\in \bS_L} \hbb{ \Lambda(z) \ge \frac{(\log N)^{4L}}{N\eta}}} \;\le\; \me^{-c (\log N)^{\tau L}}\,,
\ee
where
\be
{\bf  S}_L \;\deq\; \Big\{ z=E+i\eta\; : \;
 |E|\leq 5,  \quad  N^{-1}(\log N)^{10L} < \eta \le  10  \Big\}\,.
\label{defS}
\ee
\item
The individual  matrix elements of
the Green function  satisfy
\be\label{Lambdaodfinal}
\P \left  ( \bigcup_{z\in \bS_L} \hBB{\Lambda_d(z)  + \Lambda_o (z) \geq (\log N)^{4L} \sqrt{\frac{\im m_{sc}(z)  
}{N\eta}} + \frac{(\log N)^{4L}}{N\eta}}    \right)
\;\leq\; \me^{-c (\log N)^{\tau L}}\,.
\ee
\item
The norm of $H$ is bounded by $2+N^{-2/3}(\log N)^{ 9L} $ in the sense that
\be\label{443}
\P \Big (  \norm{H} \ge 2+N^{-2/3}(\log N)^{ 9 L} \Big) \;\le\;
\me^{-c (\log N)^{\tau L}}\,.
\ee
\end{enumerate}
\end{theorem}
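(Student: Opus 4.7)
The plan is to prove Theorem~\ref{45-1} by reducing everything to a self-consistent equation for the diagonal Green function entries and then controlling the fluctuations around this equation via high-probability concentration bounds. For each fixed $i$, Schur's complement formula gives
\begin{equation*}
G_{ii}(z) \;=\; \frac{1}{h_{ii} - z - \sum_{k,l \neq i} h_{ik}\, G^{(i)}_{kl}(z)\, h_{li}}\,,
\end{equation*}
where $G^{(i)}$ is the Green function of the minor obtained by removing the $i$-th row and column, independent of $\{h_{ik}\}_k$. Writing $\sum_{k,l \neq i} h_{ik} G^{(i)}_{kl} h_{li} = \sum_{k\neq i} \sigma_{ik}^2 G^{(i)}_{kk} + Z_i$ with $\E_i Z_i = 0$, and using Assumption~\textbf{(A)} together with the identity $G^{(i)}_{kk} = G_{kk} - G_{ki}G_{ik}/G_{ii}$, one obtains an approximate self-consistent equation of the form $G_{ii}(z)^{-1} \approx -z - m(z)$, which is precisely the equation \eqref{defmsc} satisfied by $m_{sc}(z)$.

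The main technical input would be high-probability bounds on the error terms: namely $h_{ii}$, the fluctuation $Z_i$, and the off-diagonal terms $G_{ik}G_{ki}/G_{ii}$. For $Z_i$, which is a centered quadratic form in the independent subexponential entries $\{h_{ik}\}_{k\neq i}$, I would use a large deviation estimate (Hanson--Wright type) conditionally on $G^{(i)}$, giving $|Z_i| \lesssim \varphi^{C} (\sum_{k\neq i} \sigma_{ik}^4 |G^{(i)}_{kk}|^2)^{1/2} + \varphi^C \max_{k\neq l} |G^{(i)}_{kl}|$, with $\varphi^C$ as in Definition~\ref{def:log factors}. Combined with the trivial bound $\sum_k |G^{(i)}_{kk}|^2 \le \eta^{-1} \im m^{(i)}$ and the identity $\im m = \eta^{-1} N^{-1} \sum_\alpha \eta^2/((E-\lambda_\alpha)^2 + \eta^2)$, this yields the crucial quadratic-in-$\Lambda$ control needed to close the self-consistent equation. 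For off-diagonal entries $G_{ij}$ with $i \neq j$, a two-step Schur expansion
\begin{equation*}
G_{ij} \;=\; -G_{ii}\, G^{(i)}_{jj}\pbb{h_{ij} - \sum_{k,l \neq i,j} h_{ik}\, G^{(ij)}_{kl}\, h_{lj}}
\end{equation*}
reduces the analysis to an analogous large deviation estimate for a non-diagonal quadratic form in two independent families of entries.

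The heart of the proof is a bootstrap/continuity argument in the spectral parameter. At $\eta \sim 1$ the bound $\Lambda \le C\eta^{-1}$ is trivial by the spectral theorem, so one has the desired control from deterministic grounds. One then decreases $\eta$ in small multiplicative steps: the stability analysis of \eqref{defmsc}, exploiting the fact that the derivative $\partial_m (m + (z+m)^{-1})$ is bounded away from $0$ uniformly on $\bS_L$ except in a neighborhood of the edges (where it degenerates like $\sqrt{\kappa + \eta}$), propagates a priori bounds from one scale to the next. The quadratic improvement coming from the $\Lambda^2$ structure of the fluctuation terms allows one to absorb the prefactors and obtain the sharp rate $(\log N)^{4L}/(N\eta)$ in \eqref{Lambdafinal}, and the refined rate involving $\sqrt{\im m_{sc}/(N\eta)}$ in \eqref{Lambdaodfinal}. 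A union bound over a polynomially fine lattice in $\bS_L$, combined with deterministic Lipschitz continuity of $G_{ij}(z)$ in $z$ at rate $\eta^{-2}$, upgrades the pointwise estimates to uniform ones.

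Finally, the spectral norm bound \eqref{443} is a consequence of part~(i): if $\|H\| > 2 + N^{-2/3}(\log N)^{9L}$ were to occur with non-negligible probability, then for $E$ slightly outside the semicircle support and $\eta = N^{-1}(\log N)^{10L}$ one would have $\im m(E + \ii\eta) \gtrsim (N\eta)^{-1}$ from a single eigenvalue in the window, contradicting $\im m \approx \im m_{sc} \approx 0$ at that scale via \eqref{Lambdafinal}. The main obstacle I anticipate is the edge regime $E \approx \pm 2$, where the stability operator of \eqref{defmsc} degenerates and one must carefully exploit the $\sqrt{\im m_{sc}/(N\eta)}$ improvement in the off-diagonal bound together with the square-root vanishing of $\varrho_{sc}$ to avoid losing logarithmic factors; this is what ultimately produces the exponent $9L$ (rather than something larger) in the norm estimate.
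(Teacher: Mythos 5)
This theorem is not proved in the paper at all: it is imported verbatim from \cite{EYYrigi}, so there is no in-paper argument to compare against. Your sketch does reproduce the broad architecture of the proof in that reference (Schur complement, self-consistent equation, large deviation bounds for the quadratic forms $Z_i$, a continuity/bootstrap argument in $\eta$ combined with a lattice and union bound, and deduction of the norm bound from the local law). However, as written it has genuine gaps that would prevent it from yielding the stated estimates.

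First, you never use Assumption \textbf{(B)}. For generalized Wigner matrices the step from $\sum_{k}\sigma_{ik}^2 G^{(i)}_{kk}$ to $m(z)$ is not controlled by Assumption \textbf{(A)} alone: writing $v_i = G_{ii} - m_{sc}$, the self-consistent equation takes the vector form $(1 - m_{sc}^2 B)v = O(\Lambda_d^2) + O(\Upsilon)$, and closing it requires inverting the stability operator $1 - m_{sc}^2 B$. This is precisely where the spectral gap of the variance matrix $B$ enters; without it the argument collapses to the standard Wigner case only. Second, and more seriously, the ingredients you list cannot produce the rate $(\log N)^{4L}/(N\eta)$ for $\Lambda = |m - m_{sc}|$ in part (i). Averaging the self-consistent equation gives $\Lambda \lesssim \Lambda_d^2 + |\frac{1}{N}\sum_i Z_i|$; the term $\Lambda_d^2 \lesssim \im m_{sc}/(N\eta)$ is fine, but the naive bound $|\frac{1}{N}\sum_i Z_i| \le \max_i |Z_i| \sim \sqrt{\im m_{sc}/(N\eta)}$ only yields the weak rate $(N\eta)^{-1/2}$. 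The missing ingredient is the fluctuation-averaging (abstract decoupling) mechanism of \cite{EYYrigi}, a high-order expansion exploiting cancellations among the centered variables $Z_i$ to show that their average is of order $(N\eta)^{-1}$ or better; your appeal to ``the quadratic improvement coming from the $\Lambda^2$ structure'' addresses the wrong term. Third, the deduction of (iii) from (i) as you describe it does not close: a single eigenvalue at distance $\le \eta$ from $E$ contributes only $\sim (N\eta)^{-1}$ to $\im m$, which is \emph{smaller} than the error $(\log N)^{4L}/(N\eta)$ permitted by \eqref{Lambdafinal}, so there is no contradiction. One needs the refined estimate on $\Lambda$ outside the spectrum, where $\im m_{sc}(E + \ii\eta) \sim \eta/\sqrt{\kappa}$ is small and the error bound improves accordingly; this refinement must be established separately.
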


The local semicircle law implies that the eigenvalues are close to their classical locations with high probability.  
Recall that $\lambda_1 \leq \lambda_2 \leq \cdots \leq \lambda_N$ are the ordered eigenvalues of $H$. The classical 
location $\gamma_\alpha$ of the $\alpha$-th eigenvalue was defined in \eqref{nsc}.

\begin{theorem} [Rigidity of eigenvalues] \label{7.1}
Under the assumptions of Theorem \ref{45-1} there exist positive constants $A_0 > 1$,  $C, c$, and $\tau < 1$,  
depending only on $\ttau$ in \eqref{subexp}, $\delta_\pm$ in Assumption {\bf (B)}, and $C_0$ in Assumption {\bf (C)}, such that
such that
\be\label{rigidity}
\P \hB{  \exists \, \alpha \st \abs{\lambda_\alpha-\gamma_\alpha} \ge (\log N)^{ L}  \qb{ \min (\alpha ,  N-\alpha+1 
)}^{-1/3}   N^{-2/3}}
 \;\le\; \me^{-c (\log N)^{\tau L}}\,,
\ee
where $L$ is given in Definition \ref{def:log factors}.
\end{theorem}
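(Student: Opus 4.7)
The plan is to derive Theorem \ref{7.1} as a consequence of the strong local semicircle law (Theorem \ref{45-1}), following the standard Erd\H{o}s-Schlein-Yau strategy: first control the eigenvalue counting function $\cN(E) \deq \#\{\alpha \st \lambda_\alpha \leq E\}$ uniformly in $E$, then invert this estimate to pointwise eigenvalue rigidity, with the norm bound \eqref{443} used separately to treat the eigenvalues closest to $\pm 2$.

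For the first part, I would fix $E \in [-5,5]$, pick a smooth cutoff $\chi_E$ approximating $\ind{x \leq E}$, and use the Helffer-Sj\"ostrand formula to express
\begin{equation*}
\cN(E) - N n_{sc}(E) \;=\; N \int\!\!\int \partial_{\bar z} \wt\chi_E(z) \, \pb{m(z) - m_{sc}(z)} \, \dd x \, \dd y \;+\; (\text{smoothing error}),
\end{equation*}
where $\wt\chi_E$ is an almost analytic extension. With the $\eta = \im z$ integration cut off at $\eta_0 \deq N^{-1}(\log N)^{10L}$, the bound \eqref{Lambdafinal} controls the integrand by $(\log N)^{4L}/(N \eta)$ on $\bS_L$, and the $\eta$-integration contributes only an extra logarithm. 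A boundary term at $\eta = \eta_0$ can be absorbed by summing \eqref{Lambdaodfinal} over eigenvalues in $[E - \eta_0, E + \eta_0]$ to bound $\im m$ there. Together these yield $\sup_{E \in [-5,5]} \absb{\cN(E) - N n_{sc}(E)} \leq (\log N)^{CL}$ with probability at least $1 - \me^{-c(\log N)^{\tau L}}$.

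To invert this counting estimate, set $K \deq \lceil (\log N)^{CL} \rceil$. Monotonicity of $\cN$ and $N n_{sc}$ gives $\gamma_{\alpha - K} \leq \lambda_\alpha \leq \gamma_{\alpha + K}$ whenever these shifted indices are in range. From the explicit formula \eqref{def:sc} one sees $\varrho_{sc}(\gamma_\alpha) \asymp [\min(\alpha, N - \alpha + 1)/N]^{1/3}$, hence
\begin{equation*}
\absb{\gamma_{\alpha \pm K} - \gamma_\alpha} \;\leq\; C\, \frac{K}{N \varrho_{sc}(\gamma_\alpha)} \;\leq\; C\, K \, \qb{\min(\alpha, N-\alpha+1)}^{-1/3} N^{-2/3}
\end{equation*}
as long as $\min(\alpha, N - \alpha + 1) \geq 2K$. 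Absorbing the factor $K$ into the $(\log N)^L$ prefactor by slightly enlarging $A_0$ yields \eqref{rigidity} for such $\alpha$. For the remaining indices very close to the edges, I would invoke \eqref{443}, which confines all eigenvalues to $\qb{-2 - N^{-2/3}(\log N)^{9L},\, 2 + N^{-2/3}(\log N)^{9L}}$; combined with the counting estimate applied to sub-intervals of this set, this pins down $\absb{\lambda_\alpha - \gamma_\alpha} \leq C(\log N)^{CL} N^{-2/3}$ whenever $\min(\alpha, N-\alpha+1) < 2K$, matching \eqref{rigidity} in this regime. A union bound over $\alpha$ preserves the stated failure probability since $\me^{-c(\log N)^{\tau L}}$ decays faster than any power of $N$.

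The main technical obstacle is the edge treatment: the square-root vanishing of $\varrho_{sc}$ at $\pm 2$ makes the naive inversion of the counting estimate degrade for very small $\alpha$, and the norm bound \eqref{443} is the essential extra input that produces the optimal $N^{-2/3}$ scaling there. A subsidiary challenge in the counting step is making the Helffer-Sj\"ostrand estimate genuinely uniform in $E$, including $E$ outside $[-2,2]$, where again \eqref{443} controls the tail contributions; and tracking the logarithmic factors so that the overall exponent of $\log N$ can be absorbed by the choice $L = A_0 \log\log N$ with $A_0$ sufficiently large.
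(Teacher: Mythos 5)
The paper does not actually prove Theorem \ref{7.1}: both Theorem \ref{45-1} and Theorem \ref{7.1} are imported verbatim from \cite{EYYrigi} (``The following two results were proved in \cite{EYYrigi}''), so there is no internal proof to compare against. That said, your proposal reconstructs essentially the argument of that reference: Helffer--Sj\"ostrand control of the counting function $\cN(E)-Nn_{sc}(E)$ from the local law \eqref{Lambdafinal}, inversion of the counting estimate using $\varrho_{sc}(\gamma_\alpha)\asymp [\min(\alpha,N-\alpha+1)/N]^{1/3}$, and the norm bound \eqref{443} to handle the extreme indices --- so the approach is the right one and the plan is sound. The one place where your write-up is glib and where the details genuinely need checking is the exponent bookkeeping at the crossover $\min(\alpha,N-\alpha+1)\approx 2K$ with $K=(\log N)^{C_1L}$: there the target bound $(\log N)^{L}\,[\min(\alpha,N-\alpha+1)]^{-1/3}N^{-2/3}$ is \emph{smaller} than $(\log N)^{L}N^{-2/3}$ by a factor $(\log N)^{-C_1L/3}$, so the crude edge estimate $C(\log N)^{CL}N^{-2/3}$ does not trivially imply it; one must verify that after replacing $A_0$ by a sufficiently large multiple of the $A_0$ for which Theorem \ref{45-1} holds (so that the new $L$ dominates all the fixed exponents $4L$, $9L$, $10L$ appearing there, including in the regime $\min(\alpha,N-\alpha+1)\sim K$), the inequalities close, and that the probability bound survives with a smaller $\tau$. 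This works out, but it is a finite computation rather than a one-line absorption, and you should carry it out explicitly rather than asserting it.
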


A simple consequence of Theorem \ref{45-1} is that the eigenvectors of $H$ are completely delocalized.

\begin{theorem}[Complete delocalization of eigenvectors] \label{DTH}
Under the assumptions of Theorem \ref{45-1} we have
\be\label{71.1}
\P \hbb{\exists\, \al, i \st  |u_\al(i)|^2\geq \frac{\varphi^C}{N} }
 \;\le\; \me^{-c (\log N)^{\tau L}}
\ee
for some positive constants $C$ and  $c$.
\end{theorem}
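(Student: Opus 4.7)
The plan is to extract eigenvector delocalization from the local semicircle law via the standard spectral decomposition trick: control the diagonal Green function near the spectrum and use that a single term in the spectral sum gives a matching lower bound.

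\textbf{Step 1 (spectral identity).} For any $z = E + \ii \eta$ with $\eta > 0$, the spectral decomposition of $G$ yields
\begin{equation*}
\im G_{ii}(E + \ii \eta) \;=\; \sum_{\alpha = 1}^N \frac{\eta \,\abs{u_\alpha(i)}^2}{(E - \lambda_\alpha)^2 + \eta^2}\,.
\end{equation*}
In particular, keeping only the $\alpha$-th term and taking $E = \lambda_\alpha$ gives the crucial pointwise lower bound
\begin{equation*}
\im G_{ii}(\lambda_\alpha + \ii \eta) \;\ge\; \frac{\abs{u_\alpha(i)}^2}{\eta}\,,
\end{equation*}
so that $\abs{u_\alpha(i)}^2 \le \eta \, \im G_{ii}(\lambda_\alpha + \ii \eta)$ deterministically.

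\textbf{Step 2 (choice of scale and Green function bound).} I would choose $\eta \deq \varphi^{C_1}/N$ for a constant $C_1$ large enough so that $\eta \ge N^{-1}(\log N)^{10L}$; since $L = A_0 \log \log N$ and $\varphi = (\log N)^{\log \log N}$, any $C_1 > 10 A_0$ works for $N$ large. For every $E$ with $\abs{E} \le 5$, the point $z = E + \ii \eta$ then lies in $\f S_L$ defined in \eqref{defS}. By \eqref{temp2.8}, $\abs{m_{sc}(z)} \le 1$ uniformly in $z$. By Theorem \ref{45-1}(ii), there is an event $\Omega$ of high probability on which, for every such $z$,
\begin{equation*}
\abs{G_{ii}(z)} \;\le\; \abs{m_{sc}(z)} + \Lambda_d(z) \;\le\; 1 + (\log N)^{4L} \sqrt{\tfrac{\im m_{sc}(z)}{N\eta}} + \tfrac{(\log N)^{4L}}{N\eta} \;\le\; C\,,
\end{equation*}
because with the above choice of $\eta$ both error terms are bounded (indeed $o(1)$).

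\textbf{Step 3 (combining and union bound).} On the intersection of $\Omega$ with the high-probability event $\{\norm{H} \le 2 + N^{-2/3}(\log N)^{9L}\}$ from Theorem \ref{45-1}(iii), every eigenvalue $\lambda_\alpha$ lies in $[-5,5]$, so $z_\alpha \deq \lambda_\alpha + \ii \eta \in \f S_L$ and the bound of Step 2 applies at $z_\alpha$. Together with Step 1 this yields
\begin{equation*}
\abs{u_\alpha(i)}^2 \;\le\; \eta \, \im G_{ii}(z_\alpha) \;\le\; C \eta \;=\; \frac{C \varphi^{C_1}}{N}
\end{equation*}
simultaneously for every $\alpha$ and every $i$; enlarging $C_1$ absorbs the prefactor $C$. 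No union bound over $i$ or $\alpha$ is needed since the bound in Theorem \ref{45-1}(ii) is already uniform over $z \in \f S_L$ on a single high-probability event. The failure probability is controlled by $\me^{-c(\log N)^{\tau L}}$ from Theorem \ref{45-1}, which is smaller than $\me^{-\varphi^c}$ for $c$ small.

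There is no substantive obstacle here; the only subtlety is matching scales, namely verifying that one can take $\eta = \varphi^{C_1}/N$ (slightly above the threshold $N^{-1}(\log N)^{10L}$ required by $\f S_L$) so that the bound from the local semicircle law is still applicable and yet $\eta$ is small enough to give a meaningful delocalization estimate $\varphi^{C_1}/N$. This is possible precisely because $\varphi$ grows faster than any power of $\log N$.
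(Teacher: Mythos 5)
Your proof is correct and follows essentially the same route as the paper: keep a single term in the spectral decomposition of $\im G_{ii}(\lambda_\alpha+\ii\eta)$, bound the Green function via Theorem \ref{45-1}(ii) (together with (iii) to place $\lambda_\alpha$ in the spectral domain), and choose $\eta$ slightly above the threshold of $\f S_L$. The paper takes $\eta = N^{-1}(\log N)^{20L}$, which equals $\varphi^{20A_0}/N$ and is thus the same choice as yours up to the value of the constant $C_1$.
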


\begin{proof}
Using \eqref{Lambdaodfinal} and \eqref{443} we have, with probability greater than $1 - \me^{-c (\log N)^{\tau L}}$,
\begin{align*}
C \;\geq\; \im G_{ii}(\lambda_\alpha + \ii \eta) \;=\; \sum_{\beta} \frac{\eta \abs{u_\beta(i)}^2}{(\lambda_\alpha - 
\lambda_\beta)^2 + \eta^2} \;\geq\; \frac{\abs{u_\alpha(i)}^2}{\eta}\,.
\end{align*}
Choosing $\eta = N^{-1} (\log N)^{20 L}$ yields the claim.
\end{proof}

\subsection{Stability of the level repulsion condition}
In this section we prove that level repulsion, in the sense of \eqref{lre} (respectively \eqref{lrb}), holds for the 
ensemble $H^{\f w}$ provided it holds for the ensemble $H^{\f v}$ and the first two (respectively four) moments of the 
entries of $H^{\f v}$ and $H^{\f w}$ match.

\begin{proposition}[Stability of level repulsion at the edge] \label{leml}
Let $H^{\f v}$ and $H^{\f w}$ both satisfy Assumptions {\bf(A)} -- {\bf (C)} as well as the uniform subexponential decay condition \eqref{subexp}. Assume moreover that the first two moments of the entries of $H^\bv$ and $H^\bw$ are the same, in the sense of \eqref{2m}.
If the level repulsion estimate \eqref{lre} holds for $H^{\f v}$ then it holds for $H^{\f w}$.
\end{proposition}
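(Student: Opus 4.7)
The plan is to reformulate the event $\{\cN(E-N^{-2/3-\alpha},E+N^{-2/3-\alpha})\ge 2\}$ as (approximately) the superlevel set of a smooth functional of the Green function of $H$, and then to transfer the expectation of this functional across the two ensembles by a Lindeberg-type Green function comparison in the spirit of \cite{EYYrigi}.

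Fix $E$ with $|E+2|\le N^{-2/3}\varphi^C$ and $0<\alpha<\alpha_0$, pick an auxiliary $\delta_0>0$ sufficiently small, and set $a\deq N^{-2/3-\alpha}$, $a'\deq(1+N^{-\delta_0})a$, and $\eta\deq N^{-2/3-\alpha-2\delta_0}\ll a$. Choose a smooth bump $\chi$ with $\chi\equiv 1$ on $[E-a,E+a]$, $\supp\chi\subset[E-a',E+a']$, and $\|\chi^{(k)}\|_\infty\le C_k(N^{\delta_0}/a)^k$, and define
\begin{equation*}
X(H)\;\deq\;\int\chi(x)\,\tfrac{1}{\pi}\im\tr G(x+\ii\eta)\,\dd x,
\end{equation*}
which depends on $H$ only through the Green function at spectral parameters of imaginary part $\eta\gg N^{-1}$. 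Explicit computation of the Lorentzian integrals $\int\chi(x)\frac{\eta/\pi}{(x-\lambda_\beta)^2+\eta^2}\,\dd x$, combined with the rigidity Theorem \ref{7.1} to control tail contributions from eigenvalues outside $[E-a',E+a']$, will yield
\begin{equation*}
\cN(E-a,E+a)(1-N^{-c})-N^{-c}\;\le\;X(H)\;\le\;\cN(E-a',E+a')+N^{-c}
\end{equation*}
on an event of probability at least $1-\me^{-\varphi^c}$ for either ensemble.

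Take a smooth $K\colon\R\to[0,1]$ with $K\equiv 0$ on $(-\infty,5/4]$ and $K\equiv 1$ on $[7/4,\infty)$. On the sandwich event the two bounds above imply
\begin{equation*}
\ind{\cN(E-a,E+a)\ge 2}\;\le\;K(X(H))\;\le\;\ind{\cN(E-a',E+a')\ge 2},
\end{equation*}
so $\P^\bw(\cN(E-a,E+a)\ge 2)\le\E^\bw K(X(H))+\me^{-\varphi^c}$, while the level repulsion hypothesis applied to $H^\bv$ at the slightly smaller exponent $\alpha'\deq\alpha-O(N^{-\delta_0}/\log N)$ (still in $(0,\alpha_0)$) gives $\E^\bv K(X(H))\le N^{-\alpha-\delta_1}+\me^{-\varphi^c}$ for some $\delta_1>0$. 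The proposition therefore reduces to the Green function comparison estimate
\begin{equation*}
\absb{\E^\bv K(X(H))-\E^\bw K(X(H))}\;\le\;N^{-\alpha-\delta_2}
\end{equation*}
for some $\delta_2>0$; taking $\delta\deq\min(\delta_1,\delta_2)/2>0$ then yields \eqref{lre} for $H^\bw$.

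This Green function comparison is the heart, and the main obstacle, of the argument. I would apply the Lindeberg entry-swap strategy, replacing entries of $H^\bv$ by those of $H^\bw$ one at a time; by the two-moment matching \eqref{2m}, the contribution of each swap is controlled by the third-and-higher moments of the swapped entry (of size $O(N^{-3/2})$ and smaller) multiplied by the corresponding higher-order derivatives of $K\circ X$ with respect to that entry. These derivatives expand into sums of products of a bounded number of Green function matrix elements $G_{ij}(x+\ii\eta)$ with $x\in\supp\chi$; bounded via Theorem \ref{45-1} and summed over the $\sim N^2$ entries, the total error must be shown to be polynomially smaller than $N^{-\alpha}$. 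The decisive input is the off-diagonal edge estimate $|G_{ij}(x+\ii\eta)|\lesssim N^{-1/3+o(1)}$, which supplies the polynomial smallness needed to beat $N^{-\alpha}$. This power counting is a simplified version of the same mechanism that underlies the two-moment edge comparison in \cite{EYYrigi} and the proof of Theorem \ref{t1}, from which the required comparison can essentially be read off.
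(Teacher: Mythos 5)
Your proposal follows essentially the same route as the paper: both sandwich the counting function $\cN$ between smoothed spectral counts expressed through $\im\tr G$ at a scale $\eta\ll N^{-2/3-\alpha}$ (the paper via Lemma \ref{lem:21} and the convolution $\tr(\chi\ast\theta_\eta)(H)$, you via a smooth bump), compose with a smooth step function, and transfer the expectation by a two-moment Green function comparison at the edge. The only difference is that the paper invokes this comparison as a ready-made black box (Lemma \ref{GFCT}, i.e.\ Theorem 6.3 of \cite{EYYrigi}) rather than re-deriving the Lindeberg swap and power counting as you sketch; your version is correct modulo the routine insertion of one extra buffer scale in the upper half of the sandwich, so that eigenvalues sitting exactly at the endpoints of $\supp\chi$ (which contribute $O(1)$ to $X(H)$) are absorbed into the enlarged count.
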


\begin{proposition}[Stability of level repulsion in the bulk] \label{lemlb}
Let $H^{\f v}$ and $H^{\f w}$ both satisfy Assumptions {\bf(A)} -- {\bf (C)} as well as the uniform subexponential decay condition \eqref{subexp}. Assume moreover that the first four moments of the entries of $H^\bv$ and $H^\bw$ are the same, in the sense of \eqref{2m1} and \eqref{2m2}.
If the level repulsion estimate \eqref{lrb} holds for $H^{\f v}$ then it holds for $H^{\f w}$.
\end{proposition}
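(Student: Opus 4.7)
The plan is to reduce the estimate to a Green function comparison, following the blueprint outlined for Proposition \ref{leml}. Fix a small $\alpha > 0$, $E \in [-2+\kappa, 2-\kappa]$, and set $\eta_0 \deq N^{-1-\alpha}$, $I \deq [E-\eta_0, E+\eta_0]$. The goal is to show $\P^{\f w}(\cN(I) \geq 2) \leq N^{-\alpha-\delta}$ for some $\delta > 0$. Three steps: (a) approximate $\ind{\cN(I) \geq 2}$ by a smooth function of $H$ expressed through Green function entries; (b) apply a Lindeberg-style Green function comparison using the four-moment matching hypothesis; (c) invoke the level repulsion \eqref{lrb} on $H^{\f v}$. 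The four-moment matching is essential: it enables a per-entry replacement error of order $N^{-5/2}$, and after summing over $\sim N^2$ entries yields a net error $N^{-1/2+C\epsilon}$ that comfortably dominates the target $N^{-\alpha-\delta}$ when $\alpha$ is small.

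Pick a smooth $\chi : \R \to [0,1]$ with $\chi \equiv 1$ on $I$, $\supp \chi \subseteq \tilde I \deq [E - 2\eta_0, E + 2\eta_0]$, and $\|\chi^{(k)}\|_\infty \leq C_k \eta_0^{-k}$, and set $X \deq \tr \chi(H)$, so that $\cN(I) \leq X \leq \cN(\tilde I)$. Using the Helffer--Sj\"ostrand functional calculus,
\[
X \;=\; \frac{N}{\pi} \int_{\bC} \pb{\partial_{\bar z} \tilde \chi(z)} m(z) \, \dd^2 z,
\]
where $\tilde \chi$ is an almost-analytic extension of $\chi$ supported in $\{|\im z| \leq 1\}$ with $\partial_{\bar z} \tilde \chi = O((\im z)^K)$ for $K$ arbitrarily large. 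Let $q : \R \to [0,1]$ be smooth with $q \equiv 0$ on $(-\infty, 3/2]$ and $q \equiv 1$ on $[2, \infty)$, so that $\ind{\cN(I) \geq 2} \leq q(X) \leq \ind{\cN(\tilde I) \geq 2}$. A Green function comparison argument of the type developed in Section 6 of \cite{EYYrigi} then yields
\[
\absb{\E^{\f w}[q(X)] - \E^{\f v}[q(X)]} \;\leq\; N^{-1/2+C\epsilon}.
\]
Indeed, the per-entry replacement error is $O(N^{-5/2}) \sup |\partial_{h_{ij}}^5 q(X)|$; the resolvent identity $\partial_{h_{ij}} G_{ab}(z) = -G_{ai}(z) G_{jb}(z)$, combined with Theorem \ref{45-1} at scales $|\im z| \geq N^{-1+c}$ and the cutoff $(\im z)^K$ of $\partial_{\bar z} \tilde \chi$ taming the small-$\im z$ contribution to the Helffer--Sj\"ostrand integral, gives $\sup |\partial_{h_{ij}}^k q(X)| \leq N^{C\epsilon}$ for $k \leq 5$.

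It remains to bound $\E^{\f v}[q(X)] \leq \P^{\f v}(\cN(\tilde I) \geq 2) \leq N^{-\alpha-\delta_1}$. This is obtained from a uniform version of \eqref{lrb} (derived as in \eqref{lr1}) together with the rigidity of Theorem \ref{7.1}: in the bulk the classical positions $\gamma_\alpha$ are spaced by $\Theta(1/N) \gg \eta_0$, so on the overwhelming event of rigidity at most one index $\alpha$ satisfies $\gamma_\alpha \in \tilde I$; a second eigenvalue in $\tilde I$ therefore requires either two $\gamma_\alpha$'s in $\tilde I$ (a small-probability rigidity-violating event) or a ``close collision'' of two eigenvalues falling into a single interval of size $2\eta_0$, which is controlled by \eqref{lrb}. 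The main obstacle is precisely this step: although $\tilde I$ has size $4\eta_0$ exceeding the $2\eta_0$ scale of \eqref{lrb}, a careful partitioning of $\tilde I$ into overlapping sub-intervals of length $2\eta_0$ covers all close pairs, while the remaining ``far'' configurations with both eigenvalues in $\tilde I$ but separated by more than $2\eta_0$ are excluded using rigidity. Combining this with the comparison of the previous paragraph yields $\P^{\f w}(\cN(I) \geq 2) \leq N^{-\alpha-\delta}$, establishing Proposition \ref{lemlb}.
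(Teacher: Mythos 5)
Your overall architecture is the right one and matches the paper's (the paper proves only the edge case, Proposition \ref{leml}, in detail: smooth the counting function, express it through Green functions, run a moment-matching comparison, and transfer the bound from $H^{\f v}$; your Helffer--Sj\"ostrand representation plus a direct fourth-order Lindeberg swap is a legitimate substitute for the paper's convolution $\tr(\chi\ast\theta_{\tilde\eta})(H)=\frac{N}{\pi}\int\im m$ combined with the prepackaged comparison theorem of \cite{EYYrigi}). Step (b) is essentially sound, though note that the derivative bounds inherit factors of $\|\chi^{(k)}\|_\infty\sim\eta_0^{-k}$ with $\eta_0=N^{-1-\alpha}$, so $\sup|\partial^k_{h_{ij}}q(X)|$ is really $N^{C\alpha}$ rather than $N^{C\epsilon}$; this is harmless for small $\alpha$, and you correctly flag that Lemma \ref{Letasc} and the $(\im z)^K$ damping are needed to control the small-$\im z$ part of the Helffer--Sj\"ostrand integral.

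The genuine flaw is in step (c), precisely where you locate ``the main obstacle.'' Rigidity (Theorem \ref{7.1}) localizes a bulk eigenvalue only to within $(\log N)^{L}N^{-1}$ of its classical location, which exceeds $|\tilde I|=4N^{-1-\alpha}$ by a factor $(\log N)^{L}N^{\alpha}$. Consequently the event $\{\lambda_\beta\in\tilde I,\ \gamma_\beta\notin\tilde I\}$ is in no way rigidity-violating, and rigidity cannot exclude two eigenvalues in $\tilde I$ separated by more than $2\eta_0$: that at most one $\gamma_\alpha$ lies in $\tilde I$ says nothing about how many $\lambda_\alpha$ do. Likewise, the uniform version \eqref{lr1} cannot be used here, since the union bound over $\sim N^{\alpha}$ subintervals costs exactly the factor $N^{-\alpha}$ you must retain. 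Fortunately the obstacle you are fighting is not really there: any two eigenvalues in $\tilde I$ both lie within $2N^{-1-\alpha}\le N^{-1-\alpha'}$ of the single deterministic point $E$ for any fixed $\alpha'<\alpha$ and $N$ large, so \eqref{lrb} applied at $E$ with the marginally smaller exponent $\alpha'$ gives $\P^{\f v}\pb{\cN(\tilde I)\ge 2}\le N^{-\alpha'-\delta(\alpha')}$, which is $\le N^{-\alpha-\delta_1}$ once $\alpha-\alpha'<\delta(\alpha')$. (This constant-factor enlargement of the window is the same liberty the paper itself takes when passing from $[E_-,E_+]$ to $[E_--2\ell,E_++2\ell]$ in the proof of Proposition \ref{leml}; it implicitly assumes $\delta(\cdot)$ does not degenerate as $\alpha'\uparrow\alpha$.) Replace the rigidity/covering discussion by this one-line monotonicity argument and the proof closes. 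One further small slip: you want the comparison error $N^{-1/2+C\epsilon}$ to be dominated \emph{by} the target $N^{-\alpha-\delta}$, not to dominate it.
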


The proofs of Propositions \ref{leml} and \ref{lemlb} are very similar. For definiteness, we give the details for the 
edge case (Proposition \ref{leml}). The rest of this section is devoted to the proof of Proposition \ref{leml}.  The 
main tool is the following Green function comparison theorem, which was proved in \cite{EYYrigi}, Theorem 6.3.

\begin{lemma} [Green function comparison theorem at the edge] \label{GFCT}
Suppose that  the assumptions of Theorem  \ref{45-1} hold for both ensembles $H^{\f v}$ and $H^{\f w}$.
Let $F:\R \to \R$ be a function whose derivatives satisfy
\be\label{gflowder}
\max_{x} \, \absb{F^{(n)}(x)} \, (1 + \abs{x})^{-C_1} \;\leq\; C_1 \for n=1,2,3,4\,,
\ee
with some constant $C_1>0$.
Then there exists a constant $\e_0>0$, depending only on $C_1$, such that for any $\e<\e_0$
and for any real numbers $E_1$ and $E_2$ satisfying
$$
|E_1+2|\leq N^{-2/3+\eps}, \qquad |E_2+2|\leq N^{-2/3+\eps}, \qquad
$$
we have
\be\label{c1}
\absBB{ \qb{\E^{\f v} - \E^{\f w}}  F \pbb{N \int_{E_1}^{E_2} \rd y \;  \im m \pb{y + \ii N^{-2/3 - \epsilon}}}} \;\leq\; 
C N^{-1/6+C \e}
\ee
and
\begin{equation} \label{c2}
\absB{\qb{\E^\bv - \E^\bw} F\pB{N \eta \, \im m(E_1 + \ii N^{-2/3 + \epsilon})}} \;\leq\; C N^{-1/6 + C \epsilon}\,,
\end{equation}
for some constant $C$ and large enough $N$, depending only on $C_1$, $\ttau$ in \eqref{subexp}, $\delta_\pm$ in Assumption {\bf (B)}, and $C_0$ in Assumption {\bf (C)}.
\end{lemma}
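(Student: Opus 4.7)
The plan is to carry out a Lindeberg-type Green function comparison, interpolating between $H^{\f w}$ and $H^{\f v}$ by replacing one matrix entry at a time. Fix an enumeration $\gamma=1,\dots,\gamma(N)$ of the upper-triangular index pairs with $\gamma(N) = N(N+1)/2$, and let $H_\gamma$ denote the matrix that agrees with $H^{\f w}$ on the first $\gamma$ pairs and with $H^{\f v}$ on the remaining ones, so that $H_0 = H^{\f v}$ and $H_{\gamma(N)} = H^{\f w}$. Denoting by $Y_\gamma$ the functional of $G_\gamma = (H_\gamma - z)^{-1}$ that appears inside $F$ in \eqref{c1} or \eqref{c2}, the telescoping identity
\begin{equation*}
\qb{\E^{\f v} - \E^{\f w}} F(Y) \;=\; \sum_{\gamma=1}^{\gamma(N)} \qB{\E F(Y_\gamma) - \E F(Y_{\gamma-1})}
\end{equation*}
reduces the task to showing that each individual summand is of size $O(N^{-13/6+C\epsilon})$, so that the sum over the $O(N^2)$ pairs is $O(N^{-1/6+C\epsilon})$ as required.

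For a single replacement at a fixed index pair $(i,j)$, let $\wt G$ denote the Green function of the matrix obtained from $H_\gamma$ by zeroing $h_{ij}$ and $h_{ji}$, and let $V$ be the matrix with $1$ at position $(i,j)$ and its conjugate at $(j,i)$. Use the resolvent expansion
\begin{equation*}
G_\gamma \;=\; \wt G + \sum_{k=1}^{4} (-h_{ij})^k \wt G (V \wt G)^k + R_5
\end{equation*}
to write $Y_\gamma$ as a polynomial in $h_{ij}$ through fourth order, then Taylor-expand $F(Y_\gamma)$ about $F(\wt Y)$ to the same order. Taking expectations, the zeroth, first, and second order terms depend only on $\E \bar h_{ij}^l h_{ij}^u$ with $l+u \le 2$, so by the matching hypothesis \eqref{2m} they are identical under $\E^{\f v}$ and $\E^{\f w}$ and cancel in the difference. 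What survives is controlled by $\E|h_{ij}|^3 \le C N^{-3/2}$, $\E|h_{ij}|^4 \le C N^{-2}$, and a fifth-order remainder bounded via the crude resolvent bound $\|G\|\le N$ combined with subexponential tails and \eqref{443}.

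The surviving terms are products of a few resolvent entries of $\wt G$ times derivatives of $F$ of order at most four (which is exactly why \eqref{gflowder} is imposed up to $n=4$). Here the local semicircle law \eqref{Lambdaodfinal} is the key input: at $\eta = N^{-2/3-\epsilon}$ with $E$ in a window of width $N^{-2/3+\epsilon}$ around $-2$ one has $\im m_{sc} = O(N^{-1/3+C\epsilon})$, so $|\wt G_{pq}|=O(N^{-1/3+C\epsilon})$ for $p\ne q$ and $|\wt G_{pp}-m_{sc}|=O(N^{-1/3+C\epsilon})$ with high probability. Moreover both $N\int_{E_1}^{E_2} \im m_{sc}(y+\ii\eta)\,dy$ and $N\eta\,\im m_{sc}(E_1+\ii N^{-2/3+\epsilon})$ are of order $N^{C\epsilon}$, so $Y$ stays of size $N^{C\epsilon}$ and the polynomial growth \eqref{gflowder} keeps $F$ and its first four derivatives at size $N^{C\epsilon}$. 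Combining these, each third-order contribution is of size $N^{-3/2}\cdot N^{-2/3+C\epsilon} = N^{-13/6+C\epsilon}$, matching the per-step target, while fourth-order terms of size $N^{-2+C\epsilon}$ are already summable over the $O(N^2)$ replacements.

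The main obstacle is the careful bookkeeping of how the $h_{ij}$-derivatives land. When $\partial_{h_{ij}}$ hits the factor $\im m = N^{-1}\sum_p \im \wt G_{pp}$ (or a single $\im \wt G_{pp}$ in the case of \eqref{c2}), it can produce either diagonal or off-diagonal entries of $\wt G$ depending on which term of the resolvent expansion is considered, and one must verify that each differentiation genuinely extracts an off-diagonal factor, yielding the $N^{-1/3+C\epsilon}$ saving per order beyond the trivial Lindeberg estimate. Once this is established for the integrated observable in \eqref{c1}, the pointwise observable in \eqref{c2} is handled identically since at scale $\eta = N^{-2/3+\epsilon}$ we have $N\eta\,\im m_{sc} = O(N^{C\epsilon})$, of the same order as the integrated quantity, and the same per-step estimate applies verbatim.
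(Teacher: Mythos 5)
The paper does not actually prove Lemma \ref{GFCT}: it quotes it from \cite{EYYrigi}, Theorem 6.3. Your overall architecture (telescoping over the $N(N+1)/2$ index pairs, fourth-order resolvent and Taylor expansion, cancellation of the zeroth through second order by the two-moment matching, crude treatment of the remainder) is the correct and standard one. The gap is in your power counting for the surviving third-order term. Write $\wt G$ for the resolvent of the matrix with $h_{ij}$ set to zero and consider the leading part of $\partial_{h_{ij}}^3\pb{N\int \im m\,\dd y}$. Among its terms is
\begin{equation*}
\int_{E_1}^{E_2}\dd y\,\sum_p \wt G_{pi}\,\wt G_{jj}\,\wt G_{ii}\,\wt G_{jp}
\;=\;\int_{E_1}^{E_2}\dd y\;\wt G_{ii}\wt G_{jj}\,\partial_y\wt G_{ji}
\;=\;\bigl[\wt G_{ii}\wt G_{jj}\wt G_{ji}\bigr]_{E_1}^{E_2}+O(N^{-2/3+C\e})\,,
\end{equation*}
in which the two resolvent factors created by the middle differentiations are \emph{diagonal}, hence $O(1)$. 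After the $p$-summation and the $y$-integration this term carries only a single off-diagonal factor $\wt G_{ji}=O(N^{-1/3+C\e})$, not the $N^{-2/3+C\e}$ you assert. The high-probability bounds from Theorem \ref{45-1} therefore give only $N^{-3/2}\cdot N^{-1/3+C\e}=N^{-11/6+C\e}$ per replacement, which after summing over $O(N^2)$ pairs yields $N^{+1/6+C\e}$ and fails to prove \eqref{c1}. Your assertion that ``each differentiation genuinely extracts an off-diagonal factor'' is false for precisely this family of terms, and it is the dominant one.

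The missing ingredient is that the \emph{expectation} of a product containing a single unmatched off-diagonal entry $\wt G_{ji}$ is smaller than its typical size by an additional factor $N^{-1/3+C\e}$. One expands $\wt G_{ji}=-\wt G_{ii}\sum_{k\neq i}\wt G^{(i)}_{jk}h_{ki}$ as in \eqref{376}, writes $\wt G_{ii}=m_{sc}+O(N^{-1/3+C\e})$ and replaces the accompanying factors (including $F'(\wt Y)$) by their versions with row and column $i$ removed, at a cost of $N^{-1/3+C\e}$ per replacement; the leading term is then linear in the mean-zero entries of row $i$ and independent of everything else, so its expectation vanishes. This yields $\absb{\E\qb{F'(\wt Y)\,\wt G_{ii}\wt G_{jj}\wt G_{ji}}}\leq N^{-2/3+C\e}$ and restores the required $N^{-13/6+C\e}$ per step. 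It is exactly the device this paper deploys in its own comparison argument in the proof of \eqref{368}; see \eqref{373}--\eqref{375} and the discussion around \eqref{splitting of S}. Without this step (or some equivalent cancellation) your argument stalls at $N^{+1/6+C\e}$.
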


The basic idea behind the proof of Proposition \ref{leml} is to first cast the level repulsion estimate into an 
estimate in terms of Green functions and then use the Green function comparison theorem.
Recalling $L$ from Definition \ref{def:log factors},
we set
\be\label{defEL}
E_L \;\deq\; -2-2(\log N)^{L}N^{-2/3}\,.
\ee
For any $E \geq E_L$ let
\[
\chi_E \;\deq\; {\bf 1}_{[E_L, E]}
\]
be the characteristic function of the interval $[E_L, E]$. For
any $\eta>0$ we define the approximate delta function $\theta_\eta$ on the scale $\eta$ through
\be\label{thetam}
\theta_\eta(x) \;\deq\; \frac{\eta }{\pi(x^2+\eta^2)} \;=\; \frac{1}{\pi} \im \frac{1}{x-i\eta}\,.
\ee
The following result provides a tool for estimating the number operator using Green functions. It is proved in 
\cite{EYYrigi}, Lemma 6.1 and Corollary 6.2.

\begin{lemma}\label{lem:21}
Suppose that the assumptions of Theorem \ref{45-1} hold, and let $A_0$ and $\tau$ be as in Theorem \ref{45-1}.
For any  $\e>0$, set $\ell_1:=N^{-2/3-3\e}$ and $\eta:=N^{-2/3-9\e}$. Then there exist constants $C, c$ such that,
for any $E$ satisfying
\be\label{E-2N}
\abs{E+2} \, N^{2/3} \;\leq\; \frac 3 2(\log N)^L\,,
\ee
we have
\be\label{6.10}
\absb{\tr \chi_E(H) - \tr  \chi_E  \ast \theta_\eta (H)}  \;\le\; C\left( N^{-2\e}  +   \cN (E-\ell_1, E+\ell_1)  
\right)
\ee
\hp.

Moreover, let $\ell: = \frac{1}{2}\ell_1 N^{ 2\e} = \frac{1}{2}N^{-2/3 - \e}$. Then under the above assumptions the 
inequalities
\be\label{41new}
\tr (\chi_{E - \ell}  \ast \theta_\eta) (H)  -  N^{-\e} \;\le\;  \cN (-\infty, E)  \;\le\;  \tr  (\chi_{E + \ell  }  
\ast \theta_\eta) (H)  +  N^{-\e}
\ee
hold \hp.
\end{lemma}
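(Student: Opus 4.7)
The lemma has two parts and both rely on the same basic idea: the smoothed indicator $\chi_E * \theta_\eta$ differs from the sharp indicator $\chi_E$ only near the two endpoints $E$ and $E_L$, and the contribution from the lower endpoint $E_L$ is negligible because, by the edge norm bound \eqref{443} and rigidity \eqref{rigidity}, no eigenvalues lie anywhere near $E_L$ with high probability.

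\textbf{Plan for \eqref{6.10}.} I would write
\begin{equation*}
\tr \chi_E(H) - \tr\pb{\chi_E * \theta_\eta}(H) \;=\; \sum_\alpha \qb{\chi_E(\lambda_\alpha) - \pb{\chi_E * \theta_\eta}(\lambda_\alpha)}
\end{equation*}
and use the explicit formula $(\chi_E * \theta_\eta)(\lambda) = \pi^{-1}\qb{\arctan((\lambda-E_L)/\eta) - \arctan((\lambda-E)/\eta)}$. Using $\arctan(\pm x/\eta) = \pm\pi/2 \mp \eta/x + O(\eta^3/x^3)$ for $x \gg \eta$, the pointwise error $|f(\lambda)| := |\chi_E(\lambda) - (\chi_E * \theta_\eta)(\lambda)|$ is:
(i) bounded by $1$ always;
(ii) of size $O\pb{\eta(E-E_L)/[(E-\lambda)(\lambda-E_L)]}$ for $\lambda$ at distance at least $\ell_1$ from both $E$ and $E_L$.
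Split the eigenvalue sum into three parts: those in $[E - \ell_1, E + \ell_1]$ (bounded by $\cN(E - \ell_1, E + \ell_1)$, which is exactly the term on the right-hand side of \eqref{6.10}); those near $E_L$ (void with high probability since \eqref{443} and \eqref{rigidity} force $\lambda_1 \ge -2 - (\log N)^L N^{-2/3} > E_L + (\log N)^L N^{-2/3}$); and the remaining bulk. For the bulk, a dyadic decomposition of the distance to $E$, combined with the counting bound $\cN(E - r, E + r) \le C N r \,\im m(E + \ii r)$ and the local semicircle law \eqref{Lambdafinal} (giving $\im m \lec \sqrt{r + (\log N)^L N^{-2/3}} + (\log N)^{4L}/(Nr)$ near the edge), produces a total contribution of order $N^{-2\e}$.

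\textbf{Plan for \eqref{41new}.} I would prove only the upper inequality $\cN(-\infty, E) \le \tr(\chi_{E+\ell} * \theta_\eta)(H) + N^{-\e}$; the lower one is identical after replacing $E + \ell$ by $E - \ell$. Using the explicit $\arctan$ formula applied to $\chi_{E+\ell} * \theta_\eta$, for any $\lambda_\alpha \in [E_L + \ell, E]$ one has $(\chi_{E + \ell} * \theta_\eta)(\lambda_\alpha) \ge 1 - C \eta/\ell = 1 - C N^{-8\e}$, since both arguments of $\arctan$ are at distance at least $\ell$ from $0$. On the other hand, the total number of eigenvalues in $[E_L, E]$ is subpolynomial: by Theorem \ref{45-1} the semicircle density at the edge is at most $C N^{-1/3} (\log N)^{L/2}$, and the interval has length $O\pb{(\log N)^L N^{-2/3}}$, so $\cN(E_L, E) \le (\log N)^{2L}$. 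Multiplying the $N^{-8\e}$ per-eigenvalue error by this count gives a total of order $(\log N)^{2L} N^{-8\e} \ll N^{-\e}$, using that $L = A_0 \log \log N$ makes $(\log N)^{2L} = N^{o(1)}$. The remaining eigenvalues $\lambda_\alpha > E$ contribute a non-negative amount to $\tr(\chi_{E+\ell} * \theta_\eta)(H)$, which only strengthens the inequality.

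\textbf{Main obstacle.} The trickiest step is the bulk tail estimate in Part~1(c): one must verify that although there are up to $N$ eigenvalues in the bulk, each contributing a positive Cauchy tail, the special form $\eta (E - E_L)/[(E - \lambda)(\lambda - E_L)]$ — a \emph{product} of two tails rather than a single $\eta/|\lambda - E|$ — is enough to keep the total small after summing against the semicircle density. The key cancellation is that the tails from the two endpoints partially compensate because $E - E_L \ll 1$. Once this is handled, Part~2 is a clean consequence of Part~1 applied to shifted windows, together with the subpolynomial edge count of eigenvalues.
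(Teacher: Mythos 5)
The paper does not actually prove this lemma --- it is imported verbatim from \cite{EYYrigi} (Lemma 6.1 and Corollary 6.2) --- but your argument is a correct reconstruction of essentially that proof: pointwise comparison of $\chi_E$ with its smoothing at each eigenvalue, a dyadic decomposition in the distance to $E$ controlled by the local semicircle law, emptiness near $E_L$ by rigidity, and the crucial observation that for distant eigenvalues the error kernel is bounded by $\abs{E - E_L}\sup_y \theta_\eta(\lambda - y) \lesssim \eta \abs{E-E_L}/(\lambda-E)^2$, without which the sum over the $O(N)$ bulk eigenvalues would be of order $N\eta \gg 1$ rather than $o(N^{-2\e})$. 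The one point stated too quickly is that the lower inequality in \eqref{41new} is not literally identical to the upper one: for $\tr(\chi_{E-\ell}*\theta_\eta)(H) \le \cN(-\infty,E) + N^{-\e}$ you bound the eigenvalues below $E$ trivially by $1$ each, but must additionally control $\sum_{\lambda_\alpha > E}(\chi_{E-\ell}*\theta_\eta)(\lambda_\alpha)$, and for the far eigenvalues this again needs the product-of-tails estimate rather than the single-tail bound $\eta/(\lambda_\alpha - E + \ell)$; this is a routine repetition of your Part 1 computation, so the proof goes through.
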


After these preparations we may complete the proof of Proposition \ref{leml}.

\begin{proof}[Proof of Proposition \ref{leml}]
Assume that $H^{\f v}$ satisfies the level repulsion assumption \eqref{lre} with constant $\alpha_0$. We shall show that 
$H^{\f w}$ satisfies also satisfies Definition \eqref{lre} with the same constant $\alpha_0$. Fix $\alpha$ satisfying $0 
< \alpha \leq \alpha_0$, and let $\delta > 0$ be as chosen so that \eqref{lre} holds for the ensemble $H^{\f v}$.

Abbreviate $E_\pm = E\pm N^{-2/3 - \alpha}$ and set $\e \deq 2 \alpha$. By using \eqref{41new} for $E =E_+$ and $E = 
E_-$, and subtracting the resulting two inequalities, we get, \hp,
\be  \label{411}
 \tr ({\bf 1}_{[E_- + \ell,  E_+ - \ell] }   \ast \theta_\eta) (H)  -  2 N^{-\e}  \;\le\;  \cN (E _-, E _+)  \;\le\;  
\tr ({\bf 1}_{[E_- - \ell  , E_+ + \ell]  }  \ast \theta_\eta) (H)   +  2 N^{-\e}\,.
\ee
Let $F$ be a nonnegative increasing smooth function satisfying $F (x) = 1$ for $x \ge 2$ and $F(x) = 0$ for $ x \le 
3/2$. Then, using \eqref{411} and Lemma \ref{GFCT}, we have
\begin{align*} 
\E^{\f w}  F(\cN (E _-, E _+) ) & \;\le\;  \E^{\f w} F\pb{  \tr ({\bf 1}_{[E_- - \ell  , E_+ + \ell]  }  \ast \theta_\eta) 
(H) +  2 N^{-\e}} \\
& \;\le\; \E^{\f v}  F \pb{ \tr  ({\bf 1}_{[E_- - \ell  , E_+ + \ell]  }  \ast \theta_\eta) (H)}  +  C N^{-\e} + C 
N^{-1/6+ C \e} \\
& \;\le\; \E^{\f v}  F (\cN (E _- - 2 \ell, E _+ + 2 \ell ) + N^{-\e} )  +  C N^{-\e} + C N^{-1/6+ C \e}
\\ & \;\le\; N^{-\alpha-\delta  } +   C N^{-\e} + C N^{-1/6+ C \e}\,.
\end{align*}
Since $\e = 2 \alpha$, we get that \eqref{lre} holds for the ensemble $H^{\f w}$ with exponent $\delta' = \min\h{\delta, 
\alpha} > 0$.
\end{proof}

\section{Proof of Theorem \ref{t1}} \label{section: proof}

To simplify presentation, in this section we prove Theorem \ref{t1} in the special case $\theta = \theta \pb{ N \bar 
u_\alpha(i) u_\alpha(j)}$, where $\alpha \leq \varphi^\rho$. The proof of the general case is analogous; see Section 
\ref{sect:general edge} for more details.

In a first step we convert the eigenvector problem into a problem involving the Green function $G_{ij}$. To that end, we
define
\begin{align} \label{def of tilde G}
\wt G_{ij}(z) \;\deq\; \frac{1}{2 \ii} \pb{G_{ij}(z) - G_{ij}(\bar z)} \;=\; \eta \sum_k G_{ik}(z) \ol{G_{jk}(z)}
\,,
\end{align}
where the second equality follows easily by spectral decomposition, $G_{ij}(z) = \sum_\beta \frac{\bar u_\beta(i) 
u_\beta(j)}{\lambda_\beta - z}$.
Note that
\begin{align*}
\wt G_{ij}(E + \ii \eta) \;=\; \sum_\beta \frac{\eta}{(E - \lambda_\beta)^2 + \eta^2} \, \bar u_\beta(i) u_\beta(j)
\end{align*}
as well as
$\im G_{ii}(z) = \wt G_{ii}(z)$.
It is a triviality that all of the results from Section \ref{sec: loc sc} hold with $z$ replaced with $\bar z$.

The following lemma expresses the eigenvector components as an integral of the Green function over an appropriate random 
interval.

\begin{lemma}\label{CTG}
Under the assumptions of Theorem \ref{t1}, for any $\epsilon > 0$
there exist constants $C_1$, $C_2$ such that for
$\eta = N^{-2/3 - \epsilon}$
we have \be\label{114}\;\;\;\; \lim_{N\to \infty } \max_{\alpha \leq \varphi^\rho} \max_{i,j}
\hbb{
\E^{\f u} \, \theta \left( N \bar u_\alpha(i) u_\alpha (j) \right)
 -
\E^{\f u} \,\theta \left[\frac N\pi \int_{I} \wt G_{ij}(E + \ii \eta) \, \ind{\lambda_{\alpha - 1} \leq E^- \leq 
\lambda_\alpha} \, \dd E \right]} \;=\; 0\,,
\ee
where
\be\label{defEI}
E^\pm \;\deq\; E\pm\varphi^{C_1}\eta,\;\;\;\;\;\;  I \;\deq\; \qB{-2 - N^{-2/3} \varphi^{C_2} \,,\,  -2 + 
N^{-2/3} \varphi^{C_2}}
\ee
and we introduce the convention $\lambda_0 \deq - \infty$. Here $\f u$ stands for either $\f v$ or $\f w$.
\end{lemma}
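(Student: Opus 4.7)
The plan is to use the spectral decomposition
\[
\wt G_{ij}(E+\ii\eta) \;=\; \sum_\beta \frac{\eta\,\bar u_\beta(i) u_\beta(j)}{(E-\lambda_\beta)^2+\eta^2}
\]
to rewrite the argument of $\theta$ on the right-hand side of~\eqref{114} as $\sum_\beta N\bar u_\beta(i) u_\beta(j)\, w_\beta$, where $w_\beta \deq \frac{1}{\pi} \int_{\cal I_\alpha} \frac{\eta\,\dd E}{(E-\lambda_\beta)^2+\eta^2}$ and $\cal I_\alpha \deq I \cap \qb{\lambda_{\alpha-1}+\varphi^{C_1}\eta,\,\lambda_\alpha+\varphi^{C_1}\eta}$ (with the convention $\lambda_0 \deq -\infty$). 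It then suffices to show that, on a suitable high-probability event, $w_\alpha \approx 1$ while $\sum_{\beta\neq\alpha} w_\beta$ is negligible, so that $Y \deq \sum_\beta N\bar u_\beta(i) u_\beta(j)\, w_\beta$ is close to $X \deq N\bar u_\alpha(i) u_\alpha(j)$.

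Next, introduce the good event $\Omega$ on which eigenvalue rigidity (Theorem~\ref{7.1}), complete delocalization (Theorem~\ref{DTH}), and the stability-form level-repulsion estimate~\eqref{lr1} hold, the last applied with some $\alpha' \in (0,\epsilon)$ chosen so that $2\varphi^{C_1}\eta < N^{-2/3-\alpha'}$ for large $N$. The ensemble $H^{\f u}$ satisfies~\eqref{lre}---either by hypothesis (for $H^{\f v}$) or via Proposition~\ref{leml} (for $H^{\f w}$)---so $\P(\Omega) \geq 1 - N^{-c}$ for some $c > 0$. Choosing $C_2 > A_0$ ensures rigidity places $\lambda_{\alpha-1}, \lambda_\alpha \in I$ for every $\alpha \leq \varphi^\rho$ (with the $\alpha=1$ case covered by the convention $\gamma_0 \deq -2$). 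On $\Omega$, level repulsion then forces $\lambda_\alpha$ to be the unique eigenvalue in $\cal I_\alpha$ and places $\lambda_{\alpha\pm 1}$ at distance at least $\varphi^{C_1}\eta$ from $\cal I_\alpha$; an elementary arctangent computation gives $w_\alpha = 1 + O(\varphi^{-C_1})$.

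For the tail, split $\sum_{\beta\neq\alpha} w_\beta$ according to the distance $d_\beta \deq \mathrm{dist}(\lambda_\beta, \cal I_\alpha)$. For $\beta=\alpha\pm k$ with $\lambda_\beta$ still in the edge region covered by~\eqref{lr1}, iterating the level-repulsion gap bound yields $d_\beta \geq (2k-1)\varphi^{C_1}\eta$, whence $w_\beta \leq 2\eta/(\pi d_\beta) = O(1/(k\varphi^{C_1}))$; summing over the $O(\varphi^{C_2})$ such indices produces a contribution of order $\varphi^{-C_1}\log\log N$. For $\lambda_\beta$ in the bulk or near the opposite edge $+2$, $d_\beta$ is bounded below by a positive constant, and using $w_\beta \leq \abs{\cal I_\alpha}\eta/(\pi d_\beta^2)$ together with $\abs{\cal I_\alpha} \leq C\varphi^C N^{-2/3}$ (from rigidity) gives $w_\beta = O(\varphi^C N^{-4/3-\epsilon})$, whose sum over the $O(N)$ such indices is $O(\varphi^C N^{-1/3-\epsilon})$. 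Combined, $\sum_{\beta\neq\alpha} w_\beta = o(1)$ provided $C_1 \geq 1$.

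Combining with the delocalization bound $\abs{N\bar u_\beta(i) u_\beta(j)} \leq \varphi^C$ yields $\abs{Y-X} = o(1)$ on $\Omega$ once $C_1$ is chosen large enough relative to $C$, $\epsilon$, and $\rho$. The polynomial growth of $\theta$ together with $\abs{X}, \abs{Y} = O(\varphi^C)$ on $\Omega$ then gives $\E^{\f u}\qb{(\theta(Y)-\theta(X))\,\mathbf{1}_\Omega} \to 0$ uniformly in $\alpha, i, j$. Off $\Omega$, split $\Omega^c$ into the rigidity/delocalization failure (subexponentially small probability, handled against polynomial-in-$N$ deterministic bounds on $\abs{X}, \abs{Y}$) and the level-repulsion failure alone (probability $\leq N^{-\delta}$; here rigidity still bounds $\abs{\cal I_\alpha} \leq C\varphi^C N^{-2/3}$ and local semicircle gives $\abs{\wt G_{ij}} \leq \varphi^C N^{-1/3+\epsilon}$, so that $\abs{Y} = O(\varphi^C N^\epsilon)$ and the contribution is polynomially small provided $\alpha'$ is chosen so that $\delta$ exceeds $\epsilon$ times the polynomial growth exponent of $\theta$). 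The main obstacle is the tail estimate in the third paragraph: combining iterated level repulsion in the edge region with rigidity-based distance bounds further out, and tuning $C_1, C_2, \alpha'$ in terms of $\epsilon$ to absorb all polynomial losses off the good event.
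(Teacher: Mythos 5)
Your strategy is essentially the paper's own: spectral decomposition of $\wt G_{ij}$, a good event built from rigidity, delocalization and the level-repulsion estimate \eqref{lr1}, the observation that the Lorentzian weight $w_\alpha$ over $\cal I_\alpha$ equals $1+O(\varphi^{-C_1})$ while the remaining weights are negligible, and a mean-value/polynomial-growth argument for $\theta$ with the bad events handled separately. Two steps, however, do not close as written.

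First, your tail split covers only eigenvalues inside the level-repulsion window $|E+2|\leq N^{-2/3}\varphi^{C}$ and eigenvalues at distance of order one from $-2$. For the intermediate indices ($\varphi^{C}\lesssim \beta\lesssim \rho N$) the distance $d_\beta$ is neither controlled by \eqref{lr1} nor bounded below by a constant, and the constant-distance bound summed over $O(N)$ indices would not be small if applied there. One must instead use rigidity together with $\gamma_\beta+2\geq c(\beta/N)^{2/3}$ and sum $\beta^{-4/3}$, exactly as in \eqref{tail estimate}; you allude to ``rigidity-based distance bounds further out'' but never carry this out. Second, and more seriously, on the event where only level repulsion fails you bound $|Y|=O(\varphi^{C}N^{\epsilon})$ and then require $\delta$ to exceed $\epsilon$ times the growth exponent of $\theta$. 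That requirement cannot be met: Definition \ref{def: level repulsion} only asserts the \emph{existence} of some $\delta>0$ for each admissible $\alpha'$, with no lower bound, and shrinking $\alpha'$ does not enlarge $\delta$; since $\epsilon$ is fixed before $\alpha'$ and $\delta$, the inequality $\delta>C''\epsilon$ is not at your disposal. The repair is to note that rigidity and delocalization alone --- without any level repulsion --- already give $\frac{N}{\pi}\int_{I}|\wt G_{ij}(E+\ii\eta)|\,\dd E\leq \varphi^{C}$ (this is precisely the content of \eqref{estimate on tilde G}--\eqref{tail estimate}), so that $|Y|\leq\varphi^{C}$ on the bad event as well; then $N^{-\delta}\varphi^{C'}=o(1)$ for \emph{every} $\delta>0$ and the argument closes.
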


\begin{proof}
We shall fix $i,j$ and $\alpha \leq \varphi^\rho$; it is easy to check that all constants in the following are uniform 
in $i,j$, and $\alpha \leq \varphi^\rho$. We write
\be
  \bar u_\alpha(i) u_\alpha(j) \;=\; \frac\eta\pi \,  \int\frac{\bar u_\alpha(i) 
u_\alpha(j)}{(E-\lambda_\alpha)^2+\eta^2} \, \dd E\,.
\ee
Using Theorem \ref{DTH} it is easy to prove that for $C_1$ large enough we have
\be
 \bar u_\alpha(i) u_\alpha(j) \;=\; \frac\eta\pi \,
	\int_a^b \frac{\bar u_\alpha(i) u_\alpha(j)}{(E-\lambda_\alpha)^2+\eta^2}
\, \dd E + O\pbb{\frac{1}{N \varphi^{C_1/2}}}
\ee
 holds \hp for some $c > 0$, as long as
\be
a \;\leq\; \lambda_\alpha^-,\;\;\;\;\;\; b \;\geq\; \lambda_\alpha^+\,,
\ee
where we use the notation \eqref{defEI}, i.e.\ $\lambda_\alpha^\pm \deq \lambda_\alpha \pm \varphi^{C_1} \eta$.
We now choose
\begin{align*}
a \;\deq\; \min \h{\lambda_\alpha^- , \lambda_{\alpha - 1}^+}\,, \qquad
b \;\deq\; \lambda_\alpha^+\,.
\end{align*}

By the assumption on $\theta$ and using Theorem \ref{DTH}, we therefore find
\be\label{19}
\E^{\f u} \,\theta \pb{N \bar u_\alpha(i) u_\alpha(j)}
\;=\;
\E^{\f u}\,\theta\left(\frac{N\eta}\pi \,  \int_a^b \frac{\bar u_\alpha(i) u_\alpha(j)}{(E-\lambda_\alpha)^2+\eta^2} \dd 
E
\right)+o(1)\,.
\ee
Now we split
\begin{align*}
\int_a^b \dd E\;=\; \int_{\lambda_{\alpha - 1}^+}^{\lambda_\alpha^+} \dd E + \ind{\lambda_{\alpha - 1}^+ > 
\lambda_\alpha^-}\int_{\lambda_\alpha^-}^{\lambda_{\alpha-1}^+} \, \dd E
\end{align*}
to get
\begin{equation} \label{lr small}
\E^{\f u} \,\theta \pb{N \bar u_\alpha(i) u_\alpha(j)}
\;=\;
\E^{\f u} \,\theta\left(\frac{N\eta}\pi \int_{\lambda_{\alpha - 1}^+}^{\lambda_\alpha^+} \frac{\bar u_\alpha(i) 
u_\alpha(j)}{(E-\lambda_\alpha)^2+\eta^2} \, \dd E
\right)
+
O\pb{\varphi^{C_0}\,
\E^{\f u} \ind{\lambda_{\alpha - 1}^+ > \lambda_\alpha^-}} +o(1)
\end{equation}
for some constant $C_0$,
where we used Theorem \ref{DTH} and the assumption on $\theta$. Now the level repulsion estimate \eqref{lr1} implies 
that the second term of \eqref{lr small} is $o(1)$. We now observe that, by \eqref{rigidity}, we have $\lambda_\alpha^+ 
\leq -2 + N^{-2/3} \varphi^{C_2}$ and $\lambda_{\alpha - 1}^+ \geq -2 - N^{-2/3} \varphi^{C_2}$ \hp.  It therefore 
easy to see that
\begin{align} \label{uu cutoff}
\E^{\f u} \,\theta \pb{N \bar u_\alpha(i) u_\alpha(j)}
\;=\;
\E^{\f u}\,\theta\left(\frac{N\eta}\pi \int_{I} \frac{\bar u_\alpha(i) u_\alpha(j)}{(E-\lambda_\alpha)^2+\eta^2} \, 
\ind{\lambda_{\alpha - 1} \leq E^- \leq \lambda_\alpha} \, \dd E
\right) + o(1)\,.
\end{align}

Next, we replace the integrand in \eqref{uu cutoff} by $\wt G_{ij}(E+ \ii \eta)$. By definition, we have
\be \label{decomposition of tilde G}
\frac{1}{\eta} \wt G_{ij}(E+ \ii \eta)
\;=\;
\sum_{\beta \neq \alpha} \frac{\bar u_\beta(i) u_\beta(j)}{(E-\lambda_\beta)^2+\eta^2}
+\frac{\bar u_\alpha(i) u_\alpha(j)}{(E-\lambda_\alpha)^2+\eta^2}\,.
\ee
In order to be able to apply the mean value theorem to $\theta$ with the decomposition \eqref{decomposition of tilde G}, 
we need an upper bound on
\begin{align} \label{estimate on tilde G}
\sum_\beta
\frac{N\eta}\pi \int_{I} \frac{\abs{\bar u_\beta(i) u_\beta(j)}}{(E-\lambda_\beta)^2+\eta^2}
\, \dd E \;\leq\; \varphi^{C_0 + C_3} + \varphi^{C_0} \sum_{\beta \geq \varphi^{C_3}} \int_{I} 
\frac{\eta}{(E-\lambda_\beta)^2+\eta^2}
\, \dd E\,,
\end{align}
where the inequality holds \hp for any $C_3$; here we used Theorem \ref{DTH}. Using $\gamma_\beta \geq -2 + c 
(\beta/N)^{2/3}$ as well as \eqref{rigidity}, we find \hp for large enough $C_3$
\begin{align} \label{tail estimate}
\varphi^{C_0} \sum_{\beta \geq \varphi^{C_3}} \int_{I} \frac{\eta}{(E-\lambda_\beta)^2+\eta^2}
\, \dd E \;\leq\; \varphi^{C_0 + C_2} N^{-2/3} \sum_{\beta \geq \varphi^{C_3}} \frac{\eta}{(\beta/N)^{4/3}} \;\leq\; 
N^{- \epsilon/2}\,.
\end{align}
Thus the left-hand side of \eqref{estimate on tilde G} is bounded by $\varphi^{C_0 + C_3 + 1}$.

Let us abbreviate $\chi(E) \deq \ind{\lambda_{\alpha - 1} \leq E^- \leq \lambda_\alpha}$.
Now, recalling the assumption on $\theta$, we may apply the mean value theorem as well as Theorem \ref{DTH} to get
\begin{multline}
\left|
\E^{\f u}\,\theta\left(\frac{N\eta}\pi \int_{I} \frac{\bar u_\alpha(i) u_\alpha(j)}{(E-\lambda_\alpha)^2+\eta^2}
\, \chi(E) \, \dd E
\right)
-
\E^{\f u}\,\theta\left(\frac{N }\pi \int_{I} \wt G_{ij}(E + \ii \eta)
\, \chi(E) \, \dd E
\right)
\right|
\\\label{1119} \leq\; \varphi^{\wt C}
\;\E^{\f u}\, \sum_{\beta \neq \alpha} \frac{N\eta}\pi \, \int_{I} \frac{\abs{\bar u_\beta(i) 
u_\beta(j)}}{(E-\lambda_\beta)^2+\eta^2}\, \chi(E) \, \dd E
\end{multline}
for some constant $\wt C \leq C (C_0 + C_3 + 1)$ independent of $C_1$.
We now estimate the right-hand side of \eqref{1119}. Exactly as in \eqref{tail estimate}, one finds that there exists 
$C_4$ such that the contribution of $\beta \geq \varphi^{C_4}$ to the right-hand side of \eqref{1119} vanishes in the 
limit $N \to \infty$.  Next, we deal with the eigenvalues $\beta < \alpha$ (in the case $\alpha > 1$).  Using Theorem 
\ref{DTH} we get
\begin{align*}
\sum_{\beta < \alpha} \frac{N\eta}\pi \, \E^{\f u} \int_{I} \frac{\abs{\bar u_\beta(i) 
u_\beta(j)}}{(E-\lambda_\beta)^2+\eta^2}\, \chi(E) \, \dd E
\;\leq\; \varphi^{C} \, \E^{\f u} \int_{\lambda_{\alpha - 1}^+}^\infty \frac{\eta}{(E-\lambda_{\alpha - 1})^2+\eta^2} \, 
\dd E \;\leq\; \varphi^{-\wt C -c}\,,
\end{align*}
where $c>0$ for $C_1$ large enough.

What remains is the estimate of the terms $\alpha < \beta \leq \varphi^{C_4}$ in \eqref{1119}.
For a given constant $C_5>0$ we partition $I = I_1 \cup I_2$ with $I_1 \cap I_2 = \emptyset$ and
\be
I_1 \;\deq\; \Big\{E\in I \st \exists \, \beta \,,\, \alpha < \beta \leq  \varphi^{  C_4} \,,\, 
|E-\lambda_\beta|\leq \eta \varphi^{  C_5}
	 \Big\}\,.
\ee
It is easy to see that, for large enough $C_5$, we have
\begin{align*}
\sum_{\beta \st \alpha < \beta \leq \varphi^{C_4}} \frac{N\eta}\pi \, \E^{\f u} \int_{I_2} \frac{\abs{\bar u_\beta(i) 
u_\beta(j)}}{(E-\lambda_\beta)^2+\eta^2}\, \chi(E) \, \dd E \;\leq\; \varphi^{-\wt C - c}
\end{align*}
where $c > 0$. Let us therefore consider the integral over $I_{1}$. One readily finds, for $\lambda_\alpha \leq 
\lambda_{\alpha +1} \leq \lambda_\beta$, that
\be\label{HNK}
\frac{  1\, }{(E-\lambda_\beta)^2+\eta^2} \, \ind{E^- \leq \lambda_\alpha}
\;\leq\;
 \frac{\varphi^{2C_1}}{(\lambda_\beta-\lambda_\alpha)^2+\eta^2}
 \;\leq\; \frac{\varphi^{2C_1}}{(\lambda_{\alpha + 1}-\lambda_\alpha)^2+\eta^2}\,.
\ee
From Theorem \ref{DTH} we therefore find that there exists a constant $C_6$, depending on $C_1$, such that
\begin{align} \label{118}
\sum_{\beta \st \alpha < \beta \leq \varphi^{C_4}} \frac{N\eta}\pi \, \E^{\f u} \int_{I_1} \frac{\abs{\bar u_\beta(i) 
u_\beta(j)}}{(E-\lambda_\beta)^2+\eta^2}\, \chi(E) \, \dd E \;\leq\; \varphi^{C_6} \, \E^{\f u} \,
\frac{\eta^2}{(\lambda_{\alpha + 1}-\lambda_\alpha)^2+\eta^2}
\end{align}
The right-hand side of \eqref{118} is bounded by $\E^{\f u}{\bf 1}(|\lambda_{\alpha + 1}-\lambda_\alpha|\leq  
N^{-1/3}\eta^{1/2} ) +O(N^{-\e})$.  Using \eqref{lr1} we now obtain
\be\label{119}
\sum_{\beta \st \alpha < \beta \leq \varphi^{C_4}} \frac{N\eta}\pi \, \E^{\f u} \int_{I_1} \frac{\abs{\bar u_\beta(i) 
u_\beta(j)}}{(E-\lambda_\beta)^2+\eta^2}\, \chi(E) \, \dd E \;\leq\; \varphi^{- \wt C - c}
\ee
where $c > 0$. This concludes the proof.
\end{proof}

In a second step we convert the cutoff function in lemma \ref{CTG}   into a function of $\wt G_{ij}$. 

\begin{lemma}\label{GCC}
Recall the definition \eqref{thetam} of the approximate delta function $\theta_\eta$ on the scale $ \eta$.  Let $\alpha 
\leq \varphi^\rho$ and $q \equiv q_\alpha:\R \to\R_+$ be a smooth cutoff function concentrated around $\alpha - 1$, 
satisfying
\[
q(x) = 1 \quad  \text{if} \quad |x - \alpha + 1| \le 1/3,   \qquad q(x) = 0   \quad  \text{if} \quad |x - \alpha + 1| 
\ge 2/3\,.
\]
 Let 
\be
\chi \;\deq\; {\bf 1}_{[E_L, E^-]}\,, \qquad E_L \;\deq\; -2 - 2N^{-2/3}(\log N)^L
\ee
where
\be \label{def eta wt eta}
\eta \;\deq\; N^{-2/3-\e}\,, \qquad \tilde\eta \;\deq\; N^{-2/3-6\e}
\ee
for $\epsilon > 0$. Then for $\epsilon$ small enough we have (recall the definition \eqref{defEI})
\be\label{115}
\lim_{N\to \infty } \max_{\alpha \leq \varphi^\rho} \max_{i,j}
\hbb{
\E^{\f u} \, \theta \pb{N \bar u_\alpha(i) u_\alpha(j)}
 -
\E^{\f u} \,\theta \left[\frac N\pi \, \int_{I } \wt G_{ij}(E + \ii \eta) \, q \qb{ \tr  (\chi  \ast \theta_{\tilde 
\eta}) (H)} \, \dd E\right] } \;=\; 0\,.
\ee
Here $\f u$ stands for either $\f v$ or $\f w$.
\end{lemma}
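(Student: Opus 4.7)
My plan is to combine Lemma \ref{CTG} with a careful analysis of the smoothed counting function $K(E) \deq \tr(\chi \ast \theta_{\tilde\eta})(H)$, using Lemma \ref{lem:21} and the uniform level repulsion bound \eqref{lr1}. By Lemma \ref{CTG} it suffices to prove that
$$\lim_{N\to\infty} \E^{\f u} \abs{\theta(B) - \theta(A)} \;=\; 0\,,$$
where $B \deq \frac{N}{\pi}\int_I \wt G_{ij}(E+\ii\eta)\,\ind{\lambda_{\alpha-1}\leq E^-\leq\lambda_\alpha}\,\dd E$ and $A \deq \frac{N}{\pi}\int_I \wt G_{ij}(E+\ii\eta)\,q[K(E)]\,\dd E$. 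I will then apply the mean value theorem and the polynomial growth of $\theta$, reducing matters to bounding $\abs{A-B}$, $\abs{A}$ and $\abs{B}$ on a good event of high probability.

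The second step is to introduce a good event $\Omega$ on which: (a) the strong local semicircle law (Theorem \ref{45-1}), rigidity (Theorem \ref{7.1}), and delocalization (Theorem \ref{DTH}) hold at both spectral scales $\eta$ and $\tilde\eta$; (b) no eigenvalue lies below $E_L$; (c) Lemma \ref{lem:21} applies uniformly for $E \in I$, obtained via a standard grid argument using Lipschitz continuity of $K$ and monotonicity of the sharp counting function; and (d) the uniform level repulsion estimate \eqref{lr1} with exponent $2\epsilon$ holds, so that no two eigenvalues in the relevant window lie within $2\ell_1 \deq 2N^{-2/3-2\epsilon}$ of each other. Items (a)--(c) hold with probability at least $1-\me^{-\varphi^c}$; item (d) holds with probability at least $1-N^{-\delta}$ for some $\delta>0$, and also for $H^{\f w}$ by Proposition \ref{leml}. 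On $\Omega$, whenever $E^-$ is at distance $\geq \ell_1$ from every eigenvalue, Lemma \ref{lem:21} gives $K(E) = \cN(-\infty, E^-) + O(N^{-4\epsilon/3})$; since $q$ equals $\ind{n=\alpha-1}$ on the integers, this forces $q[K(E)] = \ind{\lambda_{\alpha-1}\leq E^-<\lambda_\alpha}$. Hence the bad set $\mathcal B \deq \{E \in I \st q[K(E)] \neq \ind{\lambda_{\alpha-1}\leq E^-\leq\lambda_\alpha}\}$ is contained in the $\ell_1$-neighborhood in $E^-$ of the $O(\varphi^{C_2})$ eigenvalues in the relevant window; these neighborhoods are disjoint by (d), yielding $\abs{\mathcal B} \leq C\varphi^{C_2}\ell_1$.

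On $\Omega$ the local semicircle law at the edge gives $\abs{\wt G_{ij}(E+\ii\eta)} \leq \varphi^C N^{-1/3+\epsilon}$, so $\abs{A-B} \leq \frac{N}{\pi}\int_{\mathcal B}\abs{\wt G_{ij}}\,\dd E \leq \varphi^{C'} N \cdot \ell_1 \cdot N^{-1/3+\epsilon} = \varphi^{C'} N^{-\epsilon}$, while Lemma \ref{CTG} combined with delocalization gives $\abs{A}, \abs{B} \leq \varphi^{C''}$; the mean value theorem then yields $\E^{\f u}[\abs{\theta(A)-\theta(B)}\,\ind{\Omega}] \to 0$. On $\Omega^c$ I would split into the sub-event where only level repulsion fails (probability $\leq N^{-\delta}$), using the local SC bound $\abs{A}, \abs{B} \leq \varphi^C N^{C\epsilon}$ so the contribution is $\leq N^{-\delta+C\epsilon}\to 0$ for small $\epsilon$; and the exponentially small sub-event where the local SC law or rigidity fails, handled by the crude bound $\abs{A}, \abs{B} \leq N^C$. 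The main obstacle is the bookkeeping in the middle paragraph: verifying that the three scales $\eta, \tilde\eta, \ell_1$ together with the level repulsion exponent $2\epsilon$ fit together to make $\abs{\mathcal B}$ small enough while $\wt G_{ij}$ remains controllable on it, and establishing the uniform-in-$E$ version of Lemma \ref{lem:21}.
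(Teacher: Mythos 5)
Your argument is correct and follows essentially the same route as the paper: reduce to Lemma \ref{CTG}, use Lemma \ref{lem:21} to identify $q\qb{\tr(\chi\ast\theta_{\tilde\eta})(H)}$ with the sharp indicator away from an exceptional set of $E$ of measure $O(\varphi^C\ell_1)$, and absorb that set using the edge bound $\abs{\wt G_{ij}(E+\ii\eta)}\leq \varphi^C N^{-1/3+\epsilon}$ together with $N\cdot\ell_1\cdot N^{-1/3+\epsilon}=N^{-\epsilon}$. The only (harmless) deviation is that you invoke level repulsion to make the bad neighbourhoods disjoint; a plain union bound over the $O(\varphi^{C_2})$ eigenvalues near the edge already gives $\abs{\mathcal B}\leq C\varphi^{C_2}\ell_1$, which is all the paper uses here.
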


\begin{proof}
Note first that
\begin{align*}
\frac N\pi \int_{I} \wt G_{ij}(E + \ii \eta) \, \ind{\lambda_{\alpha - 1} \leq E^- \leq \lambda_\alpha} \, \dd E
&\;=\; \frac N\pi \int_{I} \wt G_{ij}(E + \ii \eta) \, \indb{\cal N(-\infty, E^-) = \alpha - 1} \, \dd E
\\
&\;=\; \frac N\pi \int_{I} \wt G_{ij}(E + \ii \eta) \, q \qb{\tr \chi(H)} \, \dd E
\end{align*}
\hp.

Next, recall that \eqref{6.10} asserts that for $\ell=N^{-2/3-2\e}$ we have
\be\label{6.101}
\absb{\tr \chi  (H) - \tr  (\chi   \ast \theta_{\tilde \eta}) (H)}  \;\le\; C\left( N^{-\e}  +   \cN (E^- -\ell , E^- 
+\ell )  \right)
\ee
\hp for sufficiently large $N$. We therefore find that
\begin{align*}
&\qquad \left|\frac N\pi \, \int_{I } \wt G_{ij}(E + \ii \eta) \, \indb{\lambda_{\alpha - 1} \leq E^- \leq 
\lambda_\alpha} \, \dd E -
\frac N\pi\int_{I} \wt G_{ij}(E + \ii \eta)
\, q \qb{ \tr (\chi  \ast \theta_{\tilde \eta}) (H)} \, \dd E
\right|
\\
&\leq\;
CN\sum_{\beta=1}^N\int_{I} \absb{\wt G_{ij}(E + \ii \eta)}
{\bf 1}(|E^- - \lambda_\beta|\leq  \ell )
\, \dd E + \varphi^C N^{-\epsilon}
\\
&\leq\;
CN \sum_{\beta=1}^{\varphi^C}\int_{I} \absb{\wt G_{ij}(E + \ii \eta)} \, {\bf 1}(|E^- - \lambda_\beta|\leq  \ell ) \, 
\dd E + \varphi^C N^{-\epsilon}
\\
&\leq\; C \varphi^C N \ell \, \sup_{E \in I} \abs{\wt G_{ij}(E + \ii \eta)}
+
\varphi^C N^{-\epsilon}
\end{align*}
holds \hp, where in the first inequality we estimated the integral $\int_I \abs{\wt G_{ij}(E + \ii \eta)} \, \dd E$ 
exactly as \eqref{estimate on tilde G}, and in the second inequality we used \eqref{rigidity}. Using the definition of 
$I$ and \eqref{Lambdaodfinal} we get
\begin{align*}
\sup_{E \in I} \abs{\wt G_{ij}(E + \ii \eta)} \;\leq\; \varphi^C \pB{N^{-1/3} + N^{-2/3} \eta^{-1/2} + N^{-1} \eta^{-1}} 
\;\leq\; N^{-1/3 + \epsilon}\,.
\end{align*}
Together with \eqref{114}, the claim follows.
\end{proof}

In a third and final step, we use the Green function comparison method to show the following statement.

\begin{lemma} \label{33} Under the assumptions of Lemma \ref{GCC}, we have
\begin{equation*}
\lim_{N \to \infty} \max_{i,j} \; \pb{\E^{\f v} - \E^{\f w}} \, \theta \qbb{\frac N\pi \int_{I } \wt G_{ij}(E + \ii \eta) 
\, q\qb{ \tr  (\chi  \ast \theta_{\tilde \eta}) (H)} \, \dd E}
\;=\; 0\,.
\end{equation*}
\end{lemma}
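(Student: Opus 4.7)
The plan is to apply the Green function comparison (Lindeberg swap) strategy developed in \cite{EYYrigi}, adapted to accommodate the two extra fixed indices $i,j$ appearing through the off-diagonal factor $\wt G_{ij}$. Abbreviate the argument of $\theta$ as
\[
X(H) \;\deq\; \frac{N}{\pi} \int_I \wt G_{ij}(E+\ii\eta)\, q\qb{\tr (\chi * \theta_{\tilde\eta})(H)} \, \dd E\,.
\]
Label the independent upper-triangular entries $h_{ab}$, $a \le b$, by $\gamma = 1, \dots, \gamma_{\max} \le N(N+1)/2$, and construct an interpolating sequence $H_0 = H^\bv, H_1, \dots, H_{\gamma_{\max}} = H^\bw$, where $H_\gamma$ agrees with $H_{\gamma - 1}$ except that its $\gamma$-th entry is swapped from the $\bv$-distribution to the $\bw$-distribution. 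A telescoping sum reduces the claim to showing that $\E\, \theta[X(H_{\gamma - 1})] - \E\, \theta[X(H_\gamma)] = o(N^{-2})$ uniformly in $\gamma$, $i$, and $j$.

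For each $\gamma$, let $(a,b)$ be the position of the swapped entry and let $H_\gamma^\flat$ denote the matrix obtained from $H_{\gamma - 1}$ by setting the $(a,b)$ entry to zero. Taylor-expand $\theta[X(H_\gamma^\flat + V)]$ in the scalar $V = h_{ab}$ up to third order with a fourth-order remainder. The two-moment matching assumption \eqref{2m} ensures that the expectations of the terms of order $0, 1, 2$ coincide under $\E^\bv$ and $\E^\bw$ and hence cancel. What remains is the third-order term together with the fourth-order remainder, both to be estimated using the subexponential bound $\E|h_{ab}|^k \lesssim N^{-k/2}$ combined with bounds on the derivatives $D^k(\theta \circ X)$ up to $k = 4$. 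These derivatives are computed via the resolvent identity
\[
\partial_{h_{ab}} G_{xy}(z) \;=\; -G_{xa}(z) G_{by}(z) - G_{xb}(z) G_{ay}(z)
\]
(with the obvious modification for the complex Hermitian case), so each differentiation injects additional resolvent factors indexed by $a$ or $b$.

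The key input is the off-diagonal estimate from the strong local semicircle law (Theorem \ref{45-1}), which gives, with high probability and uniformly for $E \in I$ and $\eta \ge N^{-2/3-6\e}$,
\[
\abs{G_{xy}(E+\ii\eta)} \;\le\; \varphi^{C}\qB{N^{-1/3}\ind{x\ne y} + \ind{x=y}}\,.
\]
The prefactors to be absorbed per swap are $N|I| \lesssim N^{1/3}\varphi^C$ from the outer $E$-integral, $\tilde\eta^{-1} \lesssim N^{2/3 + 6\e}$ from each differentiation hitting $\theta_{\tilde\eta}$ inside the trace, and $\E|h_{ab}|^3 \lesssim N^{-3/2+\e}$ from the third-moment bound. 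Summed over $\gamma_{\max} \sim N^2$ swaps, closing the estimate requires the residual off-diagonal smallness of each third-order term to overcome these polynomial powers of $N$.

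The main obstacle is precisely this bookkeeping of off-diagonal factors, which is subtler than in \cite{EYYrigi} because of the two fixed indices $i,j$. Two classes of terms require separate treatment. For the $O(N^2)$ generic positions $(a,b)$ with $\{a,b\} \cap \{i,j\} = \emptyset$, every differentiation of $\wt G_{ij}$ produces two genuine off-diagonal entries, providing enough $N^{-1/3}$ factors to close the estimate. For the $O(N)$ non-generic positions where $\{a,b\}$ intersects $\{i,j\}$, an off-diagonal factor may collapse to a diagonal one, but the smaller cardinality of such positions more than compensates. Derivatives of $q\qb{\tr (\chi * \theta_{\tilde\eta})(H)}$ involve sums over internal indices such as $\sum_k G_{ka}G_{bk}$, which are converted back to a single off-diagonal resolvent entry through the identity $\partial_z G = -G^2$ combined with the Fundamental Theorem of Calculus applied to the $E'$-integration defining the trace, thereby gaining the required $N^{-1/3}$ smallness. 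Assembling these estimates shows that each swap difference is $o(N^{-2})$, and summing over $\gamma$ completes the proof.
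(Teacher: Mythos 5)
Your overall architecture --- Lindeberg swapping over the $N(N+1)/2$ independent entries, a Taylor/resolvent expansion to third order with a fourth-order remainder, cancellation of the orders $0,1,2$ by the two-moment matching, and power counting based on the off-diagonal bound $\abs{G_{xy}}\le \varphi^C N^{-1/3}$ for $x\ne y$ --- is exactly the paper's. The fourth-order remainder is indeed harmless by naive counting. But there is a genuine gap at the decisive step, namely your claim that for the $O(N^2)$ generic positions $(a,b)$ the off-diagonal factors generated by differentiating $\wt G_{ij}$ ``provide enough $N^{-1/3}$ factors to close the estimate.'' They do not. The worst third-order terms are of the schematic form $N^{-3/2}\sum_k R_{ik}\overline{R_{ja}V_{ab}R_{bb}V_{ba}R_{aa}V_{ab}R_{bk}}$ (cf.\ \eqref{defY}): only three off-diagonal resolvent entries survive, the other two collapsing to diagonal ones of order $1$. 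With the prefactor $N\eta/\pi$, the summation over $k$, and the integration over $I$ (of length $\varphi^C N^{-2/3}$), the resulting per-swap bound is $N^{-11/6+C\epsilon}$, which after summation over the $\sim N^2$ swaps diverges like $N^{1/6}$. The same shortfall occurs in the term containing $\Delta y_0^{(3)}$.

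The missing ingredient is an improvement of the bound on the \emph{expectation} of the third-order term beyond its typical size. The paper achieves this in the proof of \eqref{368}: writing $S_{\beta a}=-S_{aa}\sum_{j\ne a}S^{(a)}_{\beta j}h_{ja}$ as in \eqref{splitting of S} and conditioning on the minor $H^{(a)}$, the leading part $-m_{sc}\sum_{j\neq a} S^{(a)}_{\beta j}h_{ja}$ has vanishing conditional expectation, and the remainder carries the extra factor $\abs{S_{aa}-m_{sc}}\le N^{-1/3+C\epsilon}$. This upgrades the naive bound $\absb{\E \, B^R R_{\alpha k}\overline{R_{\beta a}R_{bk}}\, q(y^R)}\le N^{-1+C\epsilon}$ to $N^{-4/3+C\epsilon}$ as in \eqref{373}, turning $N^{-11/6}$ into $N^{-13/6}=o(N^{-2})$ per swap. (One also needs the identity \eqref{Gij Gijk} and a short resolvent expansion, as in \eqref{374} and \eqref{378}, to decouple the remaining factors from the $a$-th column before taking the conditional expectation.) Without this step your telescoping sum does not close; with it, your bookkeeping of the generic versus non-generic positions and of the derivatives hitting the trace term goes through essentially as you describe.
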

The rest of this section is devoted to the proof of Lemma \ref{33}.

\subsection{Green function comparison: proof of Lemma \ref{33}} \label{pf of 33}
The claimed uniformity in $i$ and $j$ is easy to check in our proof, and we shall not mention it anymore. Throughout the 
following we rename $i = \alpha$ and $j = \beta$ in order to use $i$ and $j$ as summation indices. We now fix $\alpha$ 
and $\beta$ for the whole proof. (Note that $\alpha$ and $\beta$ need not be different.)

We use the identity (see \eqref{def of tilde G})
\begin{equation} \label{tilde G sum}
\wt G_{ij}(z) \;=\;  (\im z) \sum_k X_{ij,k}(z) \,, \qquad  X_{ij,k}(z)  \;\deq\; G_{ik}(z)\overline{ G_{jk}(z)}\,.
\end{equation}
We begin by dropping the diagonal terms in \eqref{tilde G sum}.
\begin{lemma} \label{dropping diagonal terms}
For small enough $\epsilon > 0$ we have
\begin{align} \label{no diag terms}
\E^{\f u} \, \theta \qbb{\frac N\pi \int_{I } \wt G_{\alpha \beta}(E + \ii \eta) \, q\qb{ \tr  (\chi  \ast \theta_{\tilde 
\eta}) (H)} \, \dd E}
-
\E^{\f u} \, \theta \qbb{\int_{I } x(E) \, q(y(E)) \, \dd E}
\;=\; o(1)\,,
\end{align}
where $\f u$ stands for either $\f v$ or $\f w$, and
\be\label{defxy}
x(E) \;\deq\; \frac {N \eta} \pi\sum_{k\neq \al,\beta} X_{\al\beta, k}(E + \ii \eta)\,,
\qquad
y(E) \;\deq\; \tilde \eta \int_{E_L}^{E^-}  \sum_{i \neq k}X_{ii,k}(\tilde E + \ii \tilde \eta) \, \dd \tilde E\,.
\ee
\end{lemma}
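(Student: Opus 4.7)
\subsection*{Plan for the proof of Lemma \ref{dropping diagonal terms}}

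The proof is a first-order Taylor expansion of $\theta$ combined with the local semicircle bounds of Theorem \ref{45-1}. First I unpack the two relevant identities: from \eqref{def of tilde G},
\[
\frac{N}{\pi}\wt G_{\alpha\beta}(E+\ii\eta) \;=\; x(E) + \frac{N\eta}{\pi}\bigl(X_{\alpha\beta,\alpha}+X_{\alpha\beta,\beta}\bigr),
\]
and using $\theta_{\tilde\eta}(x)=\pi^{-1}\im(x-\ii\tilde\eta)^{-1}$ together with $\im G_{ii}(z)=\eta\sum_k X_{ii,k}(z)$ gives
\[
\pi\,\tr(\chi\ast\theta_{\tilde\eta})(H) \;=\; y(E) + \tilde\eta\int_{E_L}^{E^-}\sum_i \abs{G_{ii}(\tilde E + \ii\tilde\eta)}^2\,\dd\tilde E,
\]
so up to the harmless constant $\pi$ (absorbed into $q$), the lemma amounts to showing that the diagonal remainders on the right of these two identities are negligible.

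\medskip

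Second, write the difference of the two arguments of $\theta$ in \eqref{no diag terms} as $\Delta=\Delta_1+\Delta_2$, where
\begin{align*}
\Delta_1 \;&\deq\; \int_I \frac{N\eta}{\pi}\bigl(X_{\alpha\beta,\alpha}+X_{\alpha\beta,\beta}\bigr)\,q\bigl[\tr(\chi\ast\theta_{\tilde\eta})(H)\bigr]\,\dd E, \\
\Delta_2 \;&\deq\; \int_I x(E)\,\bigl\{q\bigl[\tr(\chi\ast\theta_{\tilde\eta})(H)\bigr]-q(y(E))\bigr\}\,\dd E.
\end{align*}
Since $0\le q\le 1$ and $|I|\le 2N^{-2/3}\varphi^{C_2}$, I bound $|\Delta_1|$ directly via Theorem~\ref{45-1}: \hp one has $|G_{ii}|\le 1+\Lambda_d=O(1)$ and $\Lambda_o\le\varphi^C N^{-1/3+\epsilon}$ on $I\times\{\ii\eta\}$, hence the integrand is $\le \varphi^C N^{1/3-\epsilon}\Lambda_o + \varphi^C N\eta\,\delta_{\alpha\beta}$, and integration over $I$ yields $|\Delta_1|\le\varphi^C N^{-1/3}$ with high probability.

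\medskip

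Third, for $\Delta_2$ the Lipschitz bound on $q$ reduces the task to controlling $|\pi\tr(\chi\ast\theta_{\tilde\eta})(H)-y(E)|$ and $\int_I|x(E)|\,\dd E$. The first is the diagonal remainder $\tilde\eta\int_{E_L}^{E^-}\sum_i|G_{ii}|^2\,\dd\tilde E$, which by $|G_{ii}|=O(1)$ and $|E^- - E_L|\le\varphi^C N^{-2/3}$ is bounded by $\varphi^C N\tilde\eta N^{-2/3}=\varphi^C N^{-1/3-6\epsilon}$. For the second, the Ward-type identity $\sum_k|G_{ik}(z)|^2=\eta^{-1}\im G_{ii}(z)$ combined with Cauchy--Schwarz gives
\[
\sum_k \absb{X_{\alpha\beta,k}(E+\ii\eta)} \;\le\; \frac{1}{\eta}\sqrt{\im G_{\alpha\alpha}\,\im G_{\beta\beta}} \;\le\; \frac{\varphi^C N^{-1/3}}{\eta},
\]
so $|x(E)|\le\varphi^C N^{2/3}$ on $I$ and $\int_I|x|\,\dd E\le\varphi^C$. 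Multiplying the two estimates gives $|\Delta_2|\le\varphi^C N^{-1/3-6\epsilon}$ with high probability.

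\medskip

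Finally, I combine the estimates. The polynomial growth of $\theta$ (with $|n|\le 3$) gives $|\theta(A_1)-\theta(A_2)|\le C(1+|A_1|+|A_2|)^C|\Delta|$ where $A_1,A_2$ are the two arguments in \eqref{no diag terms}. The bound $\int_I|x|\,\dd E\le\varphi^C$, together with the analogous bound on the full $\wt G$-integrand, yields $|A_i|\le\varphi^C$ \hp; hence $|\theta(A_1)-\theta(A_2)|=o(1)$ on this event. The complementary event has probability at most $\me^{-\varphi^c}$, on which the deterministic bound $|G_{ij}|\le\eta^{-1}=N^{2/3+\epsilon}$ makes $|\theta(A_1)-\theta(A_2)|$ at worst a fixed polynomial in $N$, whose expectation is crushed by the exponentially small probability. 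Taking expectations and sending $N\to\infty$ gives \eqref{no diag terms}, and the argument is manifestly uniform in $i,j$. The only delicate point I expect to require care is the Cauchy--Schwarz / Ward bound used to keep $\int_I|x|\,\dd E$ of size $\varphi^C$ rather than a power of $N$; without this, the Lipschitz estimate in $\Delta_2$ would not close.
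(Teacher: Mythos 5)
Your proof is correct and follows essentially the same route as the paper's: bound the difference of the two arguments of $\theta$ by $\varphi^C N^{-1/3+C\epsilon}$ (splitting off the $k\in\{\alpha,\beta\}$ terms and the diagonal remainder in the trace, and using $|I|\le \varphi^C N^{-2/3}$), bound the arguments themselves by $\varphi^{C}N^{C\epsilon}$, and conclude via the mean value theorem and the polynomial growth of $\theta'$. The only cosmetic differences are your use of the Ward identity for $\int_I \abs{x}\,\dd E$ where the paper bounds $\abs{x(E)}$ directly by $N^2\eta\,\Lambda_o^2$ (both give $\varphi^C N^{C\epsilon}$), and your explicit treatment of the exceptional event, which the paper leaves implicit.
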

\begin{proof}
We estimate
\begin{align*}
\absbb{\frac{N}{\pi}\wt G_{\alpha \beta}(E + \ii \eta) - x(E)} \;\leq\; \varphi^C N \eta \;\leq\; \varphi^C N^{1/3 - 
\epsilon}
\end{align*}
\hp and, recalling that $q'$ is bounded,
\begin{align*}
\absB{q\qb{ \tr  (\chi  \ast \theta_{\tilde \eta}) (H)} - y(E)} \;\leq\; \varphi^C \tilde \eta N N^{-2/3} \;\leq\; 
\varphi^C N^{-1/3 - 6 \epsilon}
\end{align*}
\hp. Therefore the difference of the arguments of $\theta$ in \eqref{no diag terms} is bounded by $\varphi^C N^{-1/3 - 
\epsilon}$ \hp. (Recall that $\abs{I} \leq \varphi^C N^{-2/3}$.) Moreover, since $q$ is bounded, it is easy to see that 
both arguments of $\theta$ in \eqref{no diag terms} are bounded \hp by
\begin{align*}
\varphi^C N \eta N^{-2/3} \pb{1 + N \sup_{E \in I} \Lambda_0^2(E + \ii \eta)} \;\leq\; \varphi^C N^\epsilon\,,
\end{align*}
where we used Theorem \ref{45-1}. The claim now follows from the mean value theorem and the assumption on $\theta$.
\end{proof}

For the following we work on the product probability space of the ensembles $H^\bv$ and $H^\bw$. To distinguish them we 
denote the elements of $H^\bv$ by $N^{1/2} v_{ij}$ and the elements of $H^\bw$ by $N^{-1/2} w_{ij}$. We fix a bijective 
ordering map $\Phi$ on the index set of the independent matrix elements,
\[
\Phi \st \{(i, j) \st 1\le i\le  j \le N \} \;\to\; \h{1, \ldots, \gamma_{\rm max}} \,, \qquad \gamma_{\rm max} \;\deq\; 
\frac{N(N+1)}{2}\,,
\]
and denote by  $H_\gamma$ the generalized Wigner matrix whose matrix elements $h_{ij}$ follow
the $v$-distribution if $\Phi(i,j) \le \gamma$ and the $w$-distribution
otherwise. In particular, $H_0 = H^{\f v}$ and $ H_{\gamma_{\rm max}} = H^{\f w}$. Hence
\begin{align}\label{tel}
& \big [ \E^{\f v} - \E^{\f w} \big ]  \,\theta \left[\int_{I } x(E) \, q (y(E)) \, \dd E \right] \;=\; \sum_{\gamma = 
1}^{\gamma_{\rm max}} \Big [ \E^{(H_{\gamma-1})}  - \E^{(H_{\gamma})} \Big ]  \,\theta \left[  \int_{I } x(E) \, q 
(y(E)) \, \dd E\right]
\end{align}
(in self-explanatory notation).

Let us now fix a $\gamma$ and let $(a,b)$ be determined by $\Phi (a, b) = \gamma$. Throughout the following we consider 
$\alpha, \beta, a, b$ to be arbitrary but fixed and often omit dependence on them from the notation. Our strategy is to 
compare $H_{\gamma-1}$ with $H_\gamma$ for each $\gamma$. In the end we shall sum up the differences in the telescopic 
sum \eqref{tel}.

Note that $H_{\gamma - 1}$ and $H_\gamma$ differ only in the matrix elements indexed by $(a,b)$ and $(b,a)$.  Let 
$E^{(ij)}$ denote the matrix whose matrix elements are zero everywhere except
at position $(i,j)$ where it is 1; in other words, $E^{(ij)}_{k\ell}=\delta_{ik}\delta_{j\ell}$.
Thus we have
\begin{align}
H_{\gamma-1} &\;=\; Q + \frac{1}{\sqrt{N}}V\,, \qquad V \;\deq\; v_{ab}E^{(ab)}
+ v_{ba}  E^{(ba)}\,,
\notag \\ \label{defHg1}
H_\gamma &\;=\; Q + \frac{1}{\sqrt{N}} W\,, \qquad W \;\deq\; w_{ab}E^{(ab)} +
	w_{ba} E^{(ba)}\,.
\end{align}
Here $Q$ is the matrix obtained from $H_{\gamma}$ (or, equivalently, from $H_{\gamma - 1}$) by setting the matrix 
elements indexed by $(a,b)$ and $(b,a)$ to zero.
Next, we define the Green functions
\be\label{defG}
		R \;\deq\; \frac{1}{Q-z}\,, \qquad S \;\deq\; \frac{1}{H_{\gamma-1}-z}\,.
\ee
We shall show that the difference between the expectation $\E^{(H_{\gamma-1})}$ and
 $\E^{(Q)}$ depends only on the second moments of $v_{ab}$, up to an error term that is affordable even after summation 
over $\gamma$. Together with same argument applied to $\E^{(H_\gamma)}$, and the fact that the second moments of 
$v_{ab}$ and $w_{ab}$ are identical, this will prove Lemma \ref{33}.

For the estimates we need the following basic result, proved in \cite{EYYrigi} (Equation (6.32)).
\begin{lemma} \label{lemma: Rij bound}
For any $\eta' \deq N^{-2/3 - \delta}$ we have \hp
\begin{align*}
\sup_{E \leq N^{-2/3 + \epsilon}}\max_{i,j} \absb{R_{ij}(E + \ii \eta') - \delta_{ij}m_{sc}(E + \ii \eta')} \;\leq\; 
\Lambda_\delta \;\deq\; N^{-1/3 + 2 \delta}\,.
\end{align*}
The same estimates hold for $S$ instead of $R$.
\end{lemma}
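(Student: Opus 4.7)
The plan is to prove the bound for $S$ first via a direct appeal to the strong local semicircle law (Theorem \ref{45-1}), and then transfer it to $R$ through a rank-two resolvent perturbation argument that exploits the fact that $Q$ and $H_{\gamma-1}$ differ only at the two matrix entries indexed by $(a,b)$ and $(b,a)$.

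For the $S$-bound: since the first two moments of the $\f v$- and $\f w$-entries match, $H_{\gamma-1}$ is itself a generalized Wigner matrix satisfying Assumptions \textbf{(A)}--\textbf{(C)} and the subexponential decay hypothesis, so Theorem \ref{45-1}(ii) applies directly to $S$, giving \hp
\begin{equation*}
\Lambda_d(z) + \Lambda_o(z) \;\leq\; (\log N)^{4L} \pBB{\sqrt{\frac{\im m_{sc}(z)}{N\eta'}} + \frac{1}{N\eta'}}
\end{equation*}
uniformly over $z \in \bS_L$. For $z = E + \ii \eta'$ with $\eta' = N^{-2/3-\delta}$ and $E$ in the relevant near-edge range, the standard square-root estimate $\im m_{sc}(E+ \ii \eta') \leq C\sqrt{|E+2| + \eta'} \leq C N^{-1/3 + \epsilon/2}$ substituted into the above display yields $\Lambda_d + \Lambda_o \leq N^{-1/3 + 2\delta} = \Lambda_\delta$ after absorbing logarithmic factors, which is the claimed bound for $S$.

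For $R$, I use the resolvent identity $R - S = N^{-1/2} S V R$. Since $V$ is supported only on the positions $(a,b)$ and $(b,a)$, and $|v_{ab}| \leq \varphi^c$ \hp by subexponential decay, the identity reads
\begin{equation*}
R_{ij} \;=\; S_{ij} + N^{-1/2} \pb{S_{ia} v_{ab} R_{bj} + S_{ib} v_{ba} R_{aj}}
\end{equation*}
for every $i,j$. Fixing an arbitrary $j$ and setting $i \in \{a,b\}$ produces a $2 \times 2$ linear system for $(R_{aj}, R_{bj})$ whose coefficient matrix is $I + O(N^{-1/2}\varphi^c)$, by virtue of the just-established $S$-bounds (entries of $S$ on the $\{a,b\}$ block are bounded by $|m_{sc}| + \Lambda_\delta = O(1)$). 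This system is therefore invertible with $O(1)$ inverse, and solving it gives $|R_{aj}|, |R_{bj}| = O(1)$ for every $j$. Plugging back into the identity yields $|R_{ij} - S_{ij}| \leq C N^{-1/2} \varphi^c$, which is much smaller than $\Lambda_\delta = N^{-1/3 + 2\delta}$, and thus the $S$-bound transfers to $R$ without loss.

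The main obstacle I anticipate is the passage from the local semicircle law -- which delivers a high-probability bound at each fixed spectral parameter -- to the supremum over $E$ claimed in the statement. This is handled by a standard net argument: discretize $E$ on a grid of polynomially small spacing, and use the crude Lipschitz estimate $\|\partial_z R\| \leq (\eta')^{-2}$ (together with its analogue for $S$) to pass to the continuous supremum with only a $\me^{-\varphi^c}$ loss in probability. A minor secondary point is the degenerate case $a = b$, in which $V$ has rank one and every step above simplifies.
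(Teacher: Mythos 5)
Your argument is correct, but it takes a different route from the paper, which in fact gives no proof of Lemma \ref{lemma: Rij bound} at all: it is imported wholesale from \cite{EYYrigi} (Equation (6.32)). Your self-contained derivation is a legitimate substitute, and you handle the one genuinely delicate point correctly: Theorem \ref{45-1} cannot be applied to $Q$ directly, since zeroing the $(a,b)$ and $(b,a)$ entries violates Assumption {\bf (A)} by $O(N^{-1})$, and the naive bound $\norm{R} \le 1/\eta' = N^{2/3+\delta}$ inserted into $R - S = N^{-1/2} S V R$ is useless at this scale; your $2\times 2$ self-consistent system for $(R_{aj}, R_{bj})$, whose coefficient matrix is $I + O(\varphi^C N^{-1/2})$ thanks to the already-established bounds on $S_{aa}, S_{ab}, S_{ba}, S_{bb}$ and $\abs{v_{ab}} \le \varphi^C$, is exactly the right bootstrap, giving $\abs{R_{aj}}, \abs{R_{bj}} = O(1)$ and then $\abs{R_{ij} - S_{ij}} \le \varphi^C N^{-1/2} \ll \Lambda_\delta$ for all $i,j$. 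What your approach buys is transparency about why the two-moment matching matters here (it is what makes the interpolating matrix $H_{\gamma-1}$ a generalized Wigner matrix to which Theorem \ref{45-1} applies) and about the stability of the estimate under removal of a single entry; the paper's citation buys brevity, since \cite{EYYrigi} proves the statement for $Q$ directly.

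Three minor remarks. First, the net argument you anticipate is unnecessary: in Theorem \ref{45-1} the union over $z \in \bS_L$ sits inside the probability, so \eqref{Lambdaodfinal} is already uniform in $E$, and your transfer step is deterministic on the intersection of that event with $\hb{\abs{v_{ab}} \le \varphi^C}$; the fallback you describe is harmless but redundant. Second, your exponent bookkeeping (using $\im m_{sc}(E + \ii \eta') \le C \sqrt{\abs{E+2} + \eta'}$, so that the first term of \eqref{Lambdaodfinal} is at most $\varphi^C N^{-1/3 + \epsilon/4 + \delta/2}$) fits under $N^{-1/3 + 2\delta}$ only when $\epsilon < 6\delta$; this holds in every application in Section \ref{pf of 33}, where $\delta \in \{\epsilon, 6\epsilon\}$, but it deserves to be said explicitly rather than hidden in ``absorbing logarithmic factors''. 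Third, you correctly read the supremum as ranging over the edge window $\abs{E+2} \le N^{-2/3+\epsilon}$; taken literally, ``$E \le N^{-2/3+\epsilon}$'' would include bulk energies, where at $\eta' = N^{-2/3-\delta}$ the claimed bound fails (the fluctuation of $\im G_{ii}$ there is of order $(N\eta')^{-1/2} = N^{-1/6+\delta/2}$), so the statement must indeed be interpreted as you did.
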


Our comparison is based on the resolvent expansion
\be\label{SR-N}
	 S \;=\;  R -  N^{-1/2} RVR+  N^{-1} (RV)^2R -  N^{-3/2} (RV)^3R+  N^{-2} (RV)^4S.
\ee
Using Lemma \ref{lemma: Rij bound} we easily get \hp, for $i \neq j$,
\be
| S _{ij}-R_{ij}| \;\leq\; \varphi^C N^{-1/2}\Lambda_\epsilon^{2-r}
\qquad \text{where }
r \;\deq\; {\bf 1}(i\in\{a,b\})+{\bf 1}(j\in\{a,b\})\,.
\ee
Defining
\begin{align} \label{def: Delta X}
\Delta X_{ij,k} \;\deq\; S_{ik}\overline{S_{jk}}-R_{ik}\overline{R_{jk}}\,,
\end{align}
we therefore have the trivial bound \hp
\be\label{346}
\left|\Delta X_{ij,k} \right| \;\leq\; \varphi^C N^{-1/2}\Lambda_\epsilon^{3-s} \qquad (k \neq i,j)\,,
\ee
where we abbreviated
\begin{align}
s \;\deq&\; \max\hB{{\bf 1}(i\in\{a,b\})+{\bf 1}(k\in\{a,b\})\,,\, {\bf 1}(j\in\{a,b\})+{\bf 1}(k\in\{a,b\})}
\notag \\ \label{defsmin}
\;=&\; \indb{\{i,j\}\cap\{a,b\} \neq \emptyset}+ {\bf 1}(k\in\{a,b\})\,.
\end{align}
The variable $s$ counts the maximum number of diagonal resolvent matrix elements in $\Delta X_{ij,k}$. The bookkeeping 
of $s$ will play a crucial role in our proof, since the smallness associated with off-diagonal elements (see Lemma 
\ref{lemma: Rij bound}) is needed to control the resolvent expansion \eqref{SR-N} under the two-moment matching 
assumption.

From now on it is convenient to modify slightly our notation, and to write $\E F(H_\gamma)$ instead of $\E^{(H_\gamma)} 
F(H)$. We also use $\E_{ab}$ to denote partial expectation obtained by integrating out the variables $v_{ab}$ and 
$w_{ab}$.

By applying \eqref{SR-N} to \eqref{def: Delta X} and taking the partial expectation $\E_{ab}$, one finds, as above, that 
there exists a random variable $A_1$, which depends on the randomness only through $Q$ and the first two moments of 
$v_{ab}$, such that for $k\neq i,j$ and $s$ as in\eqref{defsmin} we have \hp
\be\label{338}
\left|\E_{ab} \, \Delta X_{ij,k} -A_1\right|
\;\leq\;
\varphi^C N^{-3/2}\Lambda_\epsilon^{3-s }\,.
\ee
Using this bound we may estimate
\be
\Delta x(E) \;\deq\; x^S(E) - x^{R}(E) \,, \qquad \Delta y(E) \;\deq\; y^S(E) - y^R(E)\,,
\ee
with the convention that a superscript $S$ denotes a quantity defined in terms of the matrix $H_{\gamma - 1}$, and a 
superscript $R$ a quantity defined in terms of the matrix $Q$.

\begin{lemma} \label{lemma: main comparison estimate}
For fixed $\al,\beta,a,b$ there exists exists a random variable $A$, which depends on the randomness only through $Q$ 
and the first two moments of $v_{ab}$, such that
\be\label{340}
\E \, \theta \left[  \int_{I }  x^S(E) \, q\pb{ y^S(E)} \,\dd E\right]
- \E \, \theta \left[   \int_{I }  x^R(E) \, q\pb{ y^R(E)} \, \dd E\right] \;=\; A + o\pb{N^{-2 + t}+N^{-2 + 
{\bf 1}(a=b)}}\,,
\ee
where $t \deq {|\{a,b\}\cap   \{\al,\beta\}|}$\,.
\end{lemma}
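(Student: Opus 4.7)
The plan is to combine a Taylor expansion of $\theta$ in the differences $\Delta x(E)$ and $\Delta y(E)$ with the resolvent expansion \eqref{SR-N}, then integrate out $v_{ab}$ and $v_{ba}$ and collect the contributions that depend only on the first two moments into the random variable $A$. Write
\begin{equation*}
F^S \;\deq\; \int_I x^S(E) q(y^S(E)) \,\dd E\,, \qquad F^R \;\deq\; \int_I x^R(E) q(y^R(E))\, \dd E
\end{equation*}
and expand $\theta(F^S)=\theta(F^R)+\theta'(F^R)(F^S-F^R)+\tfrac12 \theta''(F^R)(F^S-F^R)^2+\tfrac16 \theta'''(\xi)(F^S-F^R)^3$ for some intermediate $\xi$. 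Simultaneously, expand the integrand $x^S q(y^S)=x^R q(y^R) + \Delta x\, q(y^R) + x^R q'(y^R)\Delta y + \tfrac12 x^R q''(y^R)(\Delta y)^2 + \Delta x\,q'(y^R)\Delta y + \cdots$ up to a sufficiently high joint degree in $(\Delta x,\Delta y)$. This reduces \eqref{340} to controlling polynomial expressions in $\Delta X_{ij,k}$.

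Next, use the resolvent identity \eqref{SR-N} to write $S=\sum_{d=0}^{3} (-1)^d N^{-d/2} (RV)^d R + N^{-2}(RV)^4 S$ and substitute into the definition of $\Delta X_{ij,k}$. This yields an expansion
\begin{equation*}
\Delta X_{ij,k} \;=\; \sum_{d=1}^{4} N^{-d/2}\, P^{(d)}_{ij,k}(v_{ab},v_{ba}) \;+\; \mathrm{Rem}_{ij,k}\,,
\end{equation*}
where each $P^{(d)}_{ij,k}$ is a homogeneous polynomial of degree $d$ in $(v_{ab},v_{ba})$ whose coefficients are products of entries of $R$. The point is that each factor $R_{\mu\nu}$ with $\mu\neq\nu$ carries the smallness $\Lambda_\epsilon=N^{-1/3+2\epsilon}$ by Lemma \ref{lemma: Rij bound}, and the minimal number of such off-diagonal factors is governed by the quantity $s$ in \eqref{defsmin}, giving the bounds \eqref{346}--\eqref{338}. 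Substituting into the Taylor expansion of $\theta$ and expressing everything as a polynomial in $(v_{ab},v_{ba})$, we take the partial expectation $\E_{ab}$: monomials of odd total degree in $(v_{ab},v_{ba})$ vanish by centering, the constant term reproduces $\E\,\theta(F^R)$, and the degree-two monomials depend on $\E v_{ab}^2$, $\E \bar v_{ab}^2$, and $\E|v_{ab}|^2$ only. Collect all of these surviving contributions (including the cross-terms coming from the Taylor expansion of $\theta$ and of $q$) into the random variable $A$, which therefore depends on the randomness only through $Q$ and the first two moments of $v_{ab}$.

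What remains are the monomials of degree $3$ and $4$ in $(v_{ab},v_{ba})$, together with the resolvent-expansion remainder and the cubic Taylor remainder $\theta'''(\xi)(F^S-F^R)^3/6$. These we bound using \eqref{346} combined with the sizes $|I|\leq \varphi^C N^{-2/3}$, the prefactor $N\eta\leq \varphi^C N^{1/3-\epsilon}$ in $x$, and the prefactor $\tilde\eta N\leq \varphi^C N^{1/3-6\epsilon}$ in $y$, plus the subexponential tail of $v_{ab}$ to control the higher moments entering through $P^{(3)}$ and $P^{(4)}$. The key bookkeeping is the count $s$ in \eqref{defsmin}: whenever none of $\alpha,\beta,k$ lies in $\{a,b\}$ we gain three off-diagonal factors $\Lambda_\epsilon^3=N^{-1+6\epsilon}$, and when $\{\alpha,\beta\}$ is disjoint from $\{a,b\}$ we gain the extra factor $\Lambda_\epsilon^{t}$ reflected in the exponent $N^{-2+t}$; the case $a=b$ on the diagonal is handled analogously, with the factor $\mathbf 1(a=b)$ arising because the two-element matrix $V$ collapses to a single entry.

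The main obstacle is precisely this power counting. The two-moment matching forces us to isolate the $v_{ab}^0$ and $v_{ab}^2$ contributions exactly; every term that survives into the error has at least three factors of $N^{-1/2}$ from \eqref{SR-N}, but because the telescopic sum \eqref{tel} runs over $\gamma_{\max}=O(N^2)$ indices and because $x(E)$ and $y(E)$ themselves involve sums over $k$ and an integral over $E$, a naive estimate is not sufficient. The refined split into the sizes $N^{-2+t}$ and $N^{-2+\mathbf 1(a=b)}$, obtained by exploiting the smallness $\Lambda_\epsilon$ of off-diagonal $R$-entries via the counting function $s$, is what allows the error to be summable against $\gamma_{\max}=O(N^2)$ and hence yields the conclusion of Lemma \ref{33} once \eqref{340} is combined with the matching $\E v_{ab}^2=\E w_{ab}^2$.
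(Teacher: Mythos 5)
Your overall architecture matches the paper's: Taylor expand $\theta$ and $q$ in $\Delta x,\Delta y$, expand $\Delta X_{ij,k}$ via \eqref{SR-N} as a polynomial in $(v_{ab},v_{ba})$, absorb the degree-$\le 2$ contributions into $A$ after taking $\E_{ab}$, and control the rest by power counting keyed to the index $s$ of \eqref{defsmin}. But there is a genuine gap at the decisive step. First, a smaller point: degree-three monomials do \emph{not} vanish under $\E_{ab}$ by centering — only the first moment is zero, and the third moments of $v_{ab}$ are neither assumed to vanish nor to match — so they must be treated as errors, as you do in your next sentence. The real problem is that these cubic terms \emph{cannot} be bounded by the absolute-value power counting you invoke. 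For $t=0$, $a\neq b$, the third-order piece $\Delta x_0^{(3)}$ satisfies only $\abs{\Delta x_0^{(3)}}\le N^{-7/6+C\epsilon}$ (three off-diagonal factors $\Lambda_\epsilon^3$ times $N^{-3/2}$ times the prefactor $N\eta$ and the entropy factor $N$ from the $k$-sum), so $\int_I\abs{\Delta x_0^{(3)}}\,\dd E\le N^{-11/6+C\epsilon}$. This is \emph{not} $o(N^{-2})$, and summed over the $O(N^2)$ pairs with $t=0$ in the telescopic sum it diverges like $N^{1/6}$. No amount of bookkeeping with $s$, $\abs{I}$, or the subexponential tails closes this $N^{1/6}$ deficit.

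The paper's proof therefore contains an additional, essential argument that your proposal is missing: the estimate \eqref{368}--\eqref{375}. One shows that the \emph{expectation} of the surviving cubic term is $o(N^{-2})$ even though its absolute value is not, by isolating the off-diagonal resolvent entry carrying the index $a$ (e.g.\ $S_{\beta a}$), writing it via \eqref{376} as $-S_{aa}\sum_{j\neq a}S^{(a)}_{\beta j}h_{ja}$, and observing that the leading part $m_{sc}\sum_j S^{(a)}_{\beta j}h_{ja}$ has vanishing conditional expectation with respect to the $a$-th column of $H$ (the remaining factors having been decoupled from that column via the identity \eqref{Gij Gijk}). The subleading part carries an extra factor $\abs{S_{aa}-m_{sc}}\le N^{-1/3+C\epsilon}$, which upgrades the bound from $N^{-1}$ to $N^{-4/3+C\epsilon}$ in \eqref{373} and turns $N^{-11/6}$ into $N^{-13/6}=o(N^{-2})$. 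Without this decoupling/cancellation step the two-moment comparison does not close, and this is precisely the point where the edge result is harder than the four-moment bulk result of Theorem \ref{t2}, where cubic and quartic terms are simply matched away.
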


Before proving Lemma \ref{lemma: main comparison estimate}, we show how it implies Lemma \ref{33}.
\begin{proof}[Proof of Lemma \ref{33}]
It suffices to prove that each summand in \eqref{tel} is bounded by $o\pb{N^{-2 + t} + N^{-2 + \ind{a = b}}}$. This 
follows immediately by applying Lemma \ref{lemma: main comparison estimate} to $S = (H_{\gamma - 1} - z)^{-1}$ and $S' \deq (H_\gamma - z)^{-1}$ and 
subtracting the statements; note that the random variables $A$ in the statement of Lemma \ref{lemma: main comparison 
estimate} are by definition the same for $S$ and $S'$.
\end{proof}

\begin{proof}[Proof of Lemma \ref{lemma: main comparison estimate}]
Throughout the proof of Lemma \ref{lemma: main comparison estimate} we shall abbreviate $H \equiv H_{\gamma - 1} = 
(h_{ij})$, as well as $S \equiv S(z) = (H - z)^{-1}$.

Since $E\in I$ (recall \eqref{defEI}) we get from Theorem \ref{45-1} that \hp
\be \label{estimate on x(E)}
|x(E)| \;\leq\; \varphi^C N^2\eta \Lambda^2_\epsilon \;\leq\; \frac{N^{C \epsilon}}{
\eta }\,,
\ee
which implies
\be \label{bound on x}
 \int_{I } | x(E) | \,
 \dd E \;\leq\; N^{C \e}\,.
\ee
Here we adopt the convention that if $x$ or $y$ appears without a superscript, the claim holds for both superscripts $R$ 
and $S$. Similarly, we find \hp
\be \label{bound on y}
|y(E)| \;\leq\; \tilde \eta \, \varphi^C N^{-2/3}N^2 \Lambda^2_{6 \epsilon} \;\leq\; N^{C \e}\,.
\ee

Next, in the definition of $x(E)$ and $y(E)$ we condition on the variable $s$ defined in \eqref{defsmin} by introducing, 
for $s = 0,1,2$,
\begin{align*}
x_s(E) &\;\deq\; \frac {N \eta} \pi\sum_{k\neq \al,\beta} X_{\al\beta, k}(E + \ii \eta)
\, \indB{s=\indb{\{\al, \beta\} \cap \{a,b\} \neq \emptyset} + {\bf 1}(k\in\{a,b\})}\,,
\\
y_s(E) &\;\deq\; \tilde \eta \int_{E_L}^{E^-}  \sum_{i \neq k}X_{ii,k}(\tilde E + \ii \tilde \eta) \, \dd \tilde E \, 
\indB{s={\bf 1} (i\in\{a,b\})  + {\bf 1}(k\in\{a,b\})}\,.
\end{align*}
As above, $s$ is a bookkeeping index that bounds the number of diagonal resolvent matrix elements appearing in the 
resolvent expansion.

We abbreviate $\Delta x_s(E) \deq x_s^S(E)-x_s^R(E)$ and $\Delta y_s(E)= y_s^S(E)-y_s^R(E)$. Recalling 
the definition $t = {|\{a,b\}\cap  \{\al,\beta\}|}$, we find \hp
\be\label{deftab}
\left|\Delta x_s(E)\right| \;\leq\; \varphi^C N \eta N^{-1/2}\Lambda_\epsilon^{3-s} N^{\ind{s = \ind{t > 0}}} \;\leq\; 
\frac{\eta^{s-2}}{N^{ 3/2-t - C \epsilon} }\,,
\ee
where we used Theorem \ref{45-1} and the elementary inequality $s + \indb{s = \ind{t > 0}} \leq t + 1$ which holds if 
$x_s(E) \neq 0$.
Thus we get \hp
\be\label{346a}
  \int_{I }   |\Delta x_s(E)|
 \, \dd E \;\leq\;   \frac{\eta^{s-1}}{N^{ 3/2-t-C\e} } \;=\; N^{-5/6} N^{-2s/3 + t+C\e}\,.
 \ee
Now we may argue similarly to \eqref{338}. We find that, for any $E$-dependent random variable $f \equiv f(E)$ 
independent of $h_{ab}$, there exists a random variable $A_2$, which depends on the randomness only through $Q$, $f$, 
and the first two moments of $h_{ab}$, such that \hp
\be\label{346b}
 \left| \int_{I }  \left( \E_{ab} \, \Delta x_s(E) \right) f(E)
 \, \dd E-A_2\right| \ind{\Omega}
 \;\leq\;   \|f \, \ind{\Omega}\|_\infty \, N^{-11/6} \, N^{-2s/3+t+C\e}\,,
\ee
where $\Omega$ is any event. Note that, as in \eqref{338}, we find that \eqref{346b} is suppressed by a factor $N^{-1}$ 
compared to \eqref{346}. This may be easily understood, as the leading order error term in the resolvent expansion of 
\eqref{346} is of order $1$ in $H$, whereas the leading order error term in \eqref{346b} is of order $3$ in $H$. These 
error terms have the same number of off-diagonal elements (estimated using Lemma \ref{lemma: Rij bound}), and the same 
entropy factor of the summation indices.

We may derive similar bounds for $y_s(E)$. As in \eqref{346}, we have \hp
\be\label{344a}
\left|\Delta y_s(E)\right| \;\leq\; \varphi^C \tilde \eta \, N^{-2/3}N^{2-s} N^{-1/2} \Lambda^{3-s}_{6 \epsilon} 
\;\leq\; N^{-5/6} N^{-2s/3+C\e}\,.
\ee
Furthermore, we find that there exists an $E$-dependent random variable $A_3(E)$, which depends on the randomness only 
through $Q$ and the first two moments of $h_{ab}$, such that \hp
\be\label{344b}
\left|\E_{ab}^{(H_{\gamma - 1})} \Delta y_s(E)-A_3(E)\right| \;\leq\;    N^{-11/6} N^{-2s/3+C\e}\,.
\ee 
 
After these preparations, we may now estimate the error resulting from setting $h_{ab}$ to zero in the expression $\E\, 
\theta \left[  \int_{I }  x(E)\, q(  y(E)) \, \dd E\right]$. Recalling the conditioning over $s = 0,1,2$, we find
\begin{equation*}
\theta \left[   \int_{I }  x^S\, q(  y^S) \, \dd E\right]
\\
=\; \theta \left[  \int_{I }  \left(x^R+\Delta x_0+\Delta x_1+\Delta x_2\right)\, q \pB{y^R+\Delta y_0+\Delta 
y_1+\Delta y_2} \, \dd E\right]\,;
\end{equation*}
here and in the following we omit the argument $E$ unless it is needed.
Using \eqref{344a} we have \hp
\begin{multline*}
\theta \left[ \int_{I }  \left(x^R+\Delta x_0+\Delta x_1+\Delta x_2\right)\, q \pB{y^R+\Delta y_0+\Delta y_1 + 
\Delta y_2} \, \dd E\right]
\\
=\; \theta \left[  \int_{I } \pB{x^R+\Delta x_0+\Delta x_1+\Delta x_2}\, \pB{q ( y^R)+ q'(y^R) \left(\Delta y_0 + \Delta 
y_1   \right) + q''(y^R) (\Delta y_0)^2} \,  \dd E\right]+o(N^{-2})\,.
\end{multline*}
The use of the mean value theorem for $\epsilon$ small enough is easy to justify using the assumption on $\theta$ and 
the bounds \eqref{bound on x} and \eqref{bound on y}. In the following we shall no longer mention such estimates of the 
argument of derivatives of $\theta$, which can always be easily checked in a similar fashion.

Recall that an error of order $o(N^{-2+t})$ is affordable in the error estimate. Thus, using the basic power counting 
given by \eqref{bound on x}, \eqref{bound on y}, \eqref{346a}, and \eqref{344a}, we find \hp
\begin{multline} \label{354}
\theta \left[  \int_{I }  x^S\, q(  y^S)  \, \dd E \right]
-\theta \left[   \int_{I }  x^R\, q(  y^R)  \, \dd E \right]
\;=\; \theta' \left[   \int_{I }  x^R\, q(  y^R)  \, \dd E \right]
\\
{}\times{}\left[  \int_{I }\Big( \left(
 \Delta x_0+\Delta x_1 \right) q ( y^R) + x^Rq '( y^R) \left(\Delta y_0+\Delta y_1  \right) +  \Delta x_0 \, 
q '( y^R)\Delta y_0
 +x^R q''(y^R) (\Delta y_0)^2  \Big) \, \dd E \right]\\
 +\frac12 \, \theta'' \left[   \int_{I }  x^R\, q(  y^R)  \, \dd E \right] \left[  \int_{I }\Big(  \Delta x_0 \, 
q ( y^R) + x^Rq '( y^R) \Delta y_0 \Big) \, \dd E \right] ^2 + o(N^{-2 + t})\,.
\end{multline}
We now start dealing with the individual terms on the right-hand side of \eqref{354}.

First, we consider the terms containing $\Delta x_1$ and $\Delta y_1$. Applying \eqref{346b} and \eqref{344b} we find 
that there exists a random variable $A_4$, which depends on the randomness only through $Q$ and the first two moments of 
$h_{ab}$, such that
\be \left|\E_{ab} \,    \int_{I }\left( \Delta x_1 \, q ( y^R)+x^Rq '( y^R) \Delta y_1\right)  \, \dd E  -A_4\right| 
\;=\; o(N^{-2 + t})
 \ee
\hp.
Inserting this into \eqref{354}, we find \hp
\begin{multline}\label{356}
\E_{ab}\, \theta \left[  \int_{I }  x^S\, q(  y^S)  \, \dd E \right]
-\E_{ab}\,\theta \left[   \int_{I }  x^R\, q(  y^R)  \, \dd E \right]
\;=\; \E_{ab}\,\theta' \left[   \int_{I }  x^R\, q(  y^R)  \, \dd E \right]
\\
{}\times{}\left[  \int_{I }\Big(\Delta x_0 \, q ( y^R) + x^Rq '( y^R) \Delta y_0 +  \Delta x_0 q '( 
y^R)\Delta y_0
 +x^R q''(y^R) (\Delta y_0)^2  \Big) \, \dd E \right]
 \\
+\frac12 \, \E_{ab}\,\theta'' \left[   \int_{I }  x^R\, q(  y^R)  \, \dd E \right] \left[  \int_{I }\Big(  \Delta x_0 \, 
q ( y^R) + x^Rq '( y^R) \Delta y_0 \Big) \, \dd E \right] ^2
+ A_4+o(N^{-2 + t})\,.
 \end{multline}

Thus we only need to focus on the error terms $\Delta x_0$ and $\Delta y_0$. Note that we have
\begin{align} \label{dx0}
\Delta x_0(E) &\;=\; \ind{t = 0} \, \frac {N \eta} \pi\sum_{k\neq \al,\beta,a,b} \Delta X_{\alpha \beta, k} (E + \ii 
\eta)
\\ \label{dy0}
\Delta y_0(E) &\;=\; \tilde \eta \int_{E_L}^{E^-} \dd \tilde E \, \sum_{i \neq k} \ind{i,k \notin \{a,b\}} \, \Delta 
X_{ii,k}(\tilde E + \ii \tilde \eta)\,.
\end{align}
Recall that the $(i,j)$-component of the resolvent expansion \eqref{SR-N} reads
\be
    S_{ij} = \left( R -  N^{-1/2} RVR+  N^{-1} (RV)^2R -  N^{-3/2} (RV)^3R+  N^{-2} (RV)^4S\right)_{ij}.
\ee
Now we assume that $i\neq j$ and $|\{i,j\} \cap \{a,b\}|=0$. It is easy to see that this assumption holds for any matrix 
element in the formulas \eqref{dx0} and \eqref{dy0}. Then we can use Lemma \ref{lemma: Rij bound} to estimate the $m$-th 
term as follows:
\be\label{51}
\absB{N^{-m/2}\big[(RV)^m R\big]_{ij}} \;\leq\; N^{-m/2+C\e}N^{-2/3} \,, \qquad
\absB{N^{-2}\big[(RV)^4 S\big]_{ij}} \;\leq\;  N^{-8/3+C\e}\,,
\ee
\hp. 

Next, we apply the resolvent expansion to $X_{ij,k}$. Note that in our applications errors of size $O(N^{-8/3-c})$ are 
affordable in $\Delta X_{ij,k}$ for some $c > 0$ independent of $\epsilon$ (see \eqref{no diag terms} and 
\eqref{defxy}). Now let us assume that the indices $i, j, a,b,k$ satisfy the condition
\begin{itemize}
\item[$(*)$]  $\{i,j\}\cap \{a,b\} = \emptyset$ and $k \neq i, j,a,b$.
\end{itemize}
In the applications we shall set $i = \alpha$ and $j = \beta$ in \eqref{dx0}, and $i = j$ in \eqref{dy0}. In both cases, 
it is easy to check that the condition $(*)$ is satisfied for nonvanishing summands.

We can therefore separate $\Delta X_{i j,k}$ into three parts, indexed according to how many $V$-matrix elements they 
contain,
\be \label{Delta X split}
\Delta X_{i j,k} \;=\; \Delta X^{(1)}_{i j,k} +\Delta X^{(2)}_{i j,k} +\Delta X^{(3)}_{i j,k} +O(N^{-3+C\e})
\ee
\hp;
here we defined
\begin{align}
\label{sdd1}
\Delta X^{(1)}_{i j,k} &\;\deq\; - N^{-1/2} R_{i k} \overline{(RVR)}_{j k} + [C]_1\,,
\\
\label{sdd2}
\Delta X^{(2)}_{i j,k} &\;\deq\; N^{-1 }R_{i k}\overline{(RVRVR)}_{j k} + N^{-1 }(RVR) _{i k}\overline{(RVR)}_{j k} + 
[C]_1\,,
\\
\label{sdd3}
\Delta X^{(3)}_{i j,k} &\;\deq\; - N^{-3/2 }R_{i k}\overline{(RVRVRVR)}_{j k}
- N^{-3/2 }(RVR)_{i k}\overline{(RVRVR)}_{j k} + [C]_2\,,
\end{align}
where $[C]_l$, $l = 1,2$, means the complex conjugate of the first $l$ terms on the right-hand side with $i$ and $j$ 
exchanged.
Furthermore, it easy to see that the second term on the right-hand side of \eqref{sdd3} is of order $ O(N^{-17/6+C\e})$.
Thus we find \hp
\begin{align}
-\Delta X^{(3)}_{i j,k} &\;=\; N^{-3/2 } R_{i k}\overline{(RVRVRVR)}_{j k}+N^{-3/2 }(RVRVRVR)_{i k}\overline{R}_{j k} + 
O(N^{-17/6 + C \epsilon})
\notag \\
&\;=\; Y + O(N^{-17/6 + C \epsilon})\,,
\end{align}
where $Y$ is a finite sum of terms of the form
\begin{align} \label{defY}
N^{-3/2 }R_{i k}\overline{(R_{j a}V_{ab}R_{bb}V_{ba}R_{aa}V_{ab}R_{bk})}
\end{align}
and terms obtained from \eqref{defY} by (i) taking the complex conjugate and exchanging $i$ and $j$, and (ii) exchanging 
$a$ and $b$. Using Lemma \ref{lemma: Rij bound} we find that \eqref{defY} is equal to
\begin{equation*}\label{xgamma}
N^{-3/2 }R_{i k}\overline{(R_{j a}V_{ab}R_{bb}V_{ba}R_{aa}V_{ab}R_{bk})}
\;=\; N^{-3/2 } \overline{m_{sc}^2} \, R_{i k} \, \overline{R_{ja} V_{ab} V_{ba} V_{ab} R_{bk}} + O(N^{-17/6 + C 
\epsilon})
\end{equation*}
\hp.
The splitting \eqref{Delta X split} induces a splitting
\be
 \Delta x_0=\Delta x^{(1)}_0+\Delta x^{(2)}_0+\Delta x^{(3)}_0+ O(N^{-5/3+C\e})
\ee
\hp in self-explanatory notation. It is easy to see that
\begin{align} \label{Delta x estimates}
\abs{\Delta x_0^{(1)}} \;\leq\; N^{-1/6 + C \epsilon} \,, \qquad
\abs{\Delta x_0^{(2)}} \;\leq\; N^{-4/6 + C \epsilon} \,, \qquad
\abs{\Delta x_0^{(3)}} \;\leq\; N^{-7/6 + C \epsilon} \,.
\end{align}
From \eqref{dx0} and \eqref{defY}, we find that $\Delta x_0^{(3)}$ is a finite sum of terms of the form
\be\label{366}
\ind{t = 0} \,\sum_{k\neq \al,\beta,a,b} \frac {\eta \, \overline{m_{sc}^2}} {\pi N^{1/2}} \, R_{\alpha k} \, 
\overline{R_{\beta a} V_{ab} V_{ba} V_{ab} R_{bk}}
+ O(N^{-3/2+C\e})
\ee
\hp, where the other terms are obtained from \eqref{366} as described after \eqref{defY}.

Similarly, we find
\be
\Delta y_0=\Delta y^{(1)}_0+\Delta y^{(2)}_0+\Delta y^{(3)}_0+ O(N^{-7/3+C\e})
\ee
and
\begin{align} \label{Delta y estimates}
\abs{\Delta y_0^{(1)}} \;\leq\; N^{-5/6 + C \epsilon} \,, \qquad
\abs{\Delta y_0^{(2)}} \;\leq\; N^{-8/6 + C \epsilon} \,, \qquad
\abs{\Delta y_0^{(3)}} \;\leq\; N^{-11/6 + C \epsilon} \,.
\end{align}

Now we insert these bounds into \eqref{356}. Recall that the upper index $l$ in $\Delta x_0^{(l)}$ and $\Delta 
y_0^{(l)}$ counts the number of $V$-matrix elements. Thus we find, recalling \eqref{356} and the power counting 
estimates \eqref{Delta x estimates} and \eqref{Delta y estimates}, that there is a random variable $A_5$, depending on 
the randomness only through $Q$ and the two first moments of $h_{ab}$, such that
\begin{multline}\label{367}
\E_{ab}\,\theta \left[  \int_{I }  x^S\, q(  y^S)  \, \dd E \right]
-\E_{ab}\,\theta \left[   \int_{I }  x^R\, q(  y^R)  \, \dd E \right] \\
=\; \E_{ab}\,
\theta' \left[   \int_{I }  x^R\, q(  y^R)  \, \dd E \right]
 \left[  \int_{I }\Big( \Delta x_0^{(3)} \, q ( y^R) + x^Rq '( y^R) \, \Delta y_0^{(3)} \, \dd E \right]
\\
+ A_4 + A_5 +o(N^{-2 + t})\,.
\end{multline}
\hp. Moreover, by the same power counting estimates we find that the second line of \eqref{367} is bounded by 
$o(N^{-1})$.  We use this rough bound in the case $a = b$, and get
\begin{multline}
\E_{ab} \,\theta \left[  \int_{I }  x^S\, q(  y^S)  \, \dd E \right]
-\E_{ab} \, \theta \left[   \int_{I }  x^R\, q(  y^R)  \, \dd E \right] \\
=\; \ind{a \neq b} \, \E_{ab} \,
\theta' \left[   \int_{I }  x^R\, q(  y^R)  \, \dd E \right]
 \left[  \int_{I }\Big( \Delta x_0^{(3)} \, q ( y^R) + x^Rq '( y^R) \, \Delta y_0^{(3)} \, \dd E \right]
\\
+ A_4 + A_5 +o(N^{-2 + t}) +o(N^{-2 +{\bf 1}(a=b)})
\end{multline}
\hp.

Hence Lemma \ref{lemma: main comparison estimate} is proved if we can show that, for $a\neq b $, we have
\be\label{368}
\E\,\theta' \left[   \int_{I }  x^R\, q(  y^R)  \, \dd E \right] \left[  \int_{I }\Big(
 \Delta x_0^{(3)} \, q ( y^R) + x^Rq '( y^R) \Delta y_0^{(3)} \, \Big)\dd E \right] \;=\; o(N^{-2})
\ee
\hp.
This is proved below.
\end{proof}

\begin{proof}[Proof of \eqref{368}]
We shall prove, for $a \neq b$, that
\be\label{369}
\E\,\theta' \left[   \int_{I }  x^R\, q(  y^R)  \, \dd E \right]
 \left[  \int_{I } \Delta x_0^{(3)} \, q ( y^R) \, \dd E \right] \;=\; o(N^{-2})\,.
\ee
The other term on the left-hand side of \eqref{368} is estimated similarly. Let us abbreviate
\begin{align} \label{def of BR}
B^R \;\deq\; \theta' \left[   \int_{I }  x^R\, q(  y^R)  \, \dd E \right]\,.
\end{align}
From \eqref{estimate on x(E)} and the assumption on $\theta$, we find that $\abs{B^R} \leq N^{C \epsilon}$ \hp.

We shall estimate the contribution to \eqref{369} of one term of the form \eqref{366}. Recalling that $\E_{ab} \, 
\abs{V_{ab}}^3 = O(1)$ and $m_{sc}=O(1)$, we find the bound
\begin{multline} \label{estimate of a repr term}
N^{-1/6 + C \epsilon} \, \max_{k\neq \al,\beta,a,b} \absbb{\E \, B^R \int_I R_{\alpha k} \, \overline{R_{\beta a} 
R_{bk}} \, q(y^R) \, \dd E}
+ o(N^{-2})
\\
\leq\;
N^{-5/6 + C \epsilon} \, \max_{k\neq \al,\beta,a,b} \sup_{E \in I} \absbb{\E \, B^R R_{\alpha k} \, \overline{R_{\beta 
a} R_{bk}} \, q(y^R)}
+ o(N^{-2})\,.
\end{multline}
The proof of \eqref{368} is therefore complete if we can show that, assuming the sets $\{\alpha, \beta\}, \{a\}, \{b\}, 
\{k\}$ are disjoint, we have
\be\label{373}
\left|\E \, R_{\al k}\overline{(R_{\beta a} R_{bk})} \, B^R  q ( y^R)
 \right| \;\leq\; N^{-4/3+C\e}\,.
\ee
In order to prove \eqref{373}, we first use a simple resolvent expansion to show that \hp
\begin{align} \label{374}
\absB{R_{\al k}\overline{(R_{\beta a} R_{bk})} \, B^R q(y^R) - S_{\al k}\overline{(S_{\beta a} S_{bk})} \, 
B^S q(y^S)} \;\leq\; N^{-4/3 + C \epsilon}\,,
\end{align}
where $B^S$ is defined analogously to \eqref{def of BR} with $R$ replaced by $S$. Therefore it suffices to prove
\be\label{375}
\left|\E\, S_{\al k}\overline{(S_{\beta a} S_{bk})} \, B^S  q ( y^S)
 \right|\leq N^{-4/3+C\e}
\ee

In order to complete the proof, we introduce some notation. Recall that $H \equiv H_{\gamma - 1}$ and $S = (H - 
z)^{-1}$.  We define $H^{(a)}$ as the matrix obtained from $H$ by setting its $a$-th column and $a$-th row to be zero.  
For any function $F \equiv F(H)$ we define $F^{(a)} \deq F(H^{(a)})$.
We now remove the $a$-th row and column from $H$ in \eqref{375}, which we can do with a negligible error. The key 
identity is the following resolvent identity, proved in Lemma 4.2 of \cite{EYY}: For $k \neq i,j$ we have
\begin{align} \label{Gij Gijk}
S_{ij} \;=\; S_{ij}^{(k)} + \frac{S_{ik} S_{kj}}{S_{kk}}\,.
\end{align}
Using \eqref{Gij Gijk}, one readily sees that
\be\label{378}
\left|
  S_{\al k}\overline{(S_{\beta a} S_{bk})} B^S q ( y^S)-
S^{(a)}_{\al k}\overline{S_{\beta a} S^{(a)}_{b k}} \,
  \theta' \left[   \int_{I } ( x^S)^{(a)}\, q(  (y^S)^{(a)})  \, \dd E' \right]
q \pb{( y^S)^{(a)}}
  \right| \;\leq\;  N^{-4/3+C\e}\,.
\ee
Moreover, we have
\begin{multline} \label{867}
S^{(a)}_{\al k}\overline{S_{\beta a} S^{(a)}_{b k}} \,
  \theta' \left[   \int_{I } ( x^S)^{(a)}\, q(  (y^S)^{(a)})  \, \dd E' \right]
q \pb{( y^S)^{(a)}}
\\
=\; \pbb{S_{\al k}\overline{S_{b k}} \,
  \theta' \left[   \int_{I } x^S\, q(y^S)  \, \dd E' \right]
q( y^S)}^{(a)} \; \overline{S_{\beta a}}\,.
\end{multline}

Next, we claim that the conditional expectation -- with respect to the variables in the $a$-th column of $H$ -- of 
$S_{\beta a}$ is much smaller than its typical size. To that end, we use the identities, valid for $i \neq j$,
\begin{align} \label{376}
S_{ij} \;=\; - S_{ii} \sum_{k \neq i} h_{ik}  S_{kj}^{(i)}\,,\qquad S_{ij} \;=\; - S_{jj} \sum_{k \neq j}  
S_{ik}^{(j)}h_{kj}\,,
\end{align}
proved in \cite{EKYY2}, Lemma 6.10. Now using \eqref{376} we find
\begin{align} \label{splitting of S}
-S_{\beta a} \;=\; \sum_{j \neq a} S_{aa} S_{\beta j}^{(a)} h_{ja} \;=\; \sum_{j \neq a} m_{sc} S_{\beta j}^{(a)} h_{ja}
+ (S_{aa} - m_{sc}) \sum_{j \neq a}  S_{\beta j}^{(a)} h_{ja}\,.
\end{align}
The conditional expectation with respect to the variables in the $a$-th column of $H$ applied to the first term on the 
right-hand side of \eqref{splitting of S} vanishes; hence its contribution to the expectation of \eqref{867} also 
vanishes. In order to estimate the second term on the right-hand side of \eqref{867}, we note that \hp
\begin{align*}
\abs{S_{aa} - m_{sc}} \;\leq\; N^{-1/3 + C \epsilon}\,,
\end{align*}
by Lemma \ref{lemma: Rij bound}. Moreover, using the large deviation bound (3.9) in \cite{EYYrigi}, we get \hp
\begin{align*}
\absBB{\sum_{j \neq a} S_{\beta j}^{(a)} h_{ja}} \;\leq\; N^{-1/2 + \epsilon} \pBB{\sum_{j \neq a} \abs{S_{\beta 
j}^{(a)}}^2}^{1/2} \;\leq\; N^{-1/2 + \epsilon} \absb{S_{\beta \beta}^{(a)}} + N^\epsilon \max_{j \neq a,\beta} 
\abs{S_{\beta j}^{(a)}} \;\leq\; N^{-1/3 + C \epsilon}\,,
\end{align*}
where in the last step we used \eqref{Gij Gijk} and Lemma \ref{lemma: Rij bound}.  Putting everything together, we find 
that the expectation of \eqref{867} is bounded in absolute value by $N^{-4/3 + C \epsilon}$. By \eqref{378}, this 
completes the proof of \eqref{375}, and hence of \eqref{368}.
\end{proof}

\section{Extension to eigenvalues and several arguments} \label{sect:general edge}
In this section we describe how the arguments of Section \ref{section: proof} extend to general functions $\theta$ as in 
\eqref{11}.

Consider first the case of a single eigenvalue, $\lambda_\beta$, in which case the claim reads
 \be\label{11gen}
 \lim_{N\to \infty} \qb{\E^{\f v} - \E^{\f w}}  \theta \pB{N^{2/3}(\lambda_{\beta} - \gamma_{\beta})} \;=\; 0\,,
\ee
uniformly in $\beta \leq \varphi^\rho$. Denote by $\varrho^{\f v}$ and $\varrho^{\f w}$ the laws of $\lambda_\beta$ in 
the ensembles $H^{\f v}$ and $H^{\f w}$ respectively. Using Theorem \ref{7.1} we find
\be
\E^{\f u} \, \theta (N^{2/3}(\lambda_{\beta} - \gamma_{\beta}) ) \;=\; \int_I \theta(N^{2/3} (E - \gamma_{\beta})) \, 
\varrho^{\f u}(\dd E)
+O(\me^{-\varphi^c})\,,
\ee
where $\f u$ stands for either $\f v$ or $\f w$, and $I$ was defined in \eqref{defEI}. Now integration by parts yields
\be \label{eigenvalue estimate}
\qb{\E^{\f v} - \E^{\f w}}  \theta (N^{2/3}(\lambda_{\beta} - \gamma_{\beta}) )
\;=\;
- \qb{\E^{\f v} - \E^{\f w}} \int_{I}N^{2/3} \, \theta'(N^{2/3}(E - \gamma_{\beta}) ) \, \ind{\lambda_\beta \leq E} \, 
\dd E + O(\me^{-\varphi^c})\,,
\ee
where the boundary terms are of order $O(\me^{-\varphi^c})$ by Theorem \ref{7.1}. Next, we choose a smooth nondecreasing 
function $r_\beta$ that vanishes on the interval $(-\infty, \beta - 2/3]$ and is equal to 1 on the interval $[\beta - 
1/3, \infty)$.  Recalling the definition \eqref{defEL}, we get from \eqref{eigenvalue estimate}
\begin{align*}
\qb{\E^{\f v} - \E^{\f w}}  \theta (N^{2/3}(\lambda_{\beta} - \gamma_{\beta}) )
&\;=\;
- \qb{\E^{\f v} - \E^{\f w}} \int_{I}N^{2/3} \, \theta'(N^{2/3}(E - \gamma_{\beta}) ) \, r_\beta\pb{\cal N(E_L, E)} \, 
\dd E + O(\me^{-\varphi^c})
\\
&\;=\;
- \qb{\E^{\f v} - \E^{\f w}} \int_{I}N^{2/3} \, \theta'(N^{2/3}(E - \gamma_{\beta}) ) \, r_\beta\pb{\tr (\f 1_{[E_L, 
E]} * \theta_{\tilde \eta})(H)} \, \dd E + O\pb{\varphi^C N^{-\epsilon}}\,,
\end{align*}
where in the second step we used the assumption on $\theta$, that $r_\beta'$ is bounded, and Lemma \ref{lem:21} with 
$\tilde \eta \deq N^{-2/3 - 6 \epsilon}$. More precisely, we apply Lemma \ref{lem:21} to estimate, \hp,
\begin{align}
&\mspace{-60mu}
\varphi^C N^{2/3} \int_I \dd E \, \absB{\tr (\f 1_{[E_L, E]} * \theta_{\tilde \eta})(H) - \cal N(E_L, E)}
\notag \\
&\;\leq\; \varphi^C N^{2/3} \int_I \dd E \pB{N^{- \epsilon} + \cal N\pb{E - N^{-2/3 - \epsilon}, E + N^{-2/3 -
\epsilon}}}
\notag \\
&\;\leq\; \varphi^C N^{-\epsilon} + \varphi^C N^{2/3} \sum_{\alpha = 1}^{\varphi^C} \int_I \dd E \, \indb{\abs{E - 
\lambda_\alpha} \leq N^{-2/3 - \epsilon}}
\notag \\ \label{edge error estimate}
&\;\leq\; \varphi^C N^{-\epsilon}\,,
\end{align}
where the first step follows from \eqref{6.10} and the second from Theorem \ref{7.1}.

Integrating by parts again, we find \hp
\begin{multline*}
\qb{\E^{\f v} - \E^{\f w}}  \theta (N^{2/3}(\lambda_{\beta} - \gamma_{\beta}) )
\\
=\;
\qb{\E^{\f v} - \E^{\f w}} \int_{I} \theta(N^{2/3}(E - \gamma_{\beta}) ) \, r_\beta'\pb{\tr (\f 1_{[E_L, E]} * 
\theta_{\tilde \eta})(H)} \, N \im m(E + \ii \tilde \eta) \, \dd E + O\pb{\varphi^C N^{- \epsilon}}\,.
\end{multline*}
Now we may apply the Green function comparison method from Section \ref{pf of 33}. In fact, in this case the analysis is 
easier as we have no fixed indices $i$ and $j$ to keep track of.

The general case, $\theta$ as in \eqref{11}, is treated similarly. Repeating successively the above procedure for each 
argument $\lambda_{\beta_1}, \dots, \lambda_{\beta_k}$, we find that there is a constant $C_k$, depending on $k$, such
that
\begin{multline} \label{general starting point for G comp}
\qb{\E^{\f v} - \E^{\f w}}  \theta \pB{N^{2/3}(\lambda_{\beta_1} - \gamma_{\beta_1}), \ldots, 
N^{2/3}(\lambda_{\beta_k} - \gamma_{\beta_k});  N  \bar u_{\alpha_1} (i_1)  u_{\alpha_1} (j_1),    \ldots, N  \bar 
u_{\alpha_k} (i_k)  u_{\alpha_k} (j_k)}
\\
=\; \qb{\E^{\f v} - \E^{\f w}} \int_{I^k} \dd E_1 \cdots \dd E_k \, \theta \pB{N^{2/3} (E_1 - \gamma_{\beta_1}), \dots, 
N^{2/3} (E_k - \gamma_{\beta_k}); \zeta_1, \dots, \zeta_k}
\\
{} \times{} \prod_{l = 1}^k \qBB{r_{\beta_l}'\pb{\tr (\f 1_{[E_L, E_l]} * \theta_{\tilde \eta})(H)} \, N \im m(E_l + \ii 
\tilde \eta)} + O\pb{\varphi^{C_k} N^{-\epsilon}}\,,
\end{multline}
where we introduced the shorthand
\begin{equation*}
\zeta_l \;\deq\; \frac{N}{\pi} \int_I \dd \wt E \, \wt G_{i_l j_l}(\wt E + \ii \eta) \, q_{\alpha_l}\qb{\tr \f 1_{[E_L, 
\wt E^-]} * \theta_{\tilde \eta}(H)}\,,
\end{equation*}
and set $\eta \deq N^{-2/3 - \epsilon}$; $q_\alpha$ is the function from Lemma \ref{GCC}. Here at each step we used the 
assumption on $\theta$, that $r'_\beta$ is bounded, and the estimate
\begin{equation} \label{integral of der in error}
\int_I \dd E \, N \im m(E + \ii \tilde \eta) \;\leq\; \cal N\pb{-\infty, E + N^{-2/3 - \epsilon /10}} + N^{-\epsilon / 10} 
\;\leq\; N n_{sc}(-2 + \varphi^C N^{-2/3}) + 1 \;\leq\; \varphi^C\,,
\end{equation}
where in the first step we used \eqref{41new}, in the second Theorem \ref{7.1}, and in the third the definition 
\eqref{nsc} of $n_{sc}$.

The randomness on the right-hand side of \eqref{general starting point for G comp} is expressed entirely in terms of 
Green functions; hence \eqref{general starting point for G comp} is amenable to the Green function comparison method 
from Section \ref{pf of 33}.  The complications are merely notational, as we now have $2k$ fixed indices $i_1,j_1, 
\dots, i_k,j_k$ instead of just the two $i,j$.

\section{Eigenvectors in the bulk: proof of Theorem \ref{t2}} \label{section: bulk}

In this section we prove Theorem \ref{t2}. In the bulk the eigenvalue spacing is of order $N^{-1}$ as opposed to 
$N^{-2/3}$ at the edge. Thus, we shall have to take spectral windows of size $\eta = N^{-1 - \epsilon}$. To that end, we 
begin by extending the strong local semicircle law from Theorem \ref{45-1} to arbitrarily small values of $\eta > 0$. 
Recall the notation $z = E + \ii \eta$.

\begin{lemma}\label{Letasc} For any $|E|\leq 5$ and $0 < \eta \leq 10$, we have \hp
\be\label{res43}
\max_{ij}|G_{ij}(z)-\delta_{ij}m_{sc}(z)| \;\leq\; \varphi^C\left( \sqrt{\frac{\im m_{sc}(z)  
}{N\eta}}+\frac{1}{N\eta}\right)
\ee
for large enough $N$.  \end{lemma}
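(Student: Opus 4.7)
The plan is to reduce the small-$\eta$ case to the range covered by Theorem \ref{45-1} via a comparison argument at a fixed, safe scale. Set $\eta_0 \deq N^{-1}\varphi^{C_0}$ for a constant $C_0$ large enough that $z_0 \deq E + \ii\eta_0 \in \bS_L$, and note that for $\eta \geq \eta_0$ the bound \eqref{res43} is an immediate consequence of Theorem \ref{45-1}. Thus I will focus on the range $0 < \eta < \eta_0$, comparing $G_{ij}(z)$ to $G_{ij}(z_0)$, and separately $m_{sc}(z)$ to $m_{sc}(z_0)$.

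For the Green function comparison, the spectral representation gives
\begin{equation*}
G_{ij}(z) - G_{ij}(z_0) \;=\; (z-z_0) \sum_\alpha \frac{\bar u_\alpha(i) u_\alpha(j)}{(\lambda_\alpha - z)(\lambda_\alpha - z_0)}\,.
\end{equation*}
Complete delocalization (Theorem \ref{DTH}) yields $|u_\alpha(i) u_\alpha(j)| \leq \varphi^C / N$ with high probability, and Cauchy--Schwarz combined with the identity $\sum_\alpha |\lambda_\alpha - z|^{-2} = N \im m(z)/\eta$ gives
\begin{equation*}
|G_{ij}(z) - G_{ij}(z_0)| \;\leq\; \varphi^C |z-z_0| \sqrt{\frac{\im m(z)\,\im m(z_0)}{\eta\, \eta_0}}\,.
\end{equation*}
The key tool is then the monotonicity of $\eta \mapsto \eta\, \im m(E + \ii\eta)$, which for $\eta \leq \eta_0$ yields $\im m(z) \leq (\eta_0/\eta)\, \im m(z_0)$. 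Combined with $|z-z_0|\leq \eta_0$ and $\im m(z_0) = O(1)$ (from Theorem \ref{45-1} applied at $z_0$), this collapses to
\begin{equation*}
|G_{ij}(z) - G_{ij}(z_0)| \;\leq\; \frac{\varphi^{C'}}{N\eta}\,,
\end{equation*}
which is exactly the desired form.

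For the deterministic piece, $m_{sc}(z) - m_{sc}(z_0) = \ii\int_\eta^{\eta_0} m_{sc}'(E+\ii t)\,\dd t$, and using the explicit formula \eqref{temp2.8} one sees $|m_{sc}'(E+\ii t)| \leq C(1 + t^{-1/2})$, so $|m_{sc}(z)-m_{sc}(z_0)| \leq C\sqrt{\eta_0}$. Since $\sqrt{\eta_0} = \varphi^{C_0/2}/\sqrt N$ is much smaller than $\varphi^{C'}/(N\eta)$ for any $\eta \leq \eta_0$ (provided $C'$ is chosen with $C' > 3C_0/2$), this contribution is absorbed. Adding the three pieces -- the comparison of $G$, the bound at $z_0$ from Theorem \ref{45-1}, and the comparison of $m_{sc}$ -- produces the stated bound.

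The main technical point to be careful with is the choice of $C_0$ relative to the constant $C$ appearing in the final bound: one must ensure $\eta_0$ is large enough that $z_0 \in \bS_L$ yet small enough that $\varphi^{C_0}/(N\eta) \cdot \eta_0 = \varphi^{2C_0}/(N\eta)$ can be absorbed into the right-hand side of \eqref{res43}. Since $\varphi$ has subpolynomial growth, this is a matter of bookkeeping rather than a genuine difficulty. Once this is fixed, all estimates above hold uniformly on the event of high probability provided by Theorems \ref{45-1} and \ref{DTH}.
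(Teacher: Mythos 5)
Your argument is correct and reaches the lemma by a route that overlaps with, but is not identical to, the paper's. Both proofs reduce to the regime $\eta\le\eta_0$ with $\eta_0$ of order $\varphi^{C_0}N^{-1}$ --- where Theorem \ref{45-1} is available at the reference scale and where $1/(N\eta)\ge\varphi^{-C_0}$, so that every $\varphi^{O(1)}$-sized contribution is absorbed into the right-hand side of \eqref{res43} --- and both rest on the monotonicity of $\eta\mapsto\eta\,\im(\cdot)$ to push control down to small $\eta$. They diverge in how the off-diagonal entries are handled: the paper applies the monotonicity to the diagonal entries $\im G_{ii}$ and then invokes the dyadic-decomposition inequality $|G_{ij}(E+\ii\eta)|\le C\log N\,\max_k\im G_{kk}(E+\ii\eta)$ from \cite{EYY} to bound $|G_{ij}|$ directly at scale $\eta$, whereas you difference $G_{ij}(z)-G_{ij}(z_0)$ through the spectral representation and control the difference using complete delocalization (Theorem \ref{DTH}), Cauchy--Schwarz, and the monotonicity of $\eta\,\im m$. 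Your version stays within results stated in this paper, at the price of routing through the delocalization estimate; the paper's is shorter but cites an external inequality. Two small simplifications: the bookkeeping constraint between $C_0$ and $C'$ at the end is vacuous, since $\varphi$ grows subpolynomially and the constant $C$ in \eqref{res43} may be chosen after $C_0$ is fixed; and the $m_{sc}$-continuity step can be skipped entirely because $|m_{sc}|<1$ on the upper half-plane, so the whole $m_{sc}$ contribution is already $O\pb{\varphi^{C_0}/(N\eta)}$ in this regime.
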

\begin{proof}
By Theorem \ref{45-1}, we only need to consider $\eta \leq y \deq \varphi^{C_1} N^{-1}$ for some $C_1 > 0$. We use the 
trivial bound
\begin{equation*}
\im G_{ii}(E + \ii \eta) \;\leq\; \frac{y}{\eta} \im G_{ii}(E + \ii y) \for \eta \leq y\,,
\end{equation*}
as well as
\begin{equation*}
\absb{G_{ij}(E + \ii \eta)} \;\leq\; C \log N \max_k \im G_{kk}(E + \ii \eta)\,,
\end{equation*}
which follows by a simple dyadic decomposition; see \cite{EYY}, Equation (4.9). Thus we get
\begin{equation*}
\absb{G_{ij}(E + \ii \eta)} \;\leq\; C \log N \, \frac{y}{\eta} \max_k \im G_{kk}(E + \ii y) \;\leq\; \varphi^C 
\frac{y}{\eta} \;\leq\; \frac{\varphi^C}{N \eta}\,.
\end{equation*}
This completes the proof.
\end{proof}

The strategy behind the proof of Theorem \ref{t2} is very similar to that of Theorem \ref{t1}, given in Section 
\ref{section: proof}.
In a first step, we express the eigenvector components using integrals involving resolvent matrix elements $G_{ij}$; in 
a second step, we replace the sharp indicator functions in the integrand by smoothed out functions which depend only on 
the resolvent; in a third step, we use the Green function comparison method to complete the proof.

For ease of presentation, we shall give the proof for the case $\theta = \theta (N \bar u_{\alpha}(i) u_\alpha(j))$; we 
show that
\be\label{12sim}
 \lim_{N\to \infty} \big [ \, \E^{\f v} - \E^{{\f w}}  \,  \big ]  \theta (     N  \bar u_{\alpha } (i )  u_{\alpha } (j 
) ) \;=\; 0\,,
\ee
where $\rho N \leq \alpha \leq (1 - \rho) N$.
As outlined in Section \ref{sect:general edge}, the extension to general functions $\theta$, as given in \eqref{12}, is
an easy extension which we sketch briefly at the end of this section.

We now spell out the three steps mentioned above.

\smallskip
\noindent
{\bf Step 1.}
The analogue of Lemma \ref{CTG} in the bulk is the following result whose proof uses \eqref{lrb} and Lemma \ref{lemlb}, 
and is very similar to the proof of Lemma \ref{CTG} (in fact somewhat easier). We omit further details.

\begin{lemma}\label{CTGgen}
Under the assumption of Theorem \ref{t2}, for any $\epsilon > 0$
there exist constants $C_1$, $C_2$ such that  for $\eta = N^{-1 - \epsilon}$ we have
\be\label{114g}\;\;\;\; \lim_{N\to \infty }
 \max_{\rho N \leq \alpha \leq (1 - \rho) N} \max_{i,j}
\hbb{
\E^{\f u} \, \theta \left( N \bar u_\alpha(i) u_\alpha (j) \right)
 -
\E^{\f u} \,\theta \left[\frac {N}\pi \int_{I_\alpha} \wt G_{ij}(E + \ii \eta) \, \ind{\lambda_{\alpha - 1} \leq E^- \leq 
\lambda_\alpha} \, \dd E \right]} \;=\; 0\,,
\ee
where
\be\label{defEIg}
E^\pm \;\deq\; E\pm(\varphi_N)^{C_1}\eta,\;\;\;\;\;\;  I_\alpha \;\deq\; \qB{\gamma_\al -N^{-1} (\varphi_N)^{C_2} \,,\,  
\gamma_\al+N^{-1}  (\varphi_N)^{C_2}}
\ee
and we introduce the convention $\lambda_0 = - \infty$. Here $\f u$ stands for either $\f v$ or $\f w$.
\end{lemma}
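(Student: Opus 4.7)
The plan is to mirror the proof of Lemma \ref{CTG} with the edge scaling $N^{-2/3}$ replaced throughout by the bulk scaling $N^{-1}$, and with the level repulsion input \eqref{lre} replaced by its bulk counterpart \eqref{lrb}, which (by Proposition \ref{lemlb}) transfers from $H^{\f v}$ to $H^{\f w}$ under the four-moment matching of Theorem \ref{t2}. Fix $i,j$ and $\rho N \le \alpha \le (1-\rho) N$; all constants will be uniform in these choices. As in the edge case, start from
\begin{equation*}
\bar u_\alpha(i) u_\alpha(j) \;=\; \frac{\eta}{\pi}\int \frac{\bar u_\alpha(i) u_\alpha(j)}{(E-\lambda_\alpha)^2+\eta^2}\, \dd E,
\end{equation*}
and use complete delocalization (Theorem \ref{DTH}) to restrict the integration to $[a,b]$ with $a \le \lambda_\alpha^-$ and $b \ge \lambda_\alpha^+$ at the cost of an error $O\pb{(N\varphi^{C_1/2})^{-1}}$.

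I would then take $a = \min\{\lambda_\alpha^-,\lambda_{\alpha-1}^+\}$ and $b = \lambda_\alpha^+$, split the integral, and bound the ``overlap'' piece $[\lambda_\alpha^-,\lambda_{\alpha-1}^+]$ by $\varphi^{C_0}\,\ind{\lambda_{\alpha-1}^+ > \lambda_\alpha^-}$. The bulk level repulsion estimate \eqref{lrb}, partitioned as in the derivation of \eqref{lr1} over $O(\varphi^{C_1+1})$ subintervals of size $N^{-1-\alpha_0}$ within the rigidity window $I_\alpha$, shows that this indicator has expectation $\leq N^{-\delta}$, so it vanishes in the limit. Inserting the cutoff $\ind{\lambda_{\alpha-1}\le E^-\le \lambda_\alpha}$ is then automatic, and by rigidity (Theorem \ref{7.1}) the resulting integrand is supported in $I_\alpha$ up to a negligible probability.

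Next, I would replace the integrand by $\tfrac{1}{\pi}\wt G_{ij}(E+\ii\eta)$ via the decomposition \eqref{decomposition of tilde G} together with the mean value theorem, exactly as in \eqref{1119}. For the mean value theorem I need a uniform bound on the arguments of $\theta$, proved as in \eqref{estimate on tilde G}: delocalization gives $|\bar u_\beta(i) u_\beta(j)| \le \varphi^C/N$, and bulk rigidity yields $|\lambda_\beta - \gamma_\alpha| \gtrsim |\beta - \alpha|/N$ for $|\beta - \alpha| > \varphi^{C_3}$, producing the convergent tail bound
\begin{equation*}
\sum_{|\beta - \alpha| > \varphi^{C_3}} \int_{I_\alpha} \frac{\eta}{(E-\lambda_\beta)^2+\eta^2}\, \dd E \;\leq\; \sum_{|\beta - \alpha| > \varphi^{C_3}} \frac{\eta \abs{I_\alpha}}{(|\beta-\alpha|/N)^2} \;\leq\; N^{-\epsilon/2}.
\end{equation*}
The contributions of $\beta < \alpha$ and of $\alpha < \beta \le \varphi^{C_4}$ are handled as in \eqref{118}--\eqref{119}, the latter partitioning $I_\alpha$ according to whether $\abs{E - \lambda_\beta} \le \eta \varphi^{C_5}$ and using \eqref{lrb} at scale $\eta\varphi^{C_5} = N^{-1-\alpha}$ to bound the probability that a second eigenvalue lies within distance $\eta\varphi^{C_5}$ of $\lambda_\alpha$ by $N^{-\delta}$.

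The main qualitative difference from the edge case, and the only place one must be careful, is the tail sum over distant $\beta$: at the edge the bound $\gamma_\beta + 2 \gtrsim (\beta/N)^{2/3}$ improved the decay, whereas in the bulk the eigenvalue density is merely bounded below, so one relies purely on the $1/|\beta-\alpha|^2$ decay combined with the smallness $|I_\alpha| = 2\varphi^{C_2}/N$ to extract a factor $N^{-\epsilon/2}$. The constants $C_1, C_2, \ldots, C_5$ can be chosen in order, each depending only on the previous ones and independently of $\epsilon$ for $\epsilon$ sufficiently small, exactly as at the edge; no new ideas are required beyond these cosmetic modifications of the proof of Lemma \ref{CTG}.
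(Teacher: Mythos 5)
Your proposal is correct and follows exactly the route the paper intends: the paper omits the proof of Lemma \ref{CTGgen}, stating only that it ``uses \eqref{lrb} and Lemma \ref{lemlb}, and is very similar to the proof of Lemma \ref{CTG} (in fact somewhat easier),'' and your adaptation --- replacing the edge scaling by the bulk scaling, invoking Proposition \ref{lemlb} to transfer \eqref{lrb} to $H^{\f w}$, and reworking the tail sum over distant $\beta$ using $|E-\lambda_\beta|\gtrsim |\beta-\alpha|/N$ together with $|I_\alpha|=O(\varphi^{C_2}N^{-1})$ --- is precisely that argument. The only quibble is bookkeeping in the union bound for the gap event (covering the rigidity window of size $\varphi^{C_2}N^{-1}$ at scale $N^{-1-\alpha'}$ requires $O(\varphi^{C_2}N^{\alpha'})$ subintervals, not $O(\varphi^{C_1+1})$), but the resulting bound $O(\varphi^{C}N^{-\delta})=o(1)$ is unaffected.
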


\smallskip
\noindent
{\bf Step 2.}
We choose $\eta = N^{-1-\e}$ for some small enough $\e>0$ and express the indicator function in
\be \label{278}
\E^{\f u} \,\theta \left[\frac {N }\pi \int_{I} \wt G_{ij}(E + \ii \eta ) \, \ind{\lambda_{\alpha - 1} \leq E^- \leq 
\lambda_\alpha}\, \dd E \right]
\ee
using Green functions (as before, we write $I_\alpha \equiv I$).
Using Theorem \ref{7.1}, we know that
\be \label{416}
\eqref{278} \;=\;
\E^{\f u} \,\theta \left[\frac {N }\pi \int_{I} \wt G_{ij}(E + \ii \eta ) \, \indb{\mathcal N(E_L, E^-)=\al-1}\, \dd E 
\right] + o(1)\,,
\ee
where $E_L \deq -2-\varphi^CN^{-2/3}$.

As explained in Section \ref{sect: outline of proof}, the approach in Step 2 has to be modified slightly from the one 
employed in Section \ref{section: proof}. The reason is that the size of the interval $[E_L, E^-]$ is no longer small, 
but of order one.

For any $E_1, E_2 \in [-3,  3]$ and $\eta_d>0$ we define $f(\lambda) \equiv f_{E_1,E_2,\eta_d}(\lambda)$
to be the characteristic function of $[E_1, E_2]$ smoothed on scale $\eta_d$; i.e.\
$f = 1$ on $[E_1, E_2]$, $f = 0$ on $\R\setminus [E_1-\eta_d, E_2+\eta_d]$
and $|f'|\le C\eta_d^{-1}$, $|f''|\le C\eta_d^{-2}$. Let $q \equiv q_\alpha:\R \to\R_+$ be a smooth cutoff function 
concentrated around $\alpha - 1$, satisfying
\[
q(x) = 1 \quad  \text{if} \quad |x - \alpha + 1| \le 1/3,   \qquad q(x) = 0   \quad  \text{if} \quad |x - \alpha + 1| 
\ge 2/3\,.
\]
Now we choose $\eta_d \deq N^{-1-d\e}$, for some fixed $d>2$. Then, using Lemma \ref{Letasc} and an argument similar to 
the proof of Lemma \ref{GCC}, we find that
\be \eqref{416} \;=\; \E^{\f u} \,\theta \left[\frac {N  }\pi \int_{I} \wt G_{ij}(E + \ii \eta) \, q \left(\tr f_{E_L, 
E^-,   \eta_d}(H)\right)
\, \dd E \right]+o(1)\,.
\ee
To simplify notation, we follow the conventions of Section \ref{section: proof} in writing $I \equiv I_\alpha$, $q 
\equiv q_\alpha$ and $f_E \equiv f_{E_L, E^-, \eta_d}$, and set $\alpha = i$ and $\beta = j$. In this notation, we need 
to estimate
\be\label{418g}
\qb{\E^{\f v} - \E^{\f w}} \,\theta \left[\frac {N}\pi \int_{I} \wt G_{\al\beta}(E + \ii \eta) \, q \left(\tr 
f_E(H)\right)
\, \dd E \right]\,.
\ee
Now we express $f_E(H)$ in terms of Green functions using Helffer-Sj\"ostrand functional calculus. Let $\chi(y)$ be a 
smooth cutoff function with support in $[-1, 1]$, with $\chi(y)=1$ for
$|y| \leq 1/2$ and with bounded derivatives. Then we have (see e.g.\ Equation (B.12) of \cite{ERSY})
\be
f_E(\lambda) \;=\; \frac1{2\pi}\int_{\R^2}\frac{\ii \sigma f_E''(e)\chi (\sigma)+ \ii f_E(e) \chi'(\sigma)-\sigma 
f_E'(e)\chi'(\sigma)}{\lambda-e-\ii \sigma} \, \dd e \, \dd \sigma\,.
\ee
Thus we get
\begin{align}
\tr f_E(H) &\;=\; \frac{N}{2\pi}\int_{\R^2}\pB{\ii \sigma f_E''(e)\chi (\sigma)+ \ii f_E(e) \chi'(\sigma)-\sigma 
f_E'(e)\chi'(\sigma)} m(e + \ii \sigma) \, \dd e \, \dd \sigma
\notag \\
&\;=\; \frac{N}{2\pi}\int_{\R^2}\pB{\ii f_E(e) \chi'(\sigma)-\sigma f_E'(e)\chi'(\sigma)} m(e + \ii \sigma) \, \dd e \, 
\dd \sigma
\notag \\ \label{HS split 1}
&\qquad {}+{} \frac{\ii N}{2 \pi} \int_{\abs{\sigma} > \tilde \eta_d} \dd \sigma \, \chi(\sigma) \int \dd e \; f''_E(e) 
\, \sigma  m(e + \ii \sigma)
+
\frac{\ii N}{2 \pi} \int_{-\tilde \eta_d}^{\tilde \eta_d} \dd \sigma \int \dd e \; f''_E(e) \, \sigma  m(e + \ii \sigma)
\,,
\end{align}
where we introduced the parameter $\tilde \eta_d \deq N^{-1 - (d + 1)\epsilon}$. We shall treat the last term of 
\eqref{HS split 1} as an error term. From \eqref{res43} we find \hp
\begin{equation*}
\sigma  m(e + \ii \sigma) \;\leq\; \frac{\varphi^C}{N}\,.
\end{equation*}
Therefore the third term of \eqref{HS split 1} is bounded, \hp, by
\begin{equation} \label{small sigma error}
\absBB{\frac{\ii N}{\pi}\int_{-\tilde \eta_d}^{\tilde \eta_d} \dd \sigma \int \dd e \; f''_E(e) \, \sigma  m(e + \ii 
\sigma)} \;\leq\; \varphi^C \tilde \eta_d \eta_d^{-1} \;=\; \varphi^C N^{-\epsilon}\,,
\end{equation}
where in the first step we used that $\int |f''_E(e)| \, \dd e = O(\eta_d^{-1})$.

\smallskip
\noindent
{\bf Step 3.}
We estimate \eqref{418g} using a Green function comparison argument, similarly to Section \ref{pf of 33}.
As in Section \ref{pf of 33}, we use the notation
\be
x(E) \;=\; \frac {N\eta}\pi \sum_{k\neq \al, \beta}  G_{\al k}(E + \ii \eta)\overline{G_{\beta k}(E + \ii \eta)}\,.
\ee
Similarly to Lemma \ref{dropping diagonal terms}, we begin by dropping the diagonal terms. Using Lemma \ref{Letasc} we 
find
\be
\int_I \left|   \frac{N}{\pi} \wt G_{\al\beta}(E + \ii \eta) - x(E) \right|\, \dd E \;\leq\; \varphi^C N\eta^2 \;\leq\; 
N^{-1 +C\e}
\ee
\hp, so that it suffices to prove
\be
\qb{\E^{\f v}-\E^{\f w}} \,\theta \left[ \int_I x(E) \, q \left(\tr f_E(H)\right)
\, \dd E \right] \;=\; o(1)\,.
 \ee

Using \eqref{small sigma error} we find that it suffices to prove
\be \label{ee427}
\qb{\E^{\f v} - \E^{\f w}} \,\theta \left[  \int_{I} x(E) \, q \pb{y(E)+\wt y(E)}   \, \dd E  \right] \;=\; o(1)\,,
\ee
where
\begin{align}
\label{defy428}
y(E) &\;\deq\; \frac N{2\pi}\int_{\R^2}
\ii \sigma f_E''(e) \chi (\sigma)  \,  m(e + \ii \sigma) \,  {\bf 1} (\abs{\sigma} \geq \tilde \eta_d) \,\dd e \, \dd 
\sigma \,,
\\
\label{427y}
 \wt y(E) &\;\deq\;
 \frac N{2\pi}\int_{\R^2}
\Big( \ii f_E(e) \chi'(\sigma)- \sigma f_E'(e)\chi'(\sigma)\Big)m(e + \ii \sigma) \, \dd e \, \dd \sigma\,.
\end{align}
By a telescopic expansion similar to \eqref{tel}, we find that \eqref{ee427} follows if we can prove, \hp,
\begin{equation}\label{340sws}
\E \, \theta \left[  \int_{I }  x^S(E) \, q\pb{ (y+\wt y)^S(E)} \,\dd E\right]
- \E \, \theta \left[   \int_{I }  x^R(E) \, q\pb{ (y+\wt y)^R(E)} \, \dd E\right]
\;=\; A + o(N^{\ind{a = b} - 2})\,,
\end{equation}
where we use the notation of Section \ref{pf of 33}; here $A$ is a random variable that depends on the randomness only 
through $Q$ and the first four moments of $h_{ab}$ if $a \neq b$, and the first two moments of $h_{ab}$ if $a = b$. (As 
in Section \ref{pf of 33}, $\E$ denotes expectation with respect to the product measure of the $\f v$ and $\f w$ 
ensembles.)

Now we prove \eqref{340sws}. We use the resolvent expansion
\be\label{SR-Nf}
	 S =  R -  N^{-1/2} RVR+  N^{-1} (RV)^2R -  N^{-3/2} (RV)^3R+  N^{-2} (RV)^4R - N^{-5/2} (RV)^5 S.
\ee
Similarly to Section \ref{pf of 33}, we decompose
\begin{align*}
\Delta m \;\deq\; m^S - m^R \;=\; \Delta m_0 + \Delta m_1\,,
\end{align*}
where
\begin{align*}
\Delta m_r \;\deq\; \frac{1}{N} \sum_i (S_{ii} - R_{ii}) \, \indb{r = \ind{i \in \{a,b\}}}\,.
\end{align*}
Using \eqref{SR-Nf} we can expand $\Delta m_r$, for $\abs{\sigma} \geq \tilde \eta_d$ and \hp,
\be
\Delta m_r(e + \ii \sigma) \;=\; \sum_{p=1}^4 \Delta m_r^{(p)}(e + \ii \sigma) + O\pb{N^{-5/2 + C \epsilon} 
\Lambda_\sigma^{2-2r}N^{-r}}\,,
\ee
where
\be
\abs{\Delta m_r^{(p)}} \;\leq\; N^{-p/2 + C \epsilon} \Lambda_\sigma^{2-2r}N^{-r}
\ee
\hp,
and $\Delta m_r^{(p)}$ is a polynomial in the matrix elements of $R$ and $V$, each term containing precisely $p$ matrix 
elements of $V$; here we set $\Lambda_\sigma \deq \sup_{\abs{e} \leq 5} \max_{i \neq j} \abs{G_{ij}(e + \ii \sigma)}$.  
Putting both cases $r = 1,2$ together, we get, for $\abs{\sigma \geq \tilde \eta_d}$ and \hp,
\be\label{434}
\Delta m \;=\; \sum_{p=1}^4 \Delta m^{(p)}+O\pb{N^{-5/2 + C \epsilon} (\Lambda_\sigma^{2} + N^{-1})}
\,,\qquad \abs{\Delta m^{(p)}} \;\leq\; N^{-p/2 + C \epsilon} (\Lambda_\sigma^{2} + N^{-1})\,.
\ee

We may now estimate the variables $x, y$, and $\wt y$.
Let us first consider the variables $\wt y$. From the definition of $\chi$, we find that in the integrand of 
\eqref{427y} we have $\sigma \geq c$ and therefore by Theorem \ref{45-1} we have $\Lambda_\sigma \leq \varphi^C 
N^{-1/2}$ \hp.
Thus we get from \eqref{427y}
 \be\label{dwty}
\Delta \wt y(E) \;=\; \sum_{p=1}^4 \Delta \wt y^{(p)}(E) +O(N^{-5/2 + C \epsilon})\,, \qquad \abs{\Delta \wt y^{(p)}(E)} 
\;\leq\; N^{-p/2 + C \epsilon}
\ee
\hp.

In order to estimate the contributions of the variables $y$, we integrate by parts, first in $e$ and then in $\sigma$, 
to obtain
\begin{multline} \label{IBP for HS}
\absBB{N \int_{\R^2}
\sigma f_E''(e) \chi (\sigma) \, \Delta m^{(p)}(e + \ii \sigma) \,  {\bf 1} (\abs{\sigma} \geq \tilde \eta_d) \,\dd e \, 
\dd \sigma}
\\
\leq\; CN \absbb{ \int \dd e \, f_E'(e) \, \tilde \eta_d \, \Delta m^{(p)}(e + \ii \tilde \eta_d)}
+ CN \absbb{\int \dd e f'_E(e) \int_{\tilde \eta_d}^\infty \dd \sigma \, \chi'(\sigma) \sigma \, \Delta m^{(p)}(e + \ii 
\sigma)}
\\
+ CN \absbb{\int \dd e f'_E(e) \int_{\tilde \eta_d}^\infty \dd \sigma \, \chi(\sigma) \, \Delta m^{(p)}(e + \ii 
\sigma)}\,.
\end{multline}
Using \eqref{434}, it is easy to see that the sum of the two first terms of \eqref{IBP for HS} is bounded by $N^{-p/2 + 
C \epsilon}$. In order to estimate the last term of \eqref{IBP for HS}, we use \eqref{434} and \eqref{res43} to get the 
bound
\begin{equation*}
C N \int_{\tilde \eta_d}^1 \dd \sigma \pbb{\frac{1}{N \sigma} + \frac{1}{(N \sigma)^2} + \frac{1}{N}} N^{-p/2 + C 
\epsilon}
\;\leq\; N^{-p/2 + C \epsilon}\,.
\end{equation*}
Thus we find that
\be\label{dy}
\Delta  y(E) \;=\; \sum_{p=1}^4 \Delta y^{(p)}(E)+O(N^{-5/2+C\e})
\,,\qquad \abs{\Delta  y^{(p)}(E)} \;\leq\;  N^{-p/2 + C \epsilon}
\ee
\hp. 

Finally, as in \eqref{346a}, we find that
\be\label{346e}
\Delta x(E) \;=\;  \sum_{p=1}^{4}\Delta x ^{(p)}(E) + O(N^{-3/2+C\e})
\ee
\hp.
Moreover, we have the bound
\be\label{346af}
  \int_{I }   |\Delta x ^{(p)}(E)|
 \, \dd E \;\leq\; N^{ -p/2 +C\e}
 \ee
\hp.
This concludes our estimate of the terms in the resolvent expansion of $x$,$y$, and $\wt y$.

Now using the power counting bounds from \eqref{dwty}, \eqref{dy}, \eqref{346e}, and \eqref{346af}, we may easily 
complete the Green function comparison argument to prove \eqref{340sws}, as in Section \ref{pf of 33}.

Finally, we comment on how to deal with more general observables $\theta$, as in \eqref{12}. The basic strategy is the
same as in Section \ref{sect:general edge}. In fact, the argument is simpler because the errors made in replacing sharp
indicator functions with smooth indicator functions are easier to control in the bulk. In Section \ref{sect:general edge}, the
relatively large errors arising from the soft edge of the function $\f 1_{[E_L, E]} * \theta_{\tilde \eta}$ were
controlled by Lemma \ref{lem:21}. In the bulk, we replace $\f 1_{[E_L,E]}$ with the function $f_E$ whose edges are
sharper. Thus, in the bulk the error resulting from this replacement is bounded by $\cal N (E^- - \eta_d, E^- +
\eta_d)$, whose integral may be estimated exactly as in \eqref{edge error estimate}. The estimate \eqref{integral of der in
error} is replaced by the trivial estimate $\int_I \dd E \, \abs{\partial_E \tr f_E(H)} \leq \varphi^C$.

\thebibliography{hhhhh}


\bibitem{AGZ}  Anderson, G., Guionnet, A., Zeitouni, O.:  An Introduction
to Random Matrices. Studies in advanced mathematics, {\bf 118}, Cambridge
University Press, 2009.






\bibitem{BI} Bleher, P.,  Its, A.: Semiclassical asymptotics of 
orthogonal polynomials, Riemann-Hilbert problem, and universality
 in the matrix model. {\it Ann. of Math.} {\bf 150}, 185--266 (1999).


\bibitem{De1} Deift, P.: Orthogonal polynomials and
random matrices: a Riemann-Hilbert approach.
{\it Courant Lecture Notes in Mathematics} {\bf 3},
American Mathematical Society, Providence, RI, 1999.

\bibitem{De2} Deift, P., Gioev, D.: Random Matrix Theory: Invariant
Ensembles and Universality. {\it Courant Lecture Notes in Mathematics} {\bf 18},
American Mathematical Society, Providence, RI, 2009.

\bibitem{DKMVZ1} Deift, P., Kriecherbauer, T., McLaughlin, K.T-R,
 Venakides, S., Zhou, X.: Uniform asymptotics for polynomials 
orthogonal with respect to varying exponential weights and applications
 to universality questions in random matrix theory. 
{\it  Comm. Pure Appl. Math.} {\bf 52}, 1335--1425 (1999).

\bibitem{DKMVZ2} Deift, P., Kriecherbauer, T., McLaughlin, K.T-R,
 Venakides, S., Zhou, X.: Strong asymptotics of orthogonal polynomials 
with respect to exponential weights. 
{\it  Comm. Pure Appl. Math.} {\bf 52}, 1491--1552 (1999).


\bibitem{Dy} Dyson, F.J.: A Brownian-motion model for the eigenvalues
of a random matrix. {\it J. Math. Phys.} {\bf 3}, 1191--1198 (1962).

\bibitem{EKYY2} Erd{\H o}s, L., Knowles, A., Yau, H.-T., Yin, J.:
Spectral statistics of Erd{\H o}s-R\'enyi graphs II: eigenvalue spacing and the extreme eigenvalues. Preprint arXiv:1103.3869.

\bibitem{ESY1} Erd{\H o}s, L., Schlein, B., Yau, H.-T.:
Semicircle law on short scales and delocalization
of eigenvectors for Wigner random matrices.
{\it Ann. Probab.} {\bf 37}, no. 3, 815--852 (2009).

\bibitem{ESY2} Erd{\H o}s, L., Schlein, B., Yau, H.-T.:
Local semicircle law  and complete delocalization
for Wigner random matrices. {\it Commun.
Math. Phys.} {\bf 287}, 641--655 (2009).

\bibitem{ESY3} Erd{\H o}s, L., Schlein, B., Yau, H.-T.:
Wegner estimate and level repulsion for Wigner random matrices.
{\it Int. Math. Res. Notices.} {\bf 2010}, no. 3, 436--479 (2010).

\bibitem{ESY4} Erd{\H o}s, L., Schlein, B., Yau, H.-T.: Universality
of random matrices and local relaxation flow. To appear in \emph{Invent.\ Math.} Preprint arXiv:0907.5605.

\bibitem{ERSY}  Erd{\H o}s, L., Ramirez, J., Schlein, B., Yau, H.-T.:
Universality of sine-kernel for Wigner matrices with a small Gaussian
 perturbation. {\it Electr. J. Prob.} {\bf 15},  Paper 18, 526--604 (2010).


%

\bibitem{ESYY} Erd{\H o}s, L., Schlein, B., Yau, H.-T., Yin, J.:
The local relaxation flow approach to universality of the local
statistics for random matrices. To appear in \emph{\it Ann. Inst. H. Poincar\'e Probab. Statist}.
Preprint arXiv:0911.3687.

\bibitem{EYY} Erd{\H o}s, L.,  Yau, H.-T., Yin, J.: 
Bulk universality for generalized Wigner matrices. 
 Preprint arXiv:1001.3453.

\bibitem{EYY2}  Erd{\H o}s, L.,  Yau, H.-T., Yin, J.: 
Universality for generalized Wigner matrices with Bernoulli
distribution. To appear in \emph{J.\ Combinatorics}.
Preprint arXiv:1003.3813.

\bibitem{EYYrigi}  Erd{\H o}s, L.,  Yau, H.-T., Yin, J.: Rigidity of eigenvalues of generalized Wigner matrices. 
Preprint arXiv:1007.4652.



\bibitem{GEG1} Gustavsson, J.: Gaussian Fluctuations of Eigenvalues in the GUE, {\it Ann. Inst. H. Poincar\'e Probab.  
Statist}. {\bf 41} (2005), no. 2, 151-178

\bibitem{J} Johansson, K.: Universality of the local spacing
distribution in certain ensembles of Hermitian Wigner matrices.
{\it Comm. Math. Phys.} {\bf 215}, no. 3, 683--705 (2001).

\bibitem{J1} Johansson, K.: Universality for certain Hermitian Wigner
matrices under weak moment conditions. Preprint arXiv:0910.4467.


\bibitem{GEG2}  O'Rourke, S.: Gaussian fluctuations of eigenvalues in Wigner random matrices, {\it J. Stat. Phys.}, {\bf 
138}, no. 6, 1045--1066 (2009).

\bibitem{PS} Pastur, L., Shcherbina M.:
Bulk universality and related properties of Hermitian matrix models.
{\it J. Stat. Phys.} {\bf 130}, no. 2, 205--250 (2008).




\bibitem{SS} Sinai, Y. and Soshnikov, A.: 
A refinement of Wigner's semicircle law in a neighborhood of the spectrum edge.
{\it Functional Anal. and Appl.} {\bf 32}, no. 2, 114--131 (1998).

\bibitem{So1} Sodin, S.: The spectral edge of some random band matrices. Preprint
 arXiv:0906.4047.


\bibitem{Sosh} Soshnikov, A.: Universality at the edge of the spectrum in
Wigner random matrices. {\it  Comm. Math. Phys.} {\bf 207}, no. 3, 697--733 (1999).



\bibitem{TV} Tao, T. and Vu, V.: Random matrices: universality of the local eigenvalue statistics, to appear in {\it 
Acta Math.}, Preprint arXiv:0906.0510. 

\bibitem{TV2} Tao, T. and Vu, V.: Random matrices: universality of local eigenvalue statistics up to the edge.  
Preprint  arXiv:0908.1982.

\bibitem{TV3} Tao, T. and Vu, V.: Random matrices: universal properties of eigenvectors. Preprint arXiv:1103.2801.



\bibitem{TW}  C. Tracy, H. Widom, Level-spacing distributions and the Airy kernel. {\it Comm. Math. Phys.} {\bf 159}, 151--174 (1994).



\end{document}